\documentclass{lmcs} 
\pdfoutput=1

\usepackage[utf8]{inputenc}

\usepackage{lastpage}
\lmcsdoi{17}{3}{5}
\lmcsheading{}{\pageref{LastPage}}{}{}%
{Apr.~02,~2020}{Jul.~20,~2021}{}

\usepackage{booktabs}
\usepackage{hyperref}

\keywords{XPath, data graphs, axiomatizations, data trees, decidability}


\usepackage{
  rawfonts,
  url,
  tikz,
  amssymb,
  amsmath,
  tabularx,
  amsthm,
  xspace,
  colortbl, 
  stmaryrd, 
  bm,
  algorithm,
  algpseudocode
}

\usetikzlibrary{arrows,decorations,shapes,automata,positioning,decorations.pathmorphing}
\tikzset{
  every picture/.style = {
    thick,
    >=stealth',
    node distance = 1.5em and 3em,
  }
  ,
  cross line/.style = {
    preaction = {
      draw=white,
      -,
      line width=4pt
    }
  }
  ,
  state/.style = {
    rectangle,
    rounded corners = 5pt,
    font = \footnotesize,
    draw,
    minimum width = 1em,
    minimum height = 1em
  }
  , 
  label-state/.style = {
    sloped,
    font = \scriptsize,
    label distance = -2pt
  }
  , 
  label-edge/.style = {
    font = \scriptsize,    
    label distance = -2pt
  }
}
\allowdisplaybreaks 
\usepackage{color}

\usepackage[draft]{fixme}

\theoremstyle{plain}
\newtheorem{theorem}[thm]{Theorem}
\newtheorem{proposition}[thm]{Proposition}
\newtheorem{lemma}[thm]{Lemma}
\newtheorem{corollary}[thm]{Corollary}
\theoremstyle{definition}
\newtheorem{definition}[thm]{Definition}

\theoremstyle{remark}
 \newtheorem{example}[thm]{Example}

\usepackage[textwidth=0.89in,textsize=scriptsize]{todonotes}

\newcommand{\bisim}{\leftrightarroweq}

\newcommand{\filt}{\leftrightsquigarrow}

\newcommand{\fsigma}{\filt_\Sigma}

\newcommand{\NN}{\mathbb{N}}

\newcommand{\nat}{\NN}




\newcommand{\xml}{\text{XML}\xspace}

\newcommand{\nexptime}{\textsc{NExpTime}\xspace}

\newcommand{\pspace}{\textsc{PSpace}\xspace}
\newcommand{\expspace}{\textsc{ExpSpace}\xspace}
\newcommand{\nexpspace}{\textsc{NExpSpace}\xspace}

\newcommand{\midd}{\mathrel{\,\mid\,}}
\newcommand{\eqdef}                     
  {\stackrel{\scriptscriptstyle \mathrm{def}}{=}}

\newcommand{\md}{{\sf md}}







\newcommand{\tup}[1]{\langle #1 \rangle}
\newcommand{\set}[1]{\{#1\}}

\newcommand{\Coloneqq}{\mathrel{{\mathop:}{\mathop:}}=}


\newcommand{\xpath}{\text{XPath}\xspace}

\newcommand{\xpathd}{\ensuremath{\xpath_{=}}\xspace}

\newcommand{\hxpath}{\text{HXPath}\xspace}
\DeclareMathSymbol{\shortminus}{\mathbin}{AMSa}{"39}
\newcommand{\hxpv}{\hxpath_=(\shortminus)}
\newcommand{\hxpd}{\hxpath_=}
\newcommand{\hxps}{\hxpath_=(\shortminus,\dows)}

\newcommand{\hxprt}{\hxpath_=(*)}

\newcommand{\hxpe}{\hxpath_=(\subseteq)}

\newcommand{\hyb}{\mathcal{HL}\xspace}
\newcommand{\hl}{\hyb(@)}


\newcommand{\dowa}{\xspace\mathsf{a}\xspace}
\newcommand{\dowb}{\xspace\mathsf{b}\xspace}
\newcommand{\dowam}{\xspace\mathsf{a^{\shortminus}}\xspace}

\newcommand{\dows}{\xspace\mathsf{s}\xspace}

\newcommand{\sib}{\dows}

\newcommand{\OMIT}[1]{}

\newcommand{\mlabel}{\textit{Label}}
\newcommand{\madd}{\textit{add}}

\newcommand{\Zig}{\textbf{Zig}} 
\newcommand{\Zag}{\textbf{Zag}} 
\newcommand{\Nom}{\textbf{Nom}} 
\newcommand{\Harm}{\textbf{Harmony}} 


\newcommand{\lra}{\leftrightarrow}

\newcommand{\lab}{\mathsf{Prop}}
\renewcommand{\prop}{\lab}
\newcommand{\nom}{\mathsf{Nom}}

\newcommand{\mo}{\mathsf{Mod}}
\newcommand{\eq}{\mathsf{Eq}}


\newcommand{\no}{\mathit{nom}}
\newcommand{\lbl}{\mathit{V}}


\newcommand{\ahxpd}{\mathsf{HXP}}
\newcommand{\ah}{\mathsf{H}}

\renewcommand{\k}{K}

\newcommand{\fc}{\textsf{FC}}

\newcommand{\eqrel}{\sim}
\newcommand{\ra}{\rightarrow}

\renewcommand{\iff}{\xspace\mbox{iff}\xspace}

\newcommand{\Ra}{\Rightarrow}
\newcommand{\cls}[1]{[#1]}

\newcommand{\model}{\mathcal{M}}
\newcommand{\fr}{\mathcal{F}}

\newcommand{\ct}{\mathfrak{C}_{tree}}
\newcommand{\ctm}{\mathfrak{C}_{forest^{-}}}

\newcommand{\evnode}{\model_\Gamma,\Delta_i}
\newcommand{\evpath}{\model_\Gamma,\Delta_i,\Delta_j}

\newcommand{\uphr}{\upharpoonright}
\newcommand{\restmodel}[3]{#1_{\upharpoonright(#2,#3)}}
\newcommand{\formequiv}{\equiv}
\newcommand{\size}[1]{{\mid}#1{\mid}}

\newcommand{\Rho}{\mathrm{P}}

\newcommand{\st}{\mathsf{ST}}
\newcommand{\sub}{\mathsf{Sub}}

\newcommand{\itree}[3]{\restmodel{#1}{#2}{#3}^{IT}}
\newcommand{\ftree}[3]{\restmodel{#1}{#2}{#3}^{FT}}

\newcommand{\only}[2]{\mathsf{Only}^{#1}_{#2}}
\newcommand{\lin}{\mathsf{Lin}}


\usepgflibrary{arrows}
\usetikzlibrary{arrows,automata,shapes,decorations.markings,decorations.pathmorphing,
backgrounds,fit,calc,shadows.blur}                                                   

\tikzset{
    >=stealth',
    ar/.style={
           shorten <=2pt,
           shorten >=2pt,}
}

\tikzstyle{world} = [shape=circle,fill=white,inner sep=2pt,draw=black,blur shadow={shadow xshift=0ex,shadow yshift=0ex,shadow scale=1.1}]
\tikzstyle{blackworld} = [shape=circle,fill=black,inner sep=2pt,draw=black,blur shadow={shadow xshift=0ex,shadow yshift=0ex,shadow scale=1.1}]


\message{<Paul Taylor's Proof Trees, 2 August 1996>}

\newdimen\proofrulebreadth \proofrulebreadth=.05em
\newdimen\proofdotseparation \proofdotseparation=1.25ex
\newdimen\proofrulebaseline \proofrulebaseline=2ex
\newcount\proofdotnumber \proofdotnumber=3
\let\then\relax
\def\hfi{\hskip0pt plus.0001fil}
\mathchardef\squigto="3A3B
%
\newif\ifinsideprooftree\insideprooftreefalse
\newif\ifonleftofproofrule\onleftofproofrulefalse
\newif\ifproofdots\proofdotsfalse
\newif\ifdoubleproof\doubleprooffalse
\let\wereinproofbit\relax
%
\newdimen\shortenproofleft
\newdimen\shortenproofright
\newdimen\proofbelowshift
\newbox\proofabove
\newbox\proofbelow
\newbox\proofrulename
%
\def\shiftproofbelow{\let\next\relax\afterassignment\setshiftproofbelo 
w\dimen0 }
\def\shiftproofbelowneg{\def\next{\multiply\dimen0 by-1 }%
\afterassignment\setshiftproofbelow\dimen0 }
\def\setshiftproofbelow{\next\proofbelowshift=\dimen0 }
\def\setproofrulebreadth{\proofrulebreadth}

\def\prooftree{
%
\ifnum  \lastpenalty=1
\then   \unpenalty
\else   \onleftofproofrulefalse
\fi
%
\ifonleftofproofrule
\else   \ifinsideprooftree
         \then   \hskip.5em plus1fil
         \fi
\fi
%
\bgroup
\setbox\proofbelow=\hbox{}\setbox\proofrulename=\hbox{}%
\let\justifies\proofover\let\leadsto\proofoverdots\let\Justifies\proofoverdbl
\let\using\proofusing\let\[\prooftree
\ifinsideprooftree\let\]\endprooftree\fi
\proofdotsfalse\doubleprooffalse
\let\thickness\setproofrulebreadth
\let\shiftright\shiftproofbelow \let\shift\shiftproofbelow
\let\shiftleft\shiftproofbelowneg
\let\ifwasinsideprooftree\ifinsideprooftree
\insideprooftreetrue
%
\setbox\proofabove=\hbox\bgroup$\displaystyle 
\let\wereinproofbit\prooftree
%
\shortenproofleft=0pt \shortenproofright=0pt \proofbelowshift=0pt
%
\onleftofproofruletrue\penalty1
}

\def\eproofbit{
%
\ifx    \wereinproofbit\prooftree
\then   \ifcase \lastpenalty
         \then   \shortenproofright=0pt  
         \or     \unpenalty\hfil         
         \or     \unpenalty\unskip       
         \else   \shortenproofright=0pt  
         \fi
\fi
%
\global\dimen0=\shortenproofleft
\global\dimen1=\shortenproofright
\global\dimen2=\proofrulebreadth
\global\dimen3=\proofbelowshift
\global\dimen4=\proofdotseparation
\global\count255=\proofdotnumber
%
$\egroup  
%
\shortenproofleft=\dimen0
\shortenproofright=\dimen1
\proofrulebreadth=\dimen2
\proofbelowshift=\dimen3
\proofdotseparation=\dimen4
\proofdotnumber=\count255
}

\def\proofover{
\eproofbit 
\setbox\proofbelow=\hbox\bgroup 
\let\wereinproofbit\proofover
$\displaystyle
}%
%
\def\proofoverdbl{
\eproofbit 
\doubleprooftrue
\setbox\proofbelow=\hbox\bgroup 
\let\wereinproofbit\proofoverdbl
$\displaystyle
}%
%
\def\proofoverdots{
\eproofbit 
\proofdotstrue
\setbox\proofbelow=\hbox\bgroup 
\let\wereinproofbit\proofoverdots
$\displaystyle
}%
%
\def\proofusing{
\eproofbit 
\setbox\proofrulename=\hbox\bgroup 
\let\wereinproofbit\proofusing
\kern0.3em$
}

\def\endprooftree{
\eproofbit 
   \dimen5 =0pt
%
\dimen0=\wd\proofabove \advance\dimen0-\shortenproofleft
\advance\dimen0-\shortenproofright
%
\dimen1=.5\dimen0 \advance\dimen1-.5\wd\proofbelow
\dimen4=\dimen1
\advance\dimen1\proofbelowshift \advance\dimen4-\proofbelowshift
%
\ifdim  \dimen1<0pt
\then   \advance\shortenproofleft\dimen1
         \advance\dimen0-\dimen1
         \dimen1=0pt
         \ifdim  \shortenproofleft<0pt
         \then   \setbox\proofabove=\hbox{%
                         \kern-\shortenproofleft\unhbox\proofabove}%
                 \shortenproofleft=0pt
         \fi
\fi
%
\ifdim  \dimen4<0pt
\then   \advance\shortenproofright\dimen4
         \advance\dimen0-\dimen4
         \dimen4=0pt
\fi
%
\ifdim  \shortenproofright<\wd\proofrulename
\then   \shortenproofright=\wd\proofrulename
\fi
%
\dimen2=\shortenproofleft \advance\dimen2 by\dimen1
\dimen3=\shortenproofright\advance\dimen3 by\dimen4
%
\ifproofdots
\then
         \dimen6=\shortenproofleft \advance\dimen6 .5\dimen0
         \setbox1=\vbox to\proofdotseparation{\vss\hbox{$\cdot$}\vss}%
         \setbox0=\hbox{%
                 \advance\dimen6-.5\wd1
                 \kern\dimen6
                 $\vcenter to\proofdotnumber\proofdotseparation
                         {\leaders\box1\vfill}$%
                 \unhbox\proofrulename}%
\else   \dimen6=\fontdimen22\the\textfont2 
         \dimen7=\dimen6
         \advance\dimen6by.5\proofrulebreadth
         \advance\dimen7by-.5\proofrulebreadth
         \setbox0=\hbox{%
                 \kern\shortenproofleft
                 \ifdoubleproof
                 \then   \hbox to\dimen0{%
                         $\mathsurround0pt\mathord=\mkern-6mu%
                         \cleaders\hbox{$\mkern-2mu=\mkern-2mu$}\hfill
                         \mkern-6mu\mathord=$}%
                 \else   \vrule height\dimen6 depth-\dimen7 width\dimen0
                 \fi
                 \unhbox\proofrulename}%
         \ht0=\dimen6 \dp0=-\dimen7
\fi
%
\let\doll\relax
\ifwasinsideprooftree
\then   \let\VBOX\vbox
\else   \ifmmode\else$\let\doll=$\fi
         \let\VBOX\vcenter
\fi
\VBOX   {\baselineskip\proofrulebaseline \lineskip.2ex
         \expandafter\lineskiplimit\ifproofdots0ex\else-0.6ex\fi
         \hbox   spread\dimen5   {\hfi\unhbox\proofabove\hfi}%
         \hbox{\box0}%
         \hbox   {\kern\dimen2 \box\proofbelow}}\doll%
%
\global\dimen2=\dimen2
\global\dimen3=\dimen3
\egroup 
\ifonleftofproofrule
\then   \shortenproofleft=\dimen2
\fi
\shortenproofright=\dimen3
%
\onleftofproofrulefalse
\ifinsideprooftree
\then   \hskip.5em plus 1fil \penalty2
\fi
}



\begin{document}

\title[Axiomatizing Hybrid XPath with Data]{Axiomatizing Hybrid XPath with Data}

\author[C.~Areces]{Carlos Areces}	
\address{Universidad Nacional de C\'ordoba \& CONICET, Argentina}	

\author[R.~Fervari]{Raul Fervari}	
\email{carlos.areces@unc.edu.ar}  
\email{rfervari@unc.edu.ar}  





\begin{abstract}
  In this paper we introduce sound and strongly complete
  axiomatizations for XPath with data constraints extended with hybrid
  operators. First, we present $\hxpd$, a multi-modal version of XPath
  with data, extended with nominals and the hybrid operator $@$. Then,
  we introduce an axiomatic system for $\hxpd$, and we prove it is
  strongly complete with respect to the class of abstract data models,
  i.e., data models in which data values are abstracted as equivalence
  relations. We prove a general completeness result similar to the one
  presented in, e.g.,~\cite{BtC06}, that ensures that certain
  extensions of the axiomatic system we introduce are also complete.
  The axiomatic systems that can be obtained in this way cover a large
  family of hybrid XPath languages over different classes of frames,
  for which we present concrete examples. In addition, we investigate
  axiomatizations over the class of tree models, structures widely
  used in practice.  We show that a strongly complete, finitary, first-order 
  axiomatization of hybrid \xpath over trees does not exist,
  and we propose two alternatives to deal with this issue.  We finally
  introduce filtrations to investigate the status of decidability of
  the satisfiability problem for these languages.
\end{abstract}

\maketitle

\section{XPath as a Modal Logic with Data Tests}
\label{sec:intro}
XPath is, arguably, the most widely used query language for the
eXtensible Markup Language (XML).  Indeed, XPath is implemented in
XSLT~\cite{wad00} and XQuery~\cite{wwwc02} and it is used in many
specification and update languages (e.g., Saxon). It is,
fundamentally, a general purpose language for addressing, searching,
and matching pieces of an XML document.  It is an open standard and
constitutes a World Wide Web Consortium (W3C)
Recommendation~\cite{xpath:w3c}.  In~\cite{kost:xpat15} XPath is
adapted to be used
as a powerful query language over knowledge bases.
Core-XPath~\cite{GKP05} is the fragment of XPath 1.0 containing the
navigational behaviour of XPath.  It can express properties of the
underlying tree structure of the XML document, such as the label (tag
name) of a node, but it cannot express conditions on the actual data
contained in the attributes. In other words, it is essentially a
\emph{classical modal
  logic}~\cite{blackburn2001modal,HML06}. For instance the path
expressions \texttt{child, parent} and \texttt{descendant-or-self} are
basically the modal operators $\lozenge$, $\lozenge^-$ and
$\lozenge^*$ respectively.

Core-XPath has been well studied from a modal logic point of view.  For
instance, its satisfiability problem is known to be decidable even in
the presence of Document Type Definitions
(DTDs)~\cite{M04,BFG08}. Moreover, it is known that it is equivalent,
in terms of expressive power, to first-order logic with two variables
(FO2) over an appropriate signature on trees~\cite{MdR05}, and that it
is strictly less expressive than Propositional Dynamic Logic (PDL)
with converse over trees~\cite{BK08}. Sound and complete
axiomatizations for Core-XPath have been introduced
in~\cite{CateM09,cateLM10}. It has been argued that the study of \xpath
from a modal logic point of view is not merely a theoretical challenge
but it has also concrete applications.  For instance, by investigating
the \emph{expressive power} of \xpath fragments it is possible to
determine if an integrity constraint can be simplified.  Also, query
containment and query equivalence --two fundamental tasks in query
optimization-- can be reduced to the \emph{satisfiability problem}. In
particular these results have direct impact on, e.g., security~\cite{fan04},
type checking~\cite{MartensN07} and consistency of XML
specifications~\cite{Arenas2002}. 

However, from a database perspective, Core-XPath falls short 
to define the most important construct in a query language: the
\emph{join}. Without the ability to relate nodes based on the actual
data values of the attributes, the logic's expressive power is
inappropriate for many applications.  The extension of Core-XPath with
(in)equality tests between attributes of elements in an XML document
is named Core-Data-XPath in~\cite{BojanczykMSS09}. Here, we will call
this logic $\xpathd$. Models of $\xpathd$ are usually data trees which can be
seen as XML documents. A data tree is a tree whose nodes contain a
label from a finite alphabet and a data value from an infinite domain.
From a modal logic perspective, these data trees are a particular
class of \emph{relational models}. In recent years other data structures have
been considered, and $\xpathd$ has been used to query these
structures. In particular, in this article we
consider arbitrary \emph{data graphs} as models for $\xpathd$.  Data
graphs are the underlying mathematical structure in \emph{graph
  databases} (see, e.g.,~\cite{LV12,Robinson:2013,AnglesG08}) and it
is important to study the metalogical properties of languages to query
this particular kind of models (see,
e.g.,~\cite{LibkinMV16,AbriolaBFF18}).  In this respect, we focus on a
variant of $\xpathd$ which provides us with several interesting
expressivity features.


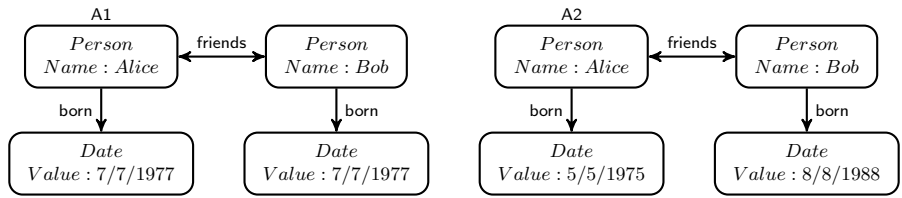
\begin{figure}[h]
  \centering
     \begin{tikzpicture}[->,scale=0.8, every node/.style={scale=0.8}]
     
      \node [state, label = {[label-state]above:$\mathsf{A1}$}] (w1) {$\begin{array}{c}Person \\  
      Name: Alice\end{array}$};      
      \node [state, right = of w1] (w2) {$\begin{array}{c}Person \\  
      Name: Bob\end{array}$};  
      
       \node [state,  right = of w2, label = {[label-state]above:$\mathsf{A2}$}] (w3) {$\begin{array}{c}Person \\  
      Name: Alice\end{array}$};      
      \node [state, right = of w3] (w4) {$\begin{array}{c}Person \\  
      Name: Bob\end{array}$};

      \node [state, below = of w1] (ww1) {$\begin{array}{c}Date\\  
      Value: 7/7/1977\end{array}$};      
      \node [state, below = of w2] (ww2) {$\begin{array}{c}Date\\  
      Value: 7/7/1977\end{array}$};      
      
      \node [state,  below = of w3] (ww3) {$\begin{array}{c}Date\\  
      Value: 5/5/1975\end{array}$};      
    
      \node [state,  below = of w4] (ww4) {$\begin{array}{c}Date\\  
      Value: 8/8/1988\end{array}$};  
      
      \path (w1) edge[<->] node [label-edge, above] {$\mathsf{friends}$} (w2);
      \path (w3) edge[<->] node [label-edge, above] {$\mathsf{friends}$} (w4);   
     
      \path (w1) edge node [label-edge, left] {$\mathsf{born}$} (ww1);
      \path (w2) edge node [label-edge, right] {$\mathsf{born}$} (ww2);

      \path (w3) edge node [label-edge, left] {$\mathsf{born}$} (ww3);
      \path (w4) edge node [label-edge, right] {$\mathsf{born}$} (ww4);

    \end{tikzpicture}
    
  \caption{Example of a graph-structured database.} 
  \label{fig:data-graph}
\end{figure}

Figure~\ref{fig:data-graph} shows an example of a data graph, where we
can see that edges carry labels from a finite alphabet (such as
``friends''), and whose nodes carry data values, usually presented as a
``\emph{key:\,value}'' pair, with \emph{key} being a label from a
finite alphabet (e.g., ``Name'') and \emph{value} being a data value
from an infinite domain (e.g., ``Alice'').

One of the characteristic advantages of graph databases is that they
directly exhibit the \emph{topology} of the data.  Most of the
approaches for querying these databases have focused on exploring the
topology of the underlying graph exclusively. However, little
attention has been paid to queries that check how the actual data
contained in the graph nodes relates with the topology.  $\xpathd$
allows comparison of data values by equality or inequality, even
though it does not grant access to the concrete data values themselves.

The main characteristic of $\xpathd$ is to allow formulas of the form
$\tup{\alpha=\beta}$ and $\tup{\alpha\neq\beta}$, where $\alpha,\beta$
are path expressions that navigate the graph using accessibility
relations: successor, predecessor, and for the particular case of tree-like 
models, 
descendant, next-sibling, etc.; and
can make tests in intermediate nodes.  In addition, in this work we
consider the novel `$@$' navigation operator that allows us to jump to named
nodes.  The formula $\tup{\alpha=\beta}$ (respectively
$\tup{\alpha\neq\beta}$) is true at a node $x$ of a data graph if
there are nodes $y, z$ that can be reached by paths satisfying the 
expressions $\alpha$ and $\beta$ respectively, and such that the data value of $y$ is
equal (respectively different) to the data value of $z$.

The benefits of having data comparisons is described in the following example.
Consider a social network which stores the information about the
friendship relation between persons, and also stores the date of birth
of each individual in the network. This scene can be represented as in
Figure~\ref{fig:data-graph}. Topological information is not sufficient
to distinguish the Alice on the left, {\sf A1} from the Alice on the
right, {\sf A2} (formally, these two nodes are bisimilar for the basic
modal logic). However, the expression `the person that is friend of
someone born the same day'' expressible in $\xpathd$, is true only at
{\sf A1}.

Recent articles investigate $\xpathd$ from a modal perspective. For
example, satisfiability and evaluation are discussed
in~\cite{FigPhD,FigueiraS11,Figueira12ACM}, while model theory and
expressivity are studied
in~\cite{ADF14,ICDT14,ICDT14Jair,ADF17IC,KR16,AbriolaBFF18}.  Query
power and complexity of several $\xpathd$ fragments are investigated
in~\cite{LibkinMV16,FigueiraD:2018}.  In~\cite{BaeldeLS15}, a Gentzen-style sequent
calculus is given for a very restricted fragment of $\xpathd$, named
DataGL. In DataGL, data comparisons are allowed only between the
evaluation point and its successors. An extension of the equational
axiomatic system from~\cite{cateLM10} is introduced
in~\cite{AbriolaDFF17}, allowing downward navigation and
equality/inequality tests.

In this article, we will focus on the proof theory of $\xpathd$
extended with hybrid operators.  It has been argued that the inclusion
of hybrid operators in a modal language leads to a better proof
theory~\cite{brau:hybr}.  In particular, it can lead to general
completeness results for axiomatizations by enabling Henkin style
completeness proofs~\cite{blackburn2001modal,BtC06}.  Moreover, we
will focus on strong completeness results (i.e., the axiomatizations
we introduce characterize the consequence relation
$\Gamma \vdash \varphi$ and not only theoremhood $\vdash \varphi$).

First results using hybridization
techniques were introduced in~\cite{ArecesF16} for a basic axiomatic
system for $\xpathd$. In~\cite{ArecesFS17} nominals are
used as labels in a tableaux system.

\paragraph{\bf Contribution.}
\label{subsec:contrib}
We will introduce a Hilbert-style axiomatization for $\xpathd$ with
forward navigation for multiple accessibility relations, multiple
equality/inequality comparisons, and where node expressions are
extended with nominals (special labels that are valid in only one
node), and path expressions are extended with the hybrid operator $@$
(allowing the navigation to some particular named node). We call this
logic {\em Hybrid XPath with Data} (denoted $\hxpd$). The language we consider
is a variant of \xpath fragments for data graphs in the literature.
It improves the expressive power of the fragment $\xpathd$
from~\cite{KR16,AbriolaBFF18}, e.g., by enabling us to express global
properties. On the other  hand, it has similarities with the
fragment GXPath$_{core}$(eq)
from~\cite{LibkinMV16}. GXPath$_{core}$(eq) allows us to express data
properties of nodes that are reachable in a finite number of steps,
due to the reflexive-transitive navigation on atomic steps. However,
the hybrid operator $@$ in $\hxpd$ is enough to capture some of its
expressivity by allowing us to `jump' to nodes that are not accessible
in one step, or that are simply not connected with the current
node. One of the advantages of our approach, is that hybrid operators
add only first-order behaviour, whereas reflexive-transitive closure
needs second-order expressivity.

We will take advantage of hybrid operators to prove completeness using
a Henkin-style model construction as presented
in~\cite{Gold1984,blackburn2001modal,BtC06,SchroderP10}. In fact, the
use of the so-called unorthodox inference rules (i.e., inference
rules involving side conditions) will help us define an axiom system
whose completeness can be automatically extended to more expressive
logics.

\medskip Summing up, our contributions in this article are the
following:
\begin{itemize}
\item We define a sound and strongly complete axiomatic system for
  multi-modal $\hxpd$, refining the axioms and rules first presented
  in~\cite{ArecesF16}.

\item We give a Henkin-style completeness proof, proving that the system is
  strongly complete for any extension given by pure axioms (formulas
  involving only nominals) and existential rules (instantiations of
  first-order formulas with quantification pattern $\forall\exists$
  followed by an $\hxpd$-formula). This gives us a strong completeness
  result with respect to a wide variety of frame classes.

\item Even though the main focus of our work is axiomatizing node
  expressions, we also provide as a by-product strong completeness results for
  path expressions.

\item We discuss concrete extensions of $\hxpd$ in which completeness
  is automatically obtained. In particular we explore backward and
  sibling navigation, and equality inclusion.

\item We show that a strongly complete, finitary, first-order
  axiomatization of hybrid XPath over trees does not exist. Then we
  introduce an infinite pure axiomatic system that is strongly complete
  for a slightly larger class of tree models. We also prove that this 
  system is weakly complete over tree models. As a corollary of the 
  method we use, we conclude that the satisfiability problem 
  for $\hxpd$ on trees is decidable in \expspace.

\item Finally, we prove that the satisfiability problem for the
  multi-modal logic $\hxpd$ and some of its pure extensions is
  decidable in \nexptime, using filtrations to establish a bounded
  finite model property. To our knowledge, the best known lower bound 
  can be obtained by \pspace-hardness of mono-modal hybrid \xpath with forward
  navigation~\cite{ArecesFS17}.
\end{itemize}

\paragraph{\bf Related Work.}
\label{subsec:related}
It is argued in~\cite{BtC06} that hybrid logic is a natural setting
for deductive systems involving Kripke semantics.  By having nominals
in the language, it is possible to imitate the role of first-order
constants to obtain a Henkin-style completeness proof. Moreover,
hybrid operators provide all the necessary machinery to get general
completeness results for a wide variety of systems.  The claim is that
allowing the use of \emph{unorthodox rules}, we can define a basic
axiomatic system that can be extended with \emph{pure
  axioms} and \emph{existential rules}, which lets us obtain
complete extensions with respect to diverse classes of frames, automatically. 

One of the first proof systems for data-aware fragments of \xpath was
introduced in~\cite{BaeldeLS15}. It is a Gentzen-style system for a
very restricted language named \emph{DataGL}, interpreted on finite data trees.
In \emph{DataGL} formulas of the form $\Diamond_=\varphi$ are read as: 
\emph{``the current node
  has the same data as a descendant where $\varphi$ holds''}.  They
introduced a sound and weakly complete sequent calculus for \emph{DataGL} and
established \pspace-completeness for its validity problem.  

An equational system which extends the one presented for navigational
fragments of \xpath in~\cite{cateLM10} was introduced
in~\cite{AbriolaDFF17}.  The authors consider first the fragment of
$\xpathd$ without inequality tests, over data trees.  
Therein, the proof of weak completeness relies on a
normal form theorem for both node and path expressions, and on a
canonical model construction for consistent formulas in normal form
inspired by~\cite{fine75}. When inequality is also taken into account,
similar ideas can be used but the canonical model construction and the
corresponding completeness proof become quite complex.

A Hilbert-style axiomatization for $\xpathd$ extended with hybrid
operators and backward navigation was first introduced in~\cite{ArecesF16},
using ideas from hybrid logic: nominals play the role of first-order
constants in a Henkin-style completeness proof. The system we present
herein is a vast extension of the work from~\cite{ArecesF16}, and it
automatically encompasses a large family of logics.

In~\cite{ArecesFS17} nominals are used as labels in a tableaux system for 
a restricted variant of $\hxpd$ with only one accessibility relation and one 
equality relation. The tableaux calculus
is shown to be sound, complete and terminating, and it is used to
prove that the satisfiability problem for such logic is \pspace-complete.

\paragraph{\bf Organization.}
\label{subsec:org}
The article is organized as follows. 
In Section~\ref{sec:prelim} we introduce the syntax, semantics
and a notion of bisimulation for $\hxpd$.  We define the axiomatic system $\ahxpd$ in
Section~\ref{sec:axiom} and we prove its completeness. We finish that section by
introducing extensions of $\ahxpd$ and their complete axiomatizations
with pure axioms and existential rules. In Section~\ref{sec:tree} we introduce an axiomatic 
system for \xpath over the class of data trees.
Then, we use filtrations to show
that the satisfiability problem of $\hxpd$ is decidable in
Section~\ref{sec:filt}. To
conclude, in Section~\ref{sec:final} we discuss the results and
introduce future lines of research.



\section{Hybrid XPath with Data}
\label{sec:prelim}
In this section we introduce the syntax and semantics for the logic we
call Hybrid XPath with Data ($\hxpd$ for short).  
Then, we present a notion of bisimulation for $\hxpd$.

\subsection{Syntax and Semantics of $\hxpd$}
\label{subsec:synsem}

Thoughout the text, let $\lab$ be a countable set of propositional
symbols; let $\nom$ be a countable set of nominals such that 
$\lab \cap \nom = \emptyset$; and let $\mo$ and $\eq$ be sets 
of modal and equality symbols, respectively.

\begin{definition}[Syntax]\label{def:hxpsyn}
  The sets $\mathsf{PExp}$ of path expressions (which we will note as
  $\alpha$, $\beta$, $\gamma$, $\ldots$) and $\mathsf{NExp}$ of node
  expressions (which we will note as $\varphi$, $\psi$, $\theta$,
  $\ldots$) of $\hxpd$ are defined by mutual recursion as follows:
\begin{align*}
  \mathsf{PExp} \; & \Coloneqq \; {\dowa}  \midd @_i \midd [\varphi]  \midd
  \alpha\beta    \\
  \mathsf{NExp} \; &\Coloneqq \; p \midd i \midd \neg \varphi  \midd \varphi \wedge \psi 
  \midd  \tup{\alpha=_e \beta} \midd \tup{\alpha \neq_e \beta}, 
\end{align*}
\noindent where $p \in \lab$, $i \in \nom$, $\dowa \in \mo$, $e\in\eq$,
$\varphi,\psi \in \mathsf{NExp}$, and $\alpha,\beta\in\mathsf{PExp}$.
We will refer to members of $\mathsf{PExp} \cup \mathsf{NExp}$ as
$\emph{expressions}$. In what follows, when referring to expressions
of $\hxpd$ we will reserve the term \emph{formula} for members of
$\mathsf{NExp}$.   We say that an expression is \emph{pure} if it 
does not contain propositional symbols. 
\end{definition}

Notice that path expressions occur in node expressions in \emph{data
comparisons} of the form $\tup{\alpha =_e \beta}$ and
$\tup{\alpha \neq_e \beta}$, while node expressions occur in path
expressions in \emph{tests} like $[\varphi]$.

In what follows we will always use $*$ for $=_e$ and $\not =_e$, when
the particular operator used is not relevant. Other Boolean operators
are defined as usual: $\varphi\vee\psi:=\neg(\neg\varphi\wedge\neg\psi)$, and 
$\varphi\ra\psi:=\neg\varphi\vee\psi$. Below we define other operators as
abbreviations.

\begin{definition}[Abbreviations]
  Let $\alpha,\beta,\delta$ be path expressions, $\gamma_1, \gamma_2$
  path expressions or the empty string, $\varphi$ a node expression,
  $i$ a nominal, and $p$ an arbitrary symbol in $\lab$.  We define the
  following expressions:
\begin{center}
\begin{tabular}{rcl}
  \toprule
$\top$ & $:=$ & $p\vee\neg p$ \\

$\bot$ & $:=$ & $\neg\top$ \\
$\epsilon$ & $:=$ & $[\top]$ \\ 
 $[\alpha=_e\beta]$ & $:=$ & $\neg\tup{\alpha\neq_e\beta}$\\

 $[\alpha\neq_e\beta]$ & $:=$ & $\neg\tup{\alpha=_e\beta}$ \\
~~$\tup{\alpha}\varphi$ & $:=$ & $\tup{\alpha[\varphi] =_e \alpha[\varphi]}$ 
\\
$[\alpha]\varphi$ & $:=$ & $\neg\tup{\alpha}\neg\varphi$~~ \\
$@_i \varphi$ & $:=$ &$\tup{@_i}\varphi$ \\ 
$\tup{\gamma_1(\alpha\cup\beta)\gamma_2*\delta}$ & $:=$ & $\tup{\gamma_1\alpha\gamma_2*\delta}\vee\tup{\gamma_1\beta\gamma_2*\delta}$\\
$\tup{\delta*\gamma_1(\alpha\cup\beta)\gamma_2}$ & $:=$ & 
$\tup{\delta*\gamma_1\alpha\gamma_2}\vee\tup{\delta*\gamma_1\beta\gamma_2}$ \\
\bottomrule
\end{tabular}
\end{center}

\end{definition}

As a corollary of the definition of the semantic relation below, the
diamond and box expressions $\tup{\alpha}\varphi$ and
$[\alpha]\varphi$ will have their classical meaning, and the same will
be true for hybrid formulas of the form $@_i\varphi$.  Notice
that we use $@_i$ both as a path expression and as a modality; the
intended meaning will always be clear in context. Notice also that,
following the standard notation in XPath logics and in modal logics,
the $[\ ]$ operation is overloaded: for $\varphi$ a node expression
and $\alpha$ a path expression, both $[\alpha]\varphi$ and
$\alpha[\varphi]$ are well-formed expressions; the former is a node
expression where $[\alpha]$ is a box modality, the latter is a path
expression where $[\varphi]$ is a test.

We now define the structures that will be used to evaluate
formulas in the language. 

\begin{definition}[Hybrid Data Models]\label{def:models}
  An \emph{abstract hybrid data model} is a tuple $\model=\tup{M,$
    $\{\eqrel_e\}_{e\in\eq},\{R_{\dowa}\}_{\dowa\in\mo},\lbl,\no}$,
  where $M$ is a non-empty set of elements; for each $e\in\eq$,
  ${\eqrel_e} \subseteq M \times M$ is an equivalence relation between
  elements of $M$; for each $\dowa\in\mo$,
  ${R_{\dowa}} \subseteq M \times M$ is the associated accessibility
  relation; $\lbl: M \ra 2^\lab$ is the valuation function; and
  $\no: \nom \ra M$ is a function that assigns nominals to certain
  elements. Given $m\in M$, $(\model,m)$ is called a {\em pointed 
  model} (parentheses are usually dropped).

  An {\em abstract data frame} (or just \emph{frame}) is a tuple
  $\fr=\tup{M,\set{\eqrel_e}_{e\in\eq},\set{R_{\dowa}}_{\dowa\in\mo}}$.  Let
  $\mathfrak{F}$ be a class of frames, its \emph{class of
    models} is
  $\mathsf{Mod}(\mathfrak{F}) =
  \{\tup{M,\set{\eqrel_e}_{e\in\eq},\set{R_{\dowa}}_{\dowa\in\mo},\lbl,\no}
  \mid \tup{M,\{\eqrel_e\}_{e\in\eq},\{R_{\dowa}\}_{\dowa\in\mo},} \in
  \mathfrak{F}, \text{ for $\lbl,\no$ arbitrary}\}$.

  A {\em concrete hybrid data model} is a tuple
  $\model=\tup{M,D, \{R_{\dowa}\}_{\dowa\in\mo},\lbl,\no,data}$, where $M$ is a
  non-empty set of elements; $D$ is a non-empty set of data; for each
  $\dowa\in\mo$, ${R_{\dowa}} \subseteq M \times M$ is the associated
  accessibility relation; $\lbl: M \ra 2^\lab$ is the valuation
  function; $\no: \nom \ra M$ is a function which names some nodes; and
  $data: \eq \times M \ra D$ is a function which assigns a data
  value to each node of the model (for each type of equality
  considered in the model).
\end{definition}

Concrete data models are most commonly used in applications, where we
encounter data from an infinite alphabet (e.g., alphabetic strings)
associated to the nodes in a semi-structured database, and different
ways of comparing this data.  It is easy to see that to each concrete
data model we can associate an abstract data model where
data is replaced by an equivalence relation that links all nodes with
the same data, for each equality symbol from $\eq$. 
Vice-versa, each abstract data model can be
``concretized'' by assigning to each pair node and symbol from $\eq$, 
its equivalence data class as data.  We will prove soundness and 
completeness over the class of abstract data models and, as a corollary, 
obtain completeness over concrete data models.

\begin{definition}[Semantics]
Let $\model=\tup{M,\{\eqrel_e\}_{e\in\eq}, \{R_{\dowa}\}_{\dowa\in\mo},\lbl,\no}$ be an abstract data model, and $m,n\in M$.
We define the semantics of $\hxpd$ as follows:
$$
\begin{array}{r@{ \ \ \iff \ \ }l}
\model,m,n\models \dowa & m R_{\dowa} n \\
\model,m,n\models @_i & \no(i)=n \\
\model,m,n\models [\varphi] & m=n \mbox{ and } \model,m\models \varphi \\
\model,m,n\models \alpha\beta & \mbox{there is $l\in M$ s.t. } \model,m,l\models\alpha \mbox{ and } \model,l,n\models\beta \\
\model,m\models p & p\in\lbl(m) \\
\model,m\models i & \no(i)=m \\
\model,m\models \neg\varphi & \model,m\not\models\varphi \\
\model,m\models \varphi\wedge\psi & \model,m\models\varphi \mbox{ and } \model,m\models\psi\\
\model,m\models\tup{\alpha=_e\beta} & \mbox{there are $n,l\in M$ s.t. } \model,m,n\models\alpha,~\model,m,l\models\beta \mbox{ and } n\eqrel_e l\\
\model,m\models\tup{\alpha\neq_e\beta} & \mbox{there are $n,l\in M$ s.t. } \model,m,n\models\alpha,~\model,m,l\models\beta \mbox{ and } n\not\eqrel_e l.
\end{array}
$$
We say a node expression $\varphi$ (path expression $\alpha$) is
\emph{satisfiable} if there exists $\model,m$ ($\model,m,n$) such that
$\model,m\models\varphi$ ($\model,m,n\models \alpha$).  Satisfiability
for sets of node and path expressions is defined in the obvious way.
Let $\Gamma \cup \set{\varphi}$ be a set of node expressions,
$\Gamma\models\varphi$ if for all $\model,m$, 
$\model,m\models\Gamma$ implies $\model,m\models\varphi$.  For a class
$\mathfrak{C}$ of models, $\Gamma \models_{\mathfrak{C}} \varphi$ if
for all $\model \in \mathfrak{C}$ and $m$ in the domain of $\model$,
$\model,m\models\Gamma$ implies $\model,m\models\varphi$. 
When $\Gamma=\emptyset$ we write $\models\varphi$ (respectively 
$\models_{\mathfrak{C}} \varphi$).
\end{definition}

As mentioned, it is a straightforward exercise to show that modal and
hybrid operators have their intended meaning.

\begin{proposition}\label{coro:hybridclassic}
Let $\model$ be an abstract hybrid data model and $m$ a state in $\model$. Then
$$
\begin{array}{r@{\,}l@{~~\iff~~}l}
	\model,m&\models @_i\varphi & \model,\no(i)\models\varphi	\\
	\model,m&\models\tup{\dowa}\varphi  & \mbox{for some $n\in M$ s.t. $m R_{\dowa} n$, } \model,n\models\varphi \\
	\model,m&\models[\dowa]\varphi  & \mbox{for all $n\in M$, $m R_{\dowa} n$ implies } \model,n\models\varphi.
\end{array}
$$	
\end{proposition}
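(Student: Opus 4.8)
The plan is to unfold the three abbreviations $@_i\varphi$, $\tup{\dowa}\varphi$ and $[\dowa]\varphi$ introduced above and read off each equivalence directly from the semantic clauses; the single ingredient beyond routine bookkeeping is the reflexivity of each equivalence relation $\eqrel_e$.

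First I would record the routine observation that, for any abstract hybrid data model $\model$ and any $m,n \in M$, the clauses for composition and for tests give $\model,m,n \models \alpha[\varphi]$ if and only if $\model,m,n \models \alpha$ and $\model,n \models \varphi$. Specialising $\alpha$ to $\dowa$ yields $\model,m,n \models \dowa[\varphi]$ iff $m R_{\dowa} n$ and $\model,n \models \varphi$; specialising $\alpha$ to $@_i$ yields $\model,m,n \models @_i[\varphi]$ iff $n = \no(i)$ and $\model,\no(i) \models \varphi$, so in particular the only possible value of $n$ is $\no(i)$.

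Next, since $\tup{\dowa}\varphi$ abbreviates $\tup{\dowa[\varphi] =_e \dowa[\varphi]}$, the clause for $\tup{\cdot =_e \cdot}$ gives that $\model,m \models \tup{\dowa}\varphi$ iff there are $n,l \in M$ with $\model,m,n \models \dowa[\varphi]$, $\model,m,l \models \dowa[\varphi]$ and $n \eqrel_e l$. For the forward direction the witness $n$ already has $m R_{\dowa} n$ and $\model,n \models \varphi$ by the observation above; for the converse, given $n$ with $m R_{\dowa} n$ and $\model,n \models \varphi$, take $l := n$ and use $n \eqrel_e n$. This proves the claim for $\tup{\dowa}\varphi$, and the claim for $@_i\varphi$ is identical modulo replacing $\dowa$ by $@_i$: here both of the required witnesses are forced to equal $\no(i)$, so reflexivity again supplies $n \eqrel_e l$; note also that the argument is insensitive to which $e \in \eq$ appears in the abbreviation, precisely because every $\eqrel_e$ is reflexive. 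Finally the claim for $[\dowa]\varphi$ follows by unfolding $[\dowa]\varphi = \neg\tup{\dowa}\neg\varphi$, applying the previous case to $\tup{\dowa}\neg\varphi$, and pushing the outer negation through the existential quantifier over $\dowa$-successors.

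I do not expect any genuine obstacle: the statement is essentially a sanity check that the chosen abbreviations behave classically, and the proof is a short unfolding with no induction. The only point deserving an explicit word is the reflexivity remark just made, which is exactly what lets the two-sided data comparison $\tup{\alpha[\varphi] =_e \alpha[\varphi]}$ collapse to the plain statement ``there is an $\alpha$-successor satisfying $\varphi$''.
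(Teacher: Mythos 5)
Your proof is correct and is exactly the unfolding-of-abbreviations argument the paper has in mind (the paper omits the proof, calling it a straightforward exercise); your explicit appeal to reflexivity of each $\eqrel_e$ to collapse $\tup{\alpha[\varphi]=_e\alpha[\varphi]}$ is precisely the one non-trivial point.
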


The addition of the hybrid operators to XPath increases its expressive
power.  The following examples should serve as illustration.

\begin{example}
We list below  some $\hxpd$ expressions together with their intuitive
meaning:
\medskip

\noindent
\begin{tabular}{l@{\ \ \ }l}
  $\alpha[i]$ & There exists an $\alpha$ path between the current points of evaluation; the \\
              & second node is named $i$.\\
  $@_i\alpha$ & There exists an $\alpha$ path between the 
                node named $i$ and some other node.\\
  $\tup{@_i =_e @_j}$ & The node named $i$ has the same data as the node
                        named $j$, w.r.t.  the \\ 
                        & data field $e$.\\
  $\tup{\alpha =_e @_i\beta}$ & There exists a node accessible from the
                                current point of evaluation by an \\
              &  $\alpha$ path that has the same data
                than a node accessible from the point \\
              &  named $i$ by a $\beta$ path, w.r.t. the data field $e$.\\
\end{tabular}
\end{example}

The next example illustrates the expressivity gained by adding hybrid operators 
into the language, in a concrete example.

\begin{example}
 Consider the following queries in the data graph from Figure \ref{fig:data-graph}
 (we omit the subdindex in the symbols $=$ and $\neq$, since we deal with only one
 equality relation). 
  \medskip


\noindent
  \begin{tabular}{@{\ \ }l@{\ \ \ \ }l}
   $\tup{@_{\sf A1} {\sf born}[Value] = @_{\sf A1} {\sf friends} \  {\sf born} [Value] }$ & The person with the id  {\sf A1} is friend of   \\ 
   & someone born in the same day. \\[.5em]
   $[@_{\sf A2} {\sf born}[Value] \neq @_{\sf A2} {\sf friends} \  {\sf born}[Value] ]$ & All friends of the person with id {\sf A2}   \\ 
   & were born in a day different to hers. \\[.5em]
   $\tup{@_{\sf A1}[Name] = @_{\sf A2} [Name]}$ & The persons with id {\sf A1} and {\sf A2} have 
   \\ $ \ \ \ \ \wedge \tup{@_{\sf A1} {\sf born}[Value] \neq @_{\sf A2}  {\sf born} [Value] }$
   & the same value (both are named Alice) \\ 
   & and they were born in different days.
   \end{tabular}

  \medskip
  In the last item, the conjunct $\tup{@_{\sf A1}[Name] = @_{\sf A2} [Name]}$
  states that the nodes named by the nominals {\sf A1} and {\sf A2} contain a
  data value corresponding to the key {\em Name}, expressed by the test $[Name]$; 
  also, these data values coincide, without referring to the actual value.
  
  It is worth noting that with classical $\xpathd$ navigation, the two
  first properties can be expressed only if we are evaluating the
  formulas in the node with id {\sf A1} and {\sf A2},
  respectively. This is due the fact that $\xpathd$ only allows local
  navigation. Moreover, the third formula is not expressible in
  $\xpathd$, even in the presence of transitive closure operators, as
  it involves unconnected components of the graph.
\end{example}

Before concluding this section, we will introduce two classical notions that 
will be useful in the rest of the paper: the \emph{modal depth} and the set of \emph{subformulas} of a formula. These definitions are given on node and path expressions
seen as strings, in particular they consider the empty string $\lambda$ which is 
not in the language. However, these definitions work as intended.

\begin{definition}\label{def:modaldepth}
  We define the \emph{modal depth} of an expression, by mutual recursion as follows.
  $$
  \begin{array}{l@{~ = ~}l@{\qquad}l@{~ = ~}l}
 \md(\lambda) & 0 & \md(p) & 0  \\
 \md(\dowa\alpha) & 1 + \md(\alpha) & \md(\neg\varphi) & \md(\varphi) \\
 \md([\varphi]\alpha) & \max(\md(\varphi),\md(\alpha)) & \md(\varphi\wedge\psi) & \max(\md(\varphi),\md(\psi)) \\ 
 \md(@_i\alpha) & \md(\alpha) &  \md(\tup{\alpha*\beta}) & \max(\md(\alpha),\md(\beta)),
  \end{array}
  $$
  \noindent where $\lambda$ represents the empty string, $\dowa\in\mo$, $p\in\prop\cup\nom$, $i\in\nom$, $*\in\{=_e,\neq_e\}$, $\varphi,\psi\in\mathsf{NExp}$ and $\alpha,\beta\in\mathsf{PExp}$.
 \end{definition}

\begin{definition}\label{def:subformula}
  We define the set of \emph{subformulas} of an expression, by mutual recursion as follows.
  $$
  \begin{array}{@{~}l@{\,=\,}ll@{\,=\,}l@{~}}
  \sub(\lambda) & \emptyset & \sub(p) & \set{p}  \\
  \sub(\dowa\alpha) & \set{\tup{\dowa\alpha}\top}\cup\sub(\alpha) & \sub(\neg\varphi) & \set{\neg\varphi}\cup\sub(\varphi) \\
  \sub([\varphi]\alpha) & \set{\tup{[\varphi]\alpha}\top} \cup \sub(\varphi)\cup\sub(\alpha) & \sub(\varphi\wedge\psi) & \set{\varphi\wedge\psi}\cup \sub(\varphi) \cup \sub(\psi) \\ 
  \sub(@_i\alpha) & \set{\tup{@_i\alpha}\top,i}\cup\sub(\alpha) &  \sub(\tup{\alpha*\beta}) & \set{\tup{\alpha*\beta}}\cup\sub(\alpha)\cup\sub(\beta),
  \end{array}
  $$
  \noindent where $\lambda$ represents the empty string, $\dowa\in\mo$,  $p\in\prop\cup\nom$, $i\in\nom$, $*\in\{=_e,\neq_e\}$, $\varphi,\psi\in\mathsf{NExp}$ and $\alpha,\beta\in\mathsf{PExp}$.
  \end{definition}

\subsection{A Note on Bisimulations}

In the rest of the paper we will sometimes need the notion of
bisimulation for $\hxpd$. Bisimulation for \xpath on data trees was
investigated in~\cite{ICDT14,ICDT14Jair}. In~\cite{KR16,AbriolaBFF18}
it is shown that the same notion can be used in arbitrary models. It is simple to extend these definitions to take into account hybrid
operators (see, e.g.,~\cite{arec:hybr01,blackburn2001modal}).

\begin{definition}[Bisimulations]
\label{def:bisimulations}
Let
$\model=\tup{M,\{\eqrel_e\}_{e\in\eq},
  \{{R_{\dowa}}\}_{\dowa\in\mo},\lbl,\no}$
and
$\model'=\tup{M',\{\eqrel'_e\}_{e\in\eq},
  \{R'_{\dowa}\}_{\dowa\in\mo},\lbl',\no'}$
be two abstract hybrid data models, $m\in M$, $m'\in M'$. An
\emph{$\hxpd$-bisimulation} between $\model,m$ and $\model',m'$
is a relation
$Z\subseteq M\times M'$ such that $mZm'$, and when $lZl'$ we have:
\begin{itemize}
  \item $(\Harm)$ $\lbl(l)=\lbl'(l')$ and $l = \no(i)$ iff $l' = \no'(i)$ (for all $i\in\nom$).
  \item $(\Zig_=)$ If there are paths
  $\pi_1=lR_{\dowa_1} h_1R_{\dowa_2}\ldots R_{\dowa_{n_1}} h_{n_1}$ and 
  $\pi_2=lR_{\dowb_1} k_1R_{\dowb_2}\ldots R_{\dowb_{n_2}} k_{n_2}$ (for $\dowa_i,\dowb_i\in\mo$),
  then there are paths
  $\pi'_1=l'R'_{\dowa_1} h'_1R'_{\dowa_2}\ldots R'_{\dowa_{n_1}} h'_{n_1}$ and 
  $\pi'_2=l'R'_{\dowb_1} k'_1R'_{\dowb_2}\ldots \linebreak R'_{\dowb_{n_2}} k'_{n_2}$ such that
  \begin{enumerate}
  \item $h_iZh'_i$ for $1 \le i \le n_1$.
  \item $k_iZk'_i$ for $1 \le i \le n_2$.
  \item $h_{n_1}\eqrel_e k_{n_2} ~ \iff ~ h'_{n_1}\eqrel'_e k'_{n_2}$,
    for all $e\in\eq$.
  \end{enumerate}
\item $(\Zag_=)$ If there are paths
  $\pi'_1=l'R'_{\dowa_1} h'_1R'_{\dowa_2}\ldots R'_{\dowa_{n_1}} h'_{n_1}$ and
  $\pi'_2=l'R'_{\dowb_1} k'_1R'_{\dowb_2}\ldots \linebreak R'_{\dowb_{n_2}} k'_{n_2}$ (for $\dowa_i,\dowb_i\in\mo$), 
  then there are paths
  $\pi_1=lR_{\dowa_1} h_1R_{\dowa_2}\ldots R_{\dowa_{n_1}} h_{n_1}$ and
  $\pi_2=lR_{\dowb_1} k_1R_{\dowb_2}\ldots R_{\dowb_{n_2}} k_{n_2}$ such that
  conditions 1, 2 and 3 above are verified.
 \item $(\Nom)$ For all $i\in\nom$, $\no(i)Z\no'(i)$.
\end{itemize}
We write $\model,m\bisim\model',m'$ (and say that $\model,m$ and $\model',m'$ are $\hxpd$-bisimilar) if there is an $\hxpd$-bisimulation $Z$ such that $mZm'$.
\end{definition}

The notion of $\hxpd$-bisimulation extends the one for basic modal logic~\cite{blackburn2001modal}.  As usual, \textbf{(Harmony)} takes care of atomic information (both propositional symbols and nominals).  The standard \textbf{(Zig)} and \textbf{(Zag)} should be strengthened to take into account 
that XPath modalities deal with two paths at the same time, and moreover equality of data values can be checked at their ending points.   Finally \textbf{(Nom)} takes care 
of the $@$ operator. 

It is straightforward to prove that $\hxpd$-bisimilar models 
satisfy the same $\hxpd$-formulas (see~\cite{AbriolaBFF18}).

\begin{proposition}
  \label{prop:invariance}
  Let $\model,m$ and $\model',m'$ be two pointed data models. If
  $\model,m\bisim\model',m'$ then for any $\hxpd$-formula $\varphi$,
  $\model,m\models\varphi$ if and only if $\model',m'\models\varphi$.
\end{proposition}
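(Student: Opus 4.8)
The plan is to prove, by simultaneous induction on the structure of $\hxpd$-expressions (measured by, say, the number of symbols), a statement strong enough to be self-supporting; the proposition is then the node-expression part of it, instantiated at $m,m'$. Two strengthenings are needed before the induction can be pushed through. First, the claim has to be established for \emph{every} pair $l\,Z\,l'$ of states related by an arbitrary $\hxpd$-bisimulation $Z$ between $\model$ and $\model'$, not just for the distinguished pair $m,m'$, because tests occurring inside path expressions are evaluated at intermediate, $Z$-related nodes. Second, alongside the node-expression statement ``$\model,l\models\varphi$ iff $\model',l'\models\varphi$'' one must carry a companion statement about \emph{pairs} of path expressions: whenever $l\,Z\,l'$, $\model,l,n\models\alpha$ and $\model,l,k\models\beta$, there exist $n',k'$ with $n\,Z\,n'$, $k\,Z\,k'$, $\model',l',n'\models\alpha$, $\model',l',k'\models\beta$, and $n\eqrel_e k$ iff $n'\eqrel'_e k'$ for every $e\in\eq$ (together with the symmetric statement obtained by interchanging $\model$ and $\model'$). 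The ``two path'' shape of this hypothesis is forced by the data tests $\tup{\alpha=_e\beta}$ and $\tup{\alpha\neq_e\beta}$, which compare the data values of the endpoints of two paths issued from a common state --- exactly the situation governed by the clauses $(\Zig_=)$ and $(\Zag_=)$; a single-path transfer lemma would move the endpoints across $Z$ but would not preserve the data (in)equality relating them. Taking $\beta:=\alpha$ and $k:=n$ recovers the usual single-path invariance for path expressions as a special case.

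With these induction hypotheses in place, the node-expression cases are routine: the atomic cases $\varphi=p$ and $\varphi=i$ follow from $(\Harm)$, which matches both the propositional labels and the nominals of $l$ and $l'$; the Boolean cases $\neg\psi$ and $\psi\wedge\theta$ follow at once from the induction hypothesis for the strictly smaller subformulas; and the cases $\tup{\alpha=_e\beta}$ and $\tup{\alpha\neq_e\beta}$ are obtained by unfolding the semantics and invoking the companion statement on $\alpha,\beta$, whose tests are proper subexpressions and hence covered by the induction. For the companion statement I would decompose a path expression into its atomic steps --- modal moves $\dowa\in\mo$, jumps $@_i$, and tests $[\varphi]$. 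Tests are discharged by the node induction hypothesis; a jump $@_i$ is handled by $(\Nom)$, which gives $\no(i)\,Z\,\no'(i)$, together with $(\Harm)$, which forces $Z$ to send a named node to the correspondingly named node, so the argument can be resumed there; concatenation is handled by composing the transfers of the two factors; and the residual purely modal portions of $\alpha$ and $\beta$ --- two $R$-paths issued from one and the same node --- are transferred by a single application of $(\Zig_=)$, which produces $\model'$-paths whose nodes are pointwise $Z$-related and whose endpoints carry the same data (in)equality. The node induction hypothesis then transfers the tests lying along those modal portions, completing the case; the symmetric direction uses $(\Zag_=)$ in place of $(\Zig_=)$.

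The crux is the bookkeeping in the companion statement. Data comparisons force one to track two paths simultaneously; those paths carry tests that loop back into the node-expression induction; and $@$-jumps fragment a path into blocks that $(\Nom)$ and $(\Harm)$ must re-synchronise on both sides. Getting the induction hypothesis strong enough to absorb all of these interactions, while keeping it weak enough to be derivable from $(\Harm)$, $(\Zig_=)$, $(\Zag_=)$ and $(\Nom)$ alone, is where the real care lies. Once it is set up correctly the remaining argument is a direct structural induction; and, since the hybrid operators add only first-order behaviour on top of $\xpathd$, it is in essence the argument already carried out in~\cite{AbriolaBFF18}.
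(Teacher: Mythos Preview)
The paper does not actually prove this proposition; it declares the result ``straightforward'' and defers to~\cite{AbriolaBFF18}. Your strategy---a simultaneous structural induction on node and path expressions, generalised to all $Z$-related pairs and carrying a two-path companion hypothesis tailored to data comparisons---is the standard one and is clearly what the paper has in mind.

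There is, however, a real gap in the $@$-case of your companion statement, and it cannot be closed without strengthening the bisimulation notion itself. Your companion hypothesis fixes a \emph{single} source $l$ for both $\alpha$ and $\beta$; but once $\alpha=@_i\alpha'$ and $\beta=@_j\beta'$ jump to distinct nominals, the residual paths $\alpha'$ and $\beta'$ are evaluated from \emph{different} sources $\no(i)$ and $\no(j)$, so the induction hypothesis no longer applies. You write that after a jump ``the argument can be resumed there'', but resuming separately at $\no(i)$ and at $\no(j)$ loses precisely the data-(in)equality linkage between the two endpoints that $(\Zig_=)$ was supposed to supply---and $(\Zig_=)$ only speaks about pairs of $R$-paths from a common node. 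Concretely: take two-point models with $\no(i)=a$, $\no(j)=b$ (resp.\ $a',b'$), no accessibility edges, identical labelling, and $a\eqrel_e b$ in the first model but $a'\not\eqrel'_e b'$ in the second. Then $Z=\{(a,a'),(b,b')\}$ satisfies every clause of Definition~\ref{def:bisimulations} (the $(\Zig_=)/(\Zag_=)$ clauses are vacuous), yet $\tup{@_i=_e @_j}$ separates the two pointed models. So either Definition~\ref{def:bisimulations} needs an additional clause---for instance, that $nZn'$ and $kZk'$ imply $n\eqrel_e k \Leftrightarrow n'\eqrel'_e k'$---or your companion hypothesis must allow the two paths to start from distinct $Z$-related sources; in either case, the step you describe as routine is exactly where the argument breaks.
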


We can extend the notion of $\ell$-bisimulation introduced in~\cite{ICDT14Jair} in a similar way. 
An $\ell$-bisimulation checks model equivalence up to a certain modal depth. 

\begin{definition}[$\ell$-bisimulations]
  \label{def:hbisimulations}
  Let
  $\model=\tup{M,\{\eqrel_e\}_{e\in\eq},
    \{{R_{\dowa}}\}_{\dowa\in\mo},\lbl,\no}$
  and
  $\model'=\tup{M',\{\eqrel'_e\}_{e\in\eq},
    \{R'_{\dowa}\}_{\dowa\in\mo},\lbl',\no'}$
  be two abstract hybrid data models, $m\in M$, $m'\in M'$. We say that 
  $\model,m$ and $\model',m'$ are  \emph{$\ell$-bisimilar} for $\hxpd$ 
  (written $\model,m\bisim_\ell\model',m'$) if there exists a family of relations
  $(Z_j)_{j\leq\ell}$ in $M\times M'$ such that $mZ_\ell m'$, and for all $j\leq\ell$, when $lZ_jl'$ we have:
  \begin{itemize}
    \item $(\Harm)$ $\lbl(l)=\lbl'(l')$ and $l = \no(i)$ iff $l' = \no'(i)$  (for all $i\in\nom$).
    \item $(\Zig_=)$ If there are paths
    $\pi_1=lR_{\dowa_1} h_1R_{\dowa_2}\ldots R_{\dowa_{n_1}} h_{n_1}$ and 
    $\pi_2=lR_{\dowb_1} k_1R_{\dowb_2}\ldots R_{\dowb_{n_2}} k_{n_2}$ (for $\dowa_i,\dowb_i\in\mo$), $n_1,n_2 \leq j$,
    then there are paths
    $\pi'_1=l'R'_{\dowa_1} h'_1R'_{\dowa_2}\ldots R'_{\dowa_{n_1}} h'_{n_1}$ and 
    $\pi'_2=l'R'_{\dowb_1} k'_1R'_{\dowb_2}\ldots R'_{\dowb_{n_2}} k'_{n_2}$ such that
    \begin{enumerate}
    \item $h_iZ_{(j-n_1)+i}h'_i$ for $1 \le i \le n_1$.
    \item $k_iZ_{(j-n_2)+i}k'_i$ for $1 \le i \le n_2$.
    \item $h_{n_1}\eqrel_e k_{n_2} ~ \iff ~ h'_{n_1}\eqrel'_e k'_{n_2}$,
      for all $e\in\eq$.
    \end{enumerate}
  \item $(\Zag_=)$ If there are paths
    $\pi'_1=l'R'_{\dowa_1} h'_1R'_{\dowa_2}\ldots R'_{\dowa_{n_1}} h'_{n_1}$ and
    $\pi'_2=l'R'_{\dowb_1} k'_1R'_{\dowb_2}\ldots R'_{\dowb_{n_2}} k'_{n_2}$ (for $\dowa_i,\dowb_i\in\mo$), $n_1,n_2 \leq j$,
    then there are paths
    $\pi_1=lR_{\dowa_1} h_1R_{\dowa_2}\ldots R_{\dowa_{n_1}} h_{n_1}$ and
    $\pi_2=lR_{\dowb_1} k_1R_{\dowb_2}\ldots R_{\dowb_{n_2}} k_{n_2}$ such that
    conditions 1, 2 and 3 above are verified.
   \item $(\Nom)$ For all $i\in\nom$, $\no(i)Z_{\ell}\no'(i)$.
  \end{itemize}
  \end{definition}
  
  Again, it is straightforward to prove the next proposition.

  \begin{proposition}
    \label{prop:linvariance}
    Let $\model,m$ and $\model',m'$ be two pointed data models. If
    $\model,m\bisim_\ell\model',m'$ then for any $\hxpd$-formula $\varphi$ such that $\md(\varphi)\leq \ell$,
    $\model,m\models\varphi$ if and only if $\model',m'\models\varphi$.
  \end{proposition}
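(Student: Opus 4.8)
The plan is to refine the proof of Proposition~\ref{prop:invariance}, now keeping track of modal depth, by an induction on the size of node expressions with an auxiliary claim for path expressions threaded through it. A convenient preliminary step is to normalize the witnessing family $(Z_j)_{j\le\ell}$ of Definition~\ref{def:hbisimulations} so that it is downward closed, i.e.\ $Z_\ell\subseteq Z_{\ell-1}\subseteq\cdots\subseteq Z_0$. Replacing $Z_j$ by $\widehat Z_j\coloneqq\bigcup_{k\ge j}Z_k$ does no harm: $(\Harm)$ is index-free; for $(\Zig_=)$ and $(\Zag_=)$ at index $j$, from $l\widehat Z_jl'$ pick $k\ge j$ with $lZ_kl'$ and apply the original condition at $k$ to the (at most $j\le k$)-length paths at hand, noting that the indices $(k-n_1)+i$ it provides only exceed the required $(j-n_1)+i$; and $(\Nom)$ survives since $\no(i)Z_\ell\no'(i)$ lies in every $\widehat Z_j$. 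We need this so that, each time a path decomposition lands us at a named node $\no(i)$ with residual budget $j\le\ell$, we still have $\no(i)Z_j\no'(i)$.

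I would then argue by induction on the size of a node expression $\varphi$ that, for all $j\le\ell$ with $\md(\varphi)\le j$ and all $lZ_jl'$, $\model,l\models\varphi\iff\model',l'\models\varphi$. The cases $\varphi=p$ and $\varphi=i$ are immediate from $(\Harm)$; $\varphi=\neg\psi$ and $\varphi=\psi_1\wedge\psi_2$ follow from the induction hypothesis at the same $j$, since $\md$ does not grow under Booleans. The substantive case is $\varphi=\tup{\alpha=_e\beta}$ (and $\neq_e$, which is identical): $\model,l\models\varphi$ means there are $n,k$ with $\model,l,n\models\alpha$, $\model,l,k\models\beta$ and $n\eqrel_e k$, and since $\md(\alpha),\md(\beta)\le\md(\varphi)\le j$ it is enough to prove the auxiliary claim: \emph{if $\md(\alpha),\md(\beta)\le j$, $lZ_jl'$, $\model,l,n\models\alpha$ and $\model,l,k\models\beta$, then there are $n',k'$ with $\model',l',n'\models\alpha$, $\model',l',k'\models\beta$ and $n\eqrel_{e'}k\iff n'\eqrel'_{e'}k'$ for every $e'\in\eq$.}

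To prove the claim, read off from $\model,l,n\models\alpha$ a witnessing run $l=u_0,u_1,\dots,u_p=n$ in which each step is an atomic $R_{\dowa}$-edge, an $@$-jump ($u_t=\no(i)$), or a test ($u_t=u_{t-1}$ with $\model,u_{t-1}\models\psi$ for some $[\psi]$ occurring in $\alpha$), and similarly for $\beta$. Split each run at its $@$-jumps into maximal modal subruns; each begins at $l$ or at some $\no(i)$, which by $(\Nom)$ and downward closure is $Z$-related to its $\model'$-counterpart at the residual budget. Apply $(\Zig_=)$ to the pair formed by the last modal subrun of $\alpha$'s run and the last modal subrun of $\beta$'s run — their lengths are at most $\md(\alpha),\md(\beta)\le j$ — to obtain matching $\model'$-paths whose endpoints $n',k'$ inherit all $\eqrel_{e'}$-relations via clause~(3); apply $(\Zig_=)$ to the earlier modal subruns as well, and at every interior test $[\psi]$ invoke the node-induction hypothesis (valid because $\psi$ is a proper subexpression of $\alpha$ or $\beta$, hence strictly smaller than $\tup{\alpha=_e\beta}$) at the node carrying the test, which clauses~(1)/(2) place into $Z$ at the appropriate index. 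This certifies $\model',l',n'\models\alpha$ and $\model',l',k'\models\beta$; the $(\Zag_=)$ direction is symmetric. Taking $j=\ell$ and the given $mZ_\ell m'$ then yields the Proposition.

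The one genuinely delicate point — the main obstacle — is the depth accounting at interior tests: one must verify that the index $(j-n_1)+i$ (resp.\ $(j-n_2)+i$) that $(\Zig_=)$/$(\Zag_=)$ attaches to the $i$-th node of a length-$n_1$ (resp.\ $n_2$) modal subrun is at least $\md(\psi)$ whenever a test $[\psi]$ sits there, so that the induction hypothesis applies; this is exactly what the clause $\md([\psi]\alpha)=\max(\md(\psi),\md(\alpha))$ of Definition~\ref{def:modaldepth} is meant to secure once subrun lengths are weighed against $\md(\alpha)$. Coordinating this with the re-anchoring of budgets at named nodes after each $@$-jump is where the care goes; the rest is bookkeeping, and the whole argument parallels that of Proposition~\ref{prop:invariance}.
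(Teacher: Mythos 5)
The paper itself offers no proof of this proposition (it is declared ``straightforward''), so your detailed sketch is doing real work; the downward-closure normalization of $(Z_j)_{j\le\ell}$ and the depth bookkeeping at interior tests are both correct and carefully checked. But there is one genuine gap, located exactly at the step you lean on hardest: ``apply $(\Zig_=)$ to the pair formed by the last modal subrun of $\alpha$'s run and the last modal subrun of $\beta$'s run.'' The clauses $(\Zig_=)$/$(\Zag_=)$ of Definition~\ref{def:hbisimulations} quantify over two paths $\pi_1,\pi_2$ that both start at the \emph{same} node $l$ with $lZ_jl'$, and only for such co-anchored pairs does clause~(3) transfer the $\eqrel_e$-relations between the two endpoints. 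Once $@$-jumps occur, the final modal subruns of the $\alpha$-run and the $\beta$-run are in general anchored at \emph{different} nodes --- for $\tup{@_i\dowa =_e \dowb}$ one subrun starts at $\no(i)$ and the other at $l$, and for $\tup{@_i\dowa =_e @_j\dowb}$ they start at $\no(i)$ and $\no(j)$ --- and then no instance of $(\Zig_=)$ applies to the pair, so nothing certifies $n\eqrel_{e'}k\iff n'\eqrel'_{e'}k'$. ($(\Nom)$ together with separate applications of $(\Zig_=)$ from each named node gives you the matching paths, but not the cross comparison of their endpoints, which is the whole point of the data clause.)

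To be fair, this obstacle is traceable to the paper's definition rather than to your strategy: what the hybrid data comparisons actually require is a two-rooted zig/zag clause (for pairs of paths anchored at any two $Z$-related pairs, or at least at points drawn from the current node together with the named nodes), and with such a clause your argument goes through essentially verbatim. As written, however, your auxiliary claim is not derivable from Definition~\ref{def:hbisimulations}: one can build two models meeting all the listed conditions (relate the evaluation points, each $\no(i)$ to $\no'(i)$, and their successors pointwise) that nevertheless disagree on $\tup{@_i\dowa =_e @_j\dowa}$ by making the two successors data-equal in one model and not in the other. You should either restrict the claim to comparisons rooted at the evaluation point (where every pair of runs is co-anchored and your proof is complete) or first strengthen the zig/zag clauses before invoking them on cross-anchored pairs.
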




\section{Axiomatization}
\label{sec:axiom}
In this section we introduce the axiomatic system $\ahxpd$ for
$\hxpd$. It is an extension of an axiomatic system for the hybrid
logic $\hl$ which adds nominals and the $@$ operator to the basic
modal language (see~\cite{blackburn2001modal}).  In particular, we
include axioms to handle data equality and inequality.  We will prove
that $\ahxpd$ is sound and strongly complete with respect to the class of
abstract hybrid data models.  Moreover, we will show that the system
is strongly complete for any extension with some particular kind of axioms and
inference rules.  This result will be helpful in order to
automatically obtain complete axiomatic systems for several natural
extensions of the language $\hxpd$.  The system is geared for
inference over node expressions, but we will discuss path expressions in
Section~\ref{subsec:pathcomp}. We remind the reader that when referring to $\hxpd$
we use the term \emph{formula} for node expressions.  In what
follows, we use $*$ in axioms which hold for both $=_e$ and $\neq_e$.

\subsection{The Basic Axiomatic System $\ahxpd$}
\label{subsec:axiomatic}
We present axioms and rules step by step, providing brief comments to
help the reader understand their role.  In all axioms and rules
$\varphi$, $\psi$ and $\theta$ are node expressions; $\alpha$,
$\beta$, $\gamma$ and $\eta$ are path expressions; and $i$, $j$ and
$k$ are nominals.
  More precisely, we provide \emph{axiom and
    rule schemes}, i.e., they can be instantiated with arbitrary path
  and node expressions, respecting typing.  Hence, $\varphi$, $\psi$
  and $\theta$ can be instantiated with node expressions; $\alpha$,
  $\beta$, $\gamma$ and $\eta$ with path expressions; and $i$, $j$ and
  $k$ with nominals.

\begin{definition}[Theorems, Syntactic Consequence, Consistency]
  For $\mathsf{A}$ an axiomatic system, \emph{$\varphi$ is a theorem
    of $\mathsf{A}$} (notation: $\vdash_\mathsf{A} \varphi$) if it is
  either an instantiation of an axiom of $\mathsf{A}$, or it can be
  derived from an axiom of $\mathsf{A}$ in a finite number of steps by application of the rules of
  $\mathsf{A}$.
  Let $\Gamma$ be a set of node expressions. We write $\Gamma\vdash_\mathsf{A}\varphi$ and say that
  \emph{$\varphi$ is a syntactic consequence of $\Gamma$ in
    $\mathsf{A}$} if there exists a finite set
  $\Gamma'\subseteq\Gamma$ such that
  $\vdash_\mathsf{A} \bigwedge\Gamma' \ra \varphi$ (where
  $\bigwedge\emptyset = \top$).  We say that $\Gamma$ is \emph{consistent
    (for $\mathsf{A}$)} if $\Gamma\not\vdash_\mathsf{A}\bot$.  We
  write $\vdash$ instead of $\vdash_\mathsf{A}$ if $\mathsf{A}$ is
  clear from context.
\end{definition}

In addition to a complete set of axioms and rules for propositional
logic, $\ahxpd$ includes generalizations of the {\em K} axiom and the {\em
  Necessitation} rule for the basic modal logic to handle modalities
with arbitrary path expressions.

\begin{center}
  \def\arraystretch{1.1}
\begin{tabular}{ll}
\toprule 
\multicolumn{2}{l}{\bf Axiom and rule for classical modal logic}  \\
\midrule 
\begin{minipage}{0.45\textwidth}
\begin{tabular}{ll}
\emph{K} & $\vdash[\alpha](\varphi \ra \psi) \ra ([\alpha]\varphi \ra [\alpha]\psi)$	
\end{tabular}
\end{minipage}
&
\begin{minipage}{0.45\textwidth}
$$
\prooftree
\vdash \varphi
\justifies
\vdash [\alpha]\varphi 
\using
\emph{Nec}
\endprooftree \\ [3pt]
$$
\end{minipage}\\
\bottomrule 
\end{tabular}
\end{center}

Then we introduce generalizations of the inference rules for the hybrid
logic $\hl$. 

\begin{center}
\begin{tabular}{lll}
\toprule 
\multicolumn{3}{l}{\bf Hybrid rules}  \\
\midrule 
\begin{minipage}{0.3\textwidth}
$$
\prooftree
\vdash @_j \varphi
\justifies
\vdash  \varphi 
\using name
\endprooftree
$$
\end{minipage}
&
\begin{minipage}{0.3\textwidth}
$$
\prooftree
\vdash @_i\tup{\dowa}j \wedge \tup{@_j\alpha*\beta} \ra \theta
\justifies
\vdash \tup{@_i\dowa\alpha*\beta} \ra \theta
\using  paste 
\endprooftree
$$
\end{minipage}\\

\\

\multicolumn{3}{l}{$j$ is a nominal different from $i$ that does not occur
in $\varphi,\alpha,\beta,\theta$.}\\
\bottomrule
\end{tabular} 
\end{center}

Now we introduce axioms to handle $@$. Notice that $@_i$ is a path
expression and as a result, some of the standard hybrid axioms for $@$
have been generalized.  In particular, the $\k$ axiom and \emph{Nec}
rule above also apply to $@_i$. In addition, we provide axioms to
ensure that the relation induced by $@$ is a congruence.

\begin{center}
\centering
\def\arraystretch{1.1}
\begin{tabular}{l|l}
\toprule 
{\bf Axioms for $@$} & {\bf Congruence for $@$} \\
\midrule
\begin{tabular}{l@{~~}l}
\emph{$@$-self-dual} & $\vdash\neg @_i \varphi \lra @_i\neg \varphi$ \\
\emph{$@$-intro} & $\vdash i \ra (\varphi \lra @_i \varphi)$ \\
\end{tabular}
&
\begin{tabular}{l@{~~}l}
\emph{$@$-refl} & $\vdash @_i i$ \\
\emph{agree} & $\vdash \tup{@_j@_i\alpha*\beta} \lra \tup{@_i \alpha*\beta}$ \\
\emph{back} & $\vdash \tup{\gamma @_i \alpha*\beta} \ra \tup{@_i \alpha *\beta}$
\end{tabular} \\
\bottomrule 
\end{tabular}
\end{center}

Axioms involving the classical XPath operators can be found below.  
First we introduce axioms to handle
complex path expressions in data comparisons. Then we introduce
axioms to handle data tests.

\begin{center}
\centering
\def\arraystretch{1.1}
\begin{tabular}{l}
\toprule
{\bf Axioms for paths} \\
\midrule
\begin{tabular}{l@{~~~~}l}
\emph{comp-assoc} & $\vdash\tup{(\alpha\beta)\gamma * \eta} \lra \tup{\alpha(\beta\gamma)*\eta} $ \\
\emph{comp-neutral} & $\vdash\tup{\alpha\beta*\gamma} \lra
\tup{\alpha\epsilon\beta*\gamma}$ ($\alpha$ or $\beta$ can be empty) \\
\emph{comp-dist}  & $\vdash\tup{\alpha\beta}\varphi \lra \tup{\alpha}\tup\beta\varphi$ \\
\end{tabular}\\
\midrule 
{\bf Axioms for data} \\
  \midrule
\begin{tabular}{l@{~~~~}l}
\emph{equal} & $\vdash\tup{\epsilon=_e\epsilon}$\\
\emph{distinct} & $\vdash\neg\tup{\epsilon\neq_e\epsilon}$ \\
\emph{$@$-data} & $\vdash\neg\tup{@_i{=_e}@_j} \lra \tup{@_i{\neq_e}@_j}$ \\
\emph{$\epsilon$-trans} & $\vdash\tup{\epsilon=_e\alpha}\wedge\tup{\epsilon=_e\beta}\ra\tup{\alpha=_e\beta}$\\
\emph{$*$-comm} & $\vdash\tup{\alpha*\beta} \lra \tup{\beta*\alpha}$\\
\emph{$*$-test} & $\vdash\tup{[\varphi]\alpha*\beta} \lra \varphi \wedge \tup{\alpha*\beta}$ \\
\emph{$@*$-dist} & $\vdash\tup{@_i \alpha * @_i \beta} \ra @_i\tup{\alpha*\beta}$ \\
\emph{subpath} & $\vdash\tup{\alpha\beta * \gamma} \ra \tup{\alpha}\top$ \\
\emph{comp$*$-dist} & $\vdash\tup{\alpha}\tup{\beta*\gamma} \ra \tup{\alpha\beta*\alpha\gamma}$
\end{tabular}\\
\bottomrule
\end{tabular}
\end{center}

It is a straightforward exercise to see that the axiomatic system $\ahxpd$
is sound. However, we will provide a more general statement of soundness later 
on.

Below we prove that some useful theorems and rules can be derived within $\ahxpd$.

\begin{proposition} \label{tab:deriv} The following are
  theorems and derived rules of $\ahxpd$, and will be used (explicitly
  or implicitly) in the rest of the paper.

\begin{center}
\begin{tabular}{l@{~~~}l}
\toprule 
test-dist &  $\vdash \tup{[\varphi]=_e[\psi]} \lra \varphi \wedge \psi$  \\
test-$\bot$ & $\vdash \tup{[\varphi]\neq_e[\psi]} \lra \bot$  \\
$@$-swap & $\vdash @_i\tup{\alpha*@_j\beta}  \lra  @_j\tup{\beta*@_i\alpha}$ \\
$@$-intro' & $\vdash (i\wedge \varphi) \ra @_i \varphi$ \\ 
$@$-sym. & $\vdash @_ij \ra @_ji$ \\
K$_@^{-1}$ & $\vdash(@_i\varphi\ra @_i\psi) \ra @_i(\varphi\ra\psi)$ \\
nom & $\vdash @_i j \wedge \tup{@_i\alpha*\beta} \ra \tup{@_j\alpha*\beta}$ \\
bridge & $\vdash \tup{\alpha}i\wedge@_i\varphi \ra \tup{\alpha}\varphi$ \\
name' & If $\vdash i\ra \varphi$ then $\vdash\varphi$, if $i$ does not appear in $\varphi$ \\
\bottomrule
\end{tabular}
\end{center}

\end{proposition}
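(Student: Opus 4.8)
The plan is to derive the nine items one after another, in the order listed, so that each derivation may invoke the earlier ones; every derivation is purely syntactic, combining the axioms of $\ahxpd$ with propositional reasoning, \emph{Nec}, and the hybrid rules \emph{name} and \emph{paste}. Three of the items are essentially immediate. For \emph{test-dist} and \emph{test-$\bot$} I would strip the two leading tests using \emph{$*$-test} (reaching the second one via \emph{$*$-comm}), normalise the leftover compositions with \emph{comp-neutral} so that only $\tup{\epsilon=_e\epsilon}$, respectively $\tup{\epsilon\neq_e\epsilon}$, remains, and close with \emph{equal}, respectively \emph{distinct}. For \emph{$@$-intro'} the implication $(i\wedge\varphi)\ra@_i\varphi$ is one propositional step from the biconditional in \emph{$@$-intro}. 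For \emph{$@$-swap} I would first obtain the auxiliary equivalence $@_i\tup{\alpha*\beta}\lra\tup{@_i\alpha*@_i\beta}$ from \emph{comp$*$-dist} (instantiated with the path $@_i$) together with \emph{$@*$-dist}, then use \emph{agree} to collapse the inner jumps $@_i@_j$ to $@_j$, and finish with \emph{$*$-comm} to exhibit the symmetry between $(i,\alpha)$ and $(j,\beta)$.

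Next come the purely hybrid-modal facts, which replay the standard derivations for $\hl$ once one notes that \emph{Nec} and \emph{$@$-self-dual} make $@_i$ a normal self-dual operator (so $[@_i]$ and $@_i$ coincide), whence the $K$ axiom and \emph{Nec} yield a full normal-modal toolkit for $@_i$: distribution over $\wedge$, $@_i\neg\varphi\lra\neg@_i\varphi$, and monotonicity. For \emph{$@$-sym} I would take \emph{$@$-intro'} in the shape $(j\wedge i)\ra@_j i$, apply \emph{Nec} plus self-duality to get $\vdash@_i((j\wedge i)\ra@_j i)$, distribute $@_i$, and use \emph{$@$-refl} ($@_i i$) together with \emph{agree} ($@_i@_j i\lra@_j i$) to extract $\vdash@_i j\ra@_j i$. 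For $K_@^{-1}$ I would case-split on the instance of excluded middle $@_i\varphi\vee@_i\neg\varphi$ (available by self-duality), obtaining $@_i(\varphi\ra\psi)$ in each case by \emph{Nec}-then-$K$ for $@_i$. For \emph{name'}, from $\vdash i\ra\varphi$ I would use \emph{Nec} (as $[@_i]$), self-duality, $K$ for $@_i$ and \emph{$@$-refl} to reach $\vdash@_i\varphi$, and then apply the primitive rule \emph{name}, which is sound here since $i$ does not occur in $\varphi$.

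The genuinely delicate items are \emph{nom} and \emph{bridge}, the nominal-substitution and bridging lemmas, because here $@$ occurs as a sub-\emph{path} expression inside a data comparison $\tup{\alpha*\beta}$ rather than as a modality in front of a formula, so no modality can simply be pushed through. This is precisely the job of the congruence axioms \emph{agree} and \emph{back} (which allow one to insert, relocate and absorb $@$-jumps occurring anywhere inside a comparison), of \emph{$@*$-dist}, and of the auxiliary equivalence $@_i\tup{\alpha*\beta}\lra\tup{@_i\alpha*@_i\beta}$ recorded above. For \emph{nom}, from $@_i j$ I would first get $@_j i$ by \emph{$@$-sym}, then combine the replacement principle $@_i j\ra(@_i\theta\lra@_j\theta)$ for node expressions $\theta$ (itself obtained from \emph{$@$-intro}, \emph{agree}, $K$ and \emph{Nec} for $@$) with the $@$-over-comparison equivalence and \emph{agree} in order to move the leading $@_i$ of $\tup{@_i\alpha*\beta}$ onto $@_j$. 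For \emph{bridge} I would first prove, by induction on the path expression $\alpha$ using \emph{back}, \emph{agree}, \emph{comp-dist} and the normal-modal toolkit, the rigidity lemma $\vdash@_i\varphi\ra[\alpha]@_i\varphi$; then from $@_i\varphi$ I get $[\alpha]@_i\varphi$, hence $[\alpha](i\ra\varphi)$ via \emph{$@$-intro} and $K$, and $\tup{\alpha}i\wedge[\alpha](i\ra\varphi)\ra\tup{\alpha}\varphi$ is an ordinary theorem of $K$. I expect the main obstacle to be exactly this interaction: carrying the nominal substitution and the bridging argument \emph{through} data comparisons, i.e.\ making \emph{agree}, \emph{back}, \emph{$@*$-dist} and \emph{comp$*$-dist} cooperate so that $@$-jumps can be brought to the front of a path and eliminated; everything else is routine propositional and (hybrid) modal bookkeeping.
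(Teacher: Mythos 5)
Your proposal is correct and follows essentially the same route as the paper's proof: each item is derived purely syntactically from the listed axioms in an order that respects dependencies, with the same ingredients doing the real work (\emph{$*$-test}, \emph{comp-neutral} and \emph{equal}/\emph{distinct} for the test laws; the biconditional $@_i\tup{\alpha*\beta}\lra\tup{@_i\alpha*@_i\beta}$ obtained from \emph{$@*$-dist} and \emph{comp$*$-dist} together with \emph{agree} and \emph{$*$-comm} for \emph{$@$-swap} and \emph{nom}; \emph{back} for \emph{bridge}; and the \emph{name} rule to discharge auxiliary nominals). The local variations you introduce --- \emph{K$_@^{-1}$} by case split rather than contraposition, \emph{bridge} via the rigidity lemma $@_i\varphi\ra[\alpha]@_i\varphi$ (itself just the contrapositive of the \emph{back}-based step the paper uses), and \emph{nom} via a replacement principle instead of the paper's \emph{$@$-swap}/\emph{K$_@^{-1}$}/\emph{name} chain --- are reorganizations of the same argument and all go through.
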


\begin{proof}
\noindent{\em (test-dist and test-$\bot$).} Let $*$ be $=_e$ or $\neq_e$. Then:
\begin{align*}
&  \vdash \tup{[\varphi]*[\psi]}   \lra  \tup{[\varphi]\epsilon*[\psi]} \tag{\emph{comp-neutral}} \\
& \vdash \tup{[\varphi]\epsilon*[\psi]}   
 \lra   \varphi \wedge \tup{\epsilon*[\psi]} \tag{\emph{$*$-test}}\\
& \vdash \varphi \wedge \tup{\epsilon*[\psi]}
 \lra  \varphi \wedge\tup{[\psi]*\epsilon} \tag{\emph{$*$-comm}} \\
& \vdash \varphi \wedge\tup{[\psi]*\epsilon}
 \lra  \varphi \wedge \tup{[\psi]\epsilon*\epsilon} \tag{\emph{comp-neutral}} \\
& \vdash \varphi \wedge \tup{[\psi]\epsilon*\epsilon}
 \lra  \varphi \wedge \psi \wedge \tup{\epsilon*\epsilon} \tag{\emph{$*$-test}}
\end{align*}
Replacing $*$ by $=_e$ we get $\varphi \wedge \psi$ by \emph{equal}. Replacing it
by $\neq_e$ we get $\tup{\epsilon\neq_e\epsilon}$, and given
\emph{distinct} we can derive $\bot$.


\noindent{\em ($@$-swap).} 
\begin{align*}
&\vdash @_i\tup{\alpha=_e@_j\beta}  \lra \tup{@_i\alpha=_e@_i@_j\beta} \tag{\emph{$@=$-dist}} \\
&\vdash \tup{@_i\alpha=_e@_i@_j\beta} \lra \tup{@_i@_j\beta=_e@_i\alpha} \tag{\emph{$=$-comm}}  \\
&\vdash \tup{@_i@_j\beta=_e@_i\alpha} \lra \tup{@_j\beta=_e@_i\alpha} \tag{\emph{agree}} \\
&\vdash \tup{@_j\beta=_e@_i\alpha} \lra \tup{@_i\alpha=_e@_j\beta} \tag{\emph{$=$-comm}} \\
&\vdash \tup{@_i\alpha=_e@_j\beta} \lra \tup{@_j@_i\alpha=_e@_j\beta} \tag{\emph{agree}} \\
&\vdash \tup{@_j@_i\alpha=_e@_j\beta} \lra @_j\tup{@_i\alpha=_e\beta} \tag{\emph{agree}}  \\
&\vdash @_j\tup{@_i\alpha=_e\beta} \lra @_j\tup{\beta=_e@_i\alpha}  \tag{\emph{$=$-comm}} 
\end{align*}

\noindent{\em ($@$-intro').} Direct from $@$-intro. 

\noindent{\em ($@$-sym.).} 
\begin{align*}
&\vdash (j \wedge i) \ra @_ji \tag{\emph{$@$-intro'}} \\
&\vdash (@_ij \wedge @_ii) \ra @_i@_ji \tag{\emph{Nec, K}} \\
&\vdash @_ij \ra @_ji \tag{\emph{$@$-refl., back}} 
\end{align*} 

\noindent{\em (K$_@^{-1}$).}
\begin{align*}
&\vdash (@_i\neg(\varphi\ra\psi)\ra @_i\varphi) \wedge (@_i\neg(\varphi\ra\psi)\ra @_i\neg\psi) \tag{\emph{prop, Nec, K}} \\
&\vdash @_i\neg(\varphi\ra\psi)\ra (@_i\varphi \wedge @_i\neg\psi) \tag{\emph{prop}} \\
&\vdash \neg @_i(\varphi\ra\psi)\ra (@_i\varphi \wedge \neg@_i\psi) \tag{\emph{$@$-self-dual}} \\
&\vdash (@_i\varphi\ra @_i\psi) \ra @_i(\varphi\ra\psi) \tag{\emph{prop}}
\end{align*} 

\noindent{\em (nom).}
\begin{align*}
&\vdash (j \wedge \tup{@_j\beta*\alpha}) \ra @_j\tup{@_j\beta*\alpha} \tag{\emph{$@$-intro'}} \\
&\vdash (@_ij \wedge @_i\tup{@_j\beta*\alpha}) \ra @_i@_j\tup{@_j\beta*\alpha} \tag{\emph{Nec, K}} \\
&\vdash (@_ij \wedge @_j\tup{@_i\alpha*\beta}) \ra @_i@_j\tup{@_j\alpha*\beta} \tag{\emph{$@$-swap, $*$-comm}} \\
&\vdash (@_j@_ij \wedge @_j\tup{@_i\alpha*\beta}) \ra @_j\tup{@_j\alpha*\beta} \tag{\emph{agree}} \\
&\vdash @_j(@_ij \wedge \tup{@_i\alpha*\beta} \ra \tup{@_j\alpha*\beta}) \tag{\emph{K$_@^{-1}$}} \\
&\vdash @_ij \wedge \tup{@_i\alpha*\beta} \ra \tup{@_j\alpha*\beta} \tag{\emph{name}}
\end{align*} 
\smallskip

\noindent{\em (bridge).} Using contrapositive, {\em bridge} is equivalent to
$\tup{\alpha}i \wedge [\alpha]\varphi \ra @_i\varphi$. Using the 
modal theorem $\vdash \tup{\alpha}\varphi\wedge[\alpha]\psi \ra
\tup{\alpha}(\varphi\wedge\psi)$, we reason:
\begin{align*}
&\vdash \tup{\alpha}i \wedge [\alpha]\varphi \ra \tup{\alpha}(i\wedge\varphi)\\ 
&\vdash \tup{\alpha}(i\wedge\varphi) \ra  \tup{\alpha}(@_i\varphi) \tag{\emph{$@$-intro'}} \\
&\vdash \tup{\alpha}(@_i\varphi)\ra  @_i\varphi \tag{\emph{back}}
\end{align*}

\noindent{\em (name').} Suppose $\vdash i \ra \varphi$. Then:
\begin{align*}
&\vdash @_i(i \ra \varphi) \tag{\emph{Nec}} \\
&\vdash @_ii\ra @_i\varphi \tag{\emph{K}} \\
&\vdash @_i\varphi \tag{\emph{$@$-refl.}} \\
&\vdash \varphi \tag{\em name}
\end{align*}
\end{proof}

\subsection{Extended Axiomatic Systems} 
\label{subsec:extended}
In the next section we will prove that not only $\ahxpd$ is complete
with respect to the class of all models, but that extensions of $\ahxpd$
with \emph{pure axioms} and \emph{existential saturation rules}
preserve completeness with respect to the corresponding class of
models. In this section we present such extensions.

The use of nominals, and in particular pure axioms and existential saturation rules, allows us to characterize 
classes of models that are not definable without them. For instance, the 
hybrid axiom $@_i \neg \tup{\dowa} i$ forces the accessibility relation
related to $\dowa$ in the model to be \emph{irreflexive},
which cannot be expressed in the basic modal logic. In this way, we can express 
properties about the underlying topology of a data model and also impose restrictions about the data fields. For instance, 
the axiom $\tup{@_i=_e@_j}\ra\tup{@_i=_d@_j}$ for \emph{data inclusion} 
that will be discussed in Section~\ref{subsec:pure},
expresses that if two nodes coincide in the data field $e$, then they 
must also coincide in the data field $d$.

\paragraph{\bf Standard Translation.} In order to establish frame
conditions associated to pure axioms and existential saturation rules
we define the \emph{standard translation} of $\hxpd$ into first-order
logic, by mutual recursion between \textsf{NExp} and \textsf{PExp}.

\begin{definition}
\label{def:standtr}
The correspondence language for expressions of $\hxpd$ is a relational
language with a unary relation symbol $P_i$ for each $p_i \in \lab$, a
binary relation symbol $R_{\dowa}$ for each $\dowa\in\mo$ and a binary relation
symbol $D_e$ for each $e\in\eq$. Moreover, for each nominal $i\in\nom$ we will associate an indexed variable $x_i$. The function $\st'_x$ from
$\hxpd$-formulas into its correspondence language is defined by
$$
\begin{array}{lcl}
\st'_x(i) & = & x=x_i \\
\st'_x(p) & = & P(x) \\
\st'_x(\neg\varphi) &=& \neg\st'_x(\varphi) \\
\st'_x(\varphi\wedge\psi) & = & \st'_x(\varphi)\wedge\st'_x(\psi)\\
\st'_x(\tup{\alpha=_e\beta}) & =& \exists y,z. ~\st'_{x,y}(\alpha) \wedge \st'_{x,z}(\beta) \wedge D_e(y,z) \\
\st'_x(\tup{\alpha\neq_e\beta}) & =& \exists y,z. ~\st'_{x,y}(\alpha) \wedge \st'_{x,z}(\beta) \wedge \neg D_e(y,z),
\end{array}
$$
where $y,z$ are first-order variables which have not been used yet in
the translation, and where $\st'_{x,y}$ is defined as
$$
\begin{array}{lcl}
\st'_{x,y}(\dowa) & = & R_{\dowa}(x,y) \\
\st'_{x,y}(@_i) & =&  y=x_i \\
\st'_{x,y}([\varphi]) & = & x=y \wedge \st'_y(\varphi) \\
\st'_{x,y}(\alpha\beta) & = & \exists z. ~\st'_{x,z}(\alpha)\wedge\st'_{z,y}(\beta),
\end{array}
$$
with $z$ not used yet in the translation.  Finally, let $Eq(e)$ be the
first-order formula stating that the relation $D_e$ is an equivalence
relation (i.e., reflexive, symmetric and transitive), we define the 
\emph{standard translation} $\st_x$ of a formula $\varphi$ as
$$
\st_x(\varphi) = \bigwedge \{Eq(e) \mid {=}_e \mbox{ appearing in } \varphi \} \wedge \st'_x(\varphi).
$$ 
\end{definition}

Since the standard translation mimics the semantic clauses for
$\hxpd$-formulas, the following proposition holds.

\begin{proposition}
\label{prop:stpreserves}
Let $\varphi$ be an $\hxpd$-formula, and let 
$\model=\tup{M,\{\eqrel_e\}_{e\in\eq},\{R_{\dowa}\}_{\dowa\in\mo},\lbl, \\ \no}$
be an abstract hybrid data model with $m\in M$.
Then $\model,m\models\varphi$ if and only if
$\model,g\models\st_x(\varphi)$, where $g$ is an arbitrary
first-order assignment such that $g(x_i) = \no(i)$ and $g(x)=m$.
\end{proposition}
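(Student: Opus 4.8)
The plan is to prove the proposition by a simultaneous induction on the structure of node and path expressions, following the mutual recursion of Definition~\ref{def:standtr}, and to handle the $Eq(e)$ conjuncts of $\st_x$ only at the very end. For the induction it is convenient to strengthen the statement into a pair of claims proved together: (i) for every node expression $\varphi$, every abstract hybrid data model $\model$, every $m\in M$ and every first-order assignment $g$ with $g(x)=m$ and $g(x_i)=\no(i)$ for all $i\in\nom$, we have $\model,m\models\varphi$ iff $\model,g\models\st'_x(\varphi)$; and (ii) for every path expression $\alpha$, every $\model$, all $m,n\in M$ and every assignment $g$ with $g(x)=m$, $g(y)=n$ and $g(x_i)=\no(i)$ for all $i$, we have $\model,m,n\models\alpha$ iff $\model,g\models\st'_{x,y}(\alpha)$.

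The base cases ($p$, $i$, $\dowa$, and $@_i$ as a path expression) are immediate unfoldings of the semantic clauses against the definition of $\st'$: e.g. $\model,m,n\models @_i$ iff $\no(i)=n$ iff $g(y)=g(x_i)$ iff $\model,g\models y=x_i$. The Boolean cases are routine. For a test $[\varphi]$ one applies claim~(i) to the subexpression $\varphi$; for a composition $\alpha\beta$, the intermediate node $l\in M$ demanded by the semantics of $\model,m,n\models\alpha\beta$ is precisely the witness for the existentially quantified $z$ in $\st'_{x,y}(\alpha\beta)$, and conversely, so two uses of claim~(ii) close the case. The data-comparison cases $\tup{\alpha=_e\beta}$ and $\tup{\alpha\neq_e\beta}$ are analogous: the witnesses $n,l\in M$ from the semantics match the variables $y,z$ bound by $\exists$ in $\st'_x$, with $n\eqrel_e l$ (respectively $n\not\eqrel_e l$) corresponding to $D_e(y,z)$ (respectively $\neg D_e(y,z)$) once $D_e$ is interpreted as $\eqrel_e$; each direction invokes claim~(i)/(ii) on $\alpha$ and on $\beta$.

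To pass from $\st'_x$ to $\st_x$, observe that in any abstract hybrid data model each $\eqrel_e$ is by definition an equivalence relation, so $\model\models Eq(e)$ under any assignment; hence the conjunction $\bigwedge\{Eq(e)\mid {=}_e\text{ appears in }\varphi\}$ is satisfied, and $\model,g\models\st_x(\varphi)$ iff $\model,g\models\st'_x(\varphi)$. Combined with the induction this yields the statement.

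I do not expect a genuine obstacle here; the only point requiring care is the fresh-variable bookkeeping in Definition~\ref{def:standtr}, since each recursive step introduces variables "not yet used". To run the induction cleanly one should record that the free variables of $\st'_x(\varphi)$ lie among $\{x\}$ together with the variables $x_i$ for nominals $i$ occurring in $\varphi$, and similarly for $\st'_{x,y}(\alpha)$; since the truth value of a first-order formula depends only on the assignment restricted to its free variables, the assignments $g$ produced or consumed at each inductive step can be extended and restricted without clashes.
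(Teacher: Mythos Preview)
Your proposal is correct and is precisely the argument the paper has in mind: the paper gives no detailed proof, merely stating that the proposition holds ``since the standard translation mimics the semantic clauses for $\hxpd$-formulas''. Your simultaneous induction on node and path expressions, together with the observation that the $Eq(e)$ conjuncts are trivially satisfied in any abstract hybrid data model, spells this out fully.
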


Notice, in the proposition above, that
$\model=\tup{M,\{\eqrel_e\}_{e\in\eq},\{R_{\dowa}\}_{\dowa\in\mo},\lbl,\no}$
is used to interpret $\st_x(\varphi)$ with $\lbl(p)$ interpreting the
unary relation symbol $P$, each accessibility relation $R_{\dowa}$
interpreting the binary relation symbol $R_{\dowa}$ and each relation
$\eqrel_e$ interpreting the binary relation $D_e$.

\paragraph{\bf Pure Axioms and Existential Saturation Rules.}
We have now all the ingredients needed to define pure axioms and existential saturation rules,
together with their associated frame conditions.

\begin{definition}[Pure Axioms]
  We say that a formula (or in particular, an axiom) is \emph{pure} if 
  it does not contain any
  occurrences of propositional symbols.  Let $\Pi$ be a set of pure
  axioms, we define $\fc(\Pi)$, the \emph{frame condition associated
    to $\Pi$} as the universal closure of the standard translation of
  the axioms in $\Pi$, i.e.,
$$
\fc(\Pi)=\bigwedge \set{\forall x.\forall {x_{i_1}} \ldots \forall{x_{i_n}}.\st_x(\varphi) \mid \varphi\in\Pi,
  i_1,\ldots,i_n \text{ all the nominals in $\varphi$}}. 
$$
\end{definition}

\begin{definition}[Existential Saturation Rules]\label{def:existrule} 
  Let $\varphi(i_1,\dots,i_n,j_1,\ldots,j_m)$ be an $\hxpd$ formula
  with no propositional symbols such that
  $i_1,\dots,i_n,j_1,\ldots,j_m$ is an enumeration of all nominals
  appearing in $\varphi$.  An {\em existential saturation rule} is a
  rule of the form
$$
\begin{array}{l@{\ \ \ }l}
\prooftree
\vdash \varphi(i_1,\dots,i_n,j_1,\ldots,j_m) \ra \psi
\justifies
\vdash \psi
\endprooftree &
\begin{minipage}{7cm} \small
provided that $j_1,\ldots,j_m$ do not occur in $\psi$.
\end{minipage}
\end{array}
$$ 
\noindent 
We will call $\varphi$ the \emph{head} of the rule, $i_1,\ldots,i_n$
its \emph{universally quantified nominals} (notation:
$\varphi_\forall$), and $j_1,\ldots,j_m$ its \emph{existentially
  quantified nominals} (notation: $\varphi_\exists$)\footnote{Notice
  that given a particular $\varphi$, different existential saturation
  rules can be defined depending on which nominals are listed in its side
  condition.}.
Let $\Rho$ be a set of existential saturation
rules, we define $\fc(\Rho)$, the \emph{frame condition associated to
  $\Rho$} as follows:
$$
\begin{array}{l@{\ }l}
\fc(\Rho)=\bigwedge \{\forall x.\forall x_{i_1} \ldots \forall x_{i_n}.\exists
  x_{j_1}\ldots \exists x_{j_m}.\st_x(\varphi) \mid 
& \text{$\varphi$ is the head of a rule $\rho \in \Rho$,}  \\
& \text{$i_k \in  \varphi_\forall$ and $j_l \in \varphi_\exists$\}}.
\end{array}
$$
\end{definition}

Let $\Pi$ be a set of pure formulas and let $\Rho$ be a set of
existential saturation rules, in what follows we write
$\ahxpd+\Pi+\Rho$ for the axiomatic system $\ahxpd$ extended with
$\Pi$ as additional axioms and $\Rho$ as additional inference rules.
In the coming sections we will show that this system is strongly
complete with respect to the class of models based on frames satisfying the frame condition $\fc(\Pi)\wedge\fc(\Rho)$.

\begin{example}
  The axiom $p\ra[\dowa]\tup{\dowb}p$ is not pure, since it contains
  the propositional symbol $p$. On the other hand,
  $\Pi=\set{i\ra[\dowa]\tup{\dowb}i, i\ra[\dowb]\tup{\dowa}i}$ is
  a set of pure axioms.  Its corresponding frame condition $\fc(\Pi)$
  indicates that $R_{\dowb}$ is the inverse of $R_{\dowa}$.  Indeed, $\fc(\Pi)$
  is equivalent to $\forall x.\forall y.(R_{\dowa}(x,y) \leftrightarrow R_{\dowb}(y,x))$.

  Consider the formula $@_i\tup{\dowa}i$. It can be used to define
  two different existential saturation rules:
$$
\begin{array}{l@{\ \ \ \ }l}
\rho_1 = \prooftree
\vdash @_i \tup{\dowa}i \ra \psi
\justifies
\vdash \psi
\endprooftree 
&
\mbox{\small (no side condition)} 
\end{array}
$$
and 
$$
\begin{array}{l@{\ \ \ \ }l}
\rho_2 = \prooftree
\vdash @_i \tup{\dowa}i \ra \psi
\justifies
\vdash \psi
\endprooftree 
&
\mbox{\small where $i$ does not occur in $\psi$.} \\
\end{array}
$$
$\fc(\{\rho_1\})$ is equivalent to $\forall x. R_{\dowa}(x,x)$ while
$\fc(\{\rho_2\})$ is equivalent to $\exists x.R_{\dowa}(x,x)$.

Consider now the rule corresponding to the \emph{Church-Rosser property} discussed in~\cite{BtC06}:
$$
\begin{array}{l@{\ \ \ \ }l}
\prooftree
\vdash (@_i \tup{\dowa}j \wedge @_i\tup{\dowa}k \ra @_j \tup{\dowa}l \wedge @_k\tup{\dowa}l) \ra \psi
\justifies
\vdash \psi
\endprooftree 
&
\mbox{\small where $l$ does not occur in $\psi$.}
\end{array}
$$
The frame condition corresponding to this rule is:
$$
\forall x.\forall x_i.\forall x_j.\forall x_k.\exists x_l.\st_x(@_i \tup{\dowa}j \wedge @_i\tup{\dowa}k \ra @_j \tup{\dowa}l \wedge @_k \tup{\dowa}l),
$$
which is equivalent to $\forall x_i.\forall x_j.\forall x_k.\exists x_l.(R_{\dowa}(x_i,x_j) \wedge R_{\dowa}(x_i,x_k) \to R_{\dowa}(x_j,x_l) \wedge R_{\dowa}(x_k,x_l))$.
\end{example}

It is not difficult to see that any pure axiom $\pi$ is equivalent to
the rule that uses $\pi$ as head without side conditions (i.e.,
$\pi_\exists$ is empty).  In that sense, any axiomatic system $\ahxpd
+ \Pi + \Rho$ is equivalent to some $\ahxpd + \Rho'$, as pure axioms
do not introduce additional expressive power.  On the other
hand, properties that mix both universal and existential
quantification like the Church-Rosser property mentioned above cannot
be captured using only pure axioms.  However, axioms are simpler than rules with side
conditions; and pure axioms are expressive enough to characterize many
interesting properties.


\subsection{Soundness and Completeness}
\label{subsec:complete}
It is a fairly straightforward exercise to prove that the axioms and
rules of $\ahxpd$ are sound with respect to the class of all
models. Similarly, any set $\Pi$ of pure axioms is sound with respect
to the class of models obtained from frames satisfying $\fc(\Pi)$, and any set $\Rho$ of existential saturation rules is sound with respect to the class of frames 
satisfying $\fc(\Rho)$ (the proof is similar to the one provided in~\cite{BtC06}).

\begin{theorem}[Soundness]
  Let $\Pi$ be a set of pure axioms and let $\Rho$ be a set of existential 
  saturation rules.
    All the axioms and rules from $\ahxpd$ are valid over the class of abstract 
  hybrid data models satisfying the frame condition $\fc(\Pi)\wedge\fc(\Rho)$. 
\end{theorem}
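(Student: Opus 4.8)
The plan is to prove soundness by the standard case analysis, keeping the core of $\ahxpd$ separate from the contributions of $\Pi$ and $\Rho$. None of the axioms or rules of $\ahxpd$ itself refers to the frame conditions, so it suffices to show that each of them is valid, resp.\ validity-preserving, over the class of \emph{all} abstract hybrid data models; validity over the subclass cut out by $\fc(\Pi)\wedge\fc(\Rho)$ is then immediate. For the propositional part and the \emph{K}/\emph{Nec} pair I would simply unwind the semantic clauses, using Proposition~\ref{coro:hybridclassic} to read $[\alpha]\varphi$, $\tup{\dowa}\varphi$ and $@_i\varphi$ in their expected modal way. For the $@$-axioms (\emph{$@$-self-dual}, \emph{$@$-intro}, \emph{$@$-refl}, \emph{agree}, \emph{back}) I would check each directly against the clause $\model,m,n\models@_i \iff \no(i)=n$: e.g.\ \emph{back} holds because a path whose last component is $@_i$ forces its endpoint to equal $\no(i)$, so $\tup{\gamma@_i\alpha*\beta}$ and $\tup{@_i\alpha*\beta}$ witness the same data comparison. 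The path axioms fall out of the semantics of composition and of $\epsilon=[\top]$, and the data axioms from the fact that each $\eqrel_e$ is an equivalence relation together with the clauses for $\tup{\alpha*\beta}$ --- all routine.

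The rules \emph{name} and \emph{paste} carry the only non-routine ingredient of the core part, the ``fresh nominal'' manoeuvre, which I would state once and reuse for $\Rho$. For \emph{name}: suppose $@_j\varphi$ is valid over the relevant class and $j$ is absent from $\varphi$; given a model $\model$ in the class and a point $m$, pass to the model $\model'$ that agrees with $\model$ on everything except $\no'(j)\coloneqq m$. Since $\fc(\Pi)\wedge\fc(\Rho)$ are first-order sentences over the relational signature of the frame (all nominal variables being bound in $\fc$), and $\model'$ has the very same frame as $\model$, the model $\model'$ is again in the class; so $\model',\no'(j)\models\varphi$, i.e.\ $\model',m\models\varphi$, and as $j$ does not occur in $\varphi$ this gives $\model,m\models\varphi$. \emph{paste} is the same, with $\no'(j)$ chosen to be the intermediate node witnessing $\tup{@_i\dowa\alpha*\beta}$, i.e.\ an $R_{\dowa}$-successor of $\no(i)$ from which the remaining path proceeds.

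It remains to handle $\Pi$ and $\Rho$. For $\varphi\in\Pi$ with nominals $i_1,\dots,i_n$ and any model $\model$ in the class: $\model$ satisfies $\forall x\,\forall x_{i_1}\cdots\forall x_{i_n}.\st_x(\varphi)$, so for any point $m$ and the assignment $g$ with $g(x)=m$, $g(x_{i_k})=\no(i_k)$ we get $\model,g\models\st_x(\varphi)$, whence $\model,m\models\varphi$ by Proposition~\ref{prop:stpreserves}; as $m$ was arbitrary, $\varphi$ is valid in $\model$. For a rule in $\Rho$ with head $\varphi(i_1,\dots,i_n,j_1,\dots,j_m)$, side condition ``$j_1,\dots,j_m$ absent from $\psi$'', and premise ``$\varphi\ra\psi$ valid over the class'': given $\model$ in the class and a point $m$, the fact that $\model$ satisfies $\forall x\forall x_{i_1}\cdots\forall x_{i_n}\exists x_{j_1}\cdots\exists x_{j_m}.\st_x(\varphi)$ supplies witnesses $d_1,\dots,d_m\in M$ with $\model,g[x_{j_l}\mapsto d_l]\models\st_x(\varphi)$ for $g$ as above. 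Passing to $\model'$ with $\no'(j_l)\coloneqq d_l$ (again in the class, since it has the same frame), Proposition~\ref{prop:stpreserves} gives $\model',m\models\varphi$, hence $\model',m\models\psi$ by the premise, and finally $\model,m\models\psi$ because $\psi$ contains none of $j_1,\dots,j_m$. I expect this ``rename the existential witness into a fresh nominal'' step to be the one point that needs care --- it is where the shape of $\fc(\Rho)$ through $\st_x$ and the side condition on $\Rho$ are both genuinely used; everything else is bookkeeping along the lines of~\cite{BtC06}.
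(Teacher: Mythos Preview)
Your proposal is correct and follows precisely the approach the paper gestures at: the paper gives no proof beyond declaring the core of $\ahxpd$ ``a fairly straightforward exercise'' and deferring the treatment of $\Pi$ and $\Rho$ to the analogous argument in~\cite{BtC06}. Your write-up supplies exactly the details omitted there --- the semantic unwinding for the core axioms, the fresh-nominal reinterpretation for \emph{name}/\emph{paste}, and the use of Proposition~\ref{prop:stpreserves} together with the frame-only nature of $\fc(\Pi)\wedge\fc(\Rho)$ to handle pure axioms and existential saturation rules --- so there is nothing to compare beyond noting that you have spelled out what the paper left implicit.
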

 
Now we will devote ourselves to show that the axiomatic system is also strongly complete. 
The completeness argument follows the lines of the completeness proof
for $\hl$ and similar approaches (see, e.g.,~\cite{Gold1984,BtC06,SchroderP10}), 
which is a Henkin-style proof with
nominals playing the role of first-order constants. 

We will prove in this section that the system $\ahxpd + \Pi + \Rho$ is
\emph{strongly complete} with respect to the class of abstract hybrid
data models obtained from frames satisfying $\fc(\Pi) \wedge \fc(\Rho)$.  More
precisely, given a particular extension $\ahxpd + \Pi + \Rho$, we will
show that if $\mathfrak{C} = \mathsf{Mod(\mathfrak{F})}$ for
$\mathfrak{F}$ the class of frames satisfying
$\fc(\Pi) \wedge \fc(\Rho)$, then $\Gamma\models_\mathfrak{C}\varphi$
implies $\Gamma\vdash_{\ahxpd + \Pi + \Rho}\varphi$, where $\Gamma\cup\set{\varphi}$
is a set of $\hxpd$-formulas. 
%
%
%
Or, equivalently, we need to show that every consistent set of formulas (for
$\ahxpd + \Pi + \Rho$) is satisfiable in some abstract hybrid data
model (in $\mathfrak{C}$).
Recall that subscripts in relations $\vdash$ and $\models$ are ommited when they are clear from the context.

\begin{definition}\label{def:mcs}
  Let $\Gamma$ be a set of formulas, we say that $\Gamma$ is an
  $\ahxpd + \Pi + \Rho$ \emph{maximal consistent set} (MCS for
  short) if and only if $\Gamma\nvdash\bot$ and for all
  $\varphi\notin\Gamma$ we have $\Gamma\cup\{\varphi\}\vdash\bot$.
\end{definition}

\begin{proposition}\label{prop:mcs}
Let $\Gamma$ be an MCS. Then, the following facts hold:
\begin{enumerate}
	\item If $\set{i,\varphi}\subseteq\Gamma$ then $@_i\varphi\in\Gamma$.
	\item If $@_i\tup{\alpha=_e\beta}\in\Gamma$ then $\tup{@_i\alpha=_e@_i\beta}\in\Gamma$.
	\item If $\tup{\alpha=_e@_i\beta}\in\Gamma$ then $\tup{\alpha=_e@_j@_i\beta}\in\Gamma$.
\end{enumerate}	
\end{proposition}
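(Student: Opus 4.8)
The three statements all have the shape ``if certain formulas belong to $\Gamma$ then another formula belongs to $\Gamma$'', so the plan is to reduce each of them to exhibiting an appropriate theorem of $\ahxpd$ (hence of $\ahxpd+\Pi+\Rho$) and then invoke the standard closure properties of a maximal consistent set. Concretely, I would first recall (from the usual Lindenbaum-style construction, carried out elsewhere in the completeness argument) that every MCS $\Gamma$ contains all theorems of the system, is closed under modus ponens, is closed under conjunction, and is deductively closed. Granting this, for each item it suffices to derive a theorem of the form ``(conjunction of the hypotheses) $\ra$ (conclusion)'' and apply modus ponens finitely many times.

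For item~(1): since $i\in\Gamma$ and $\varphi\in\Gamma$ and $\Gamma$ is closed under conjunction, $i\wedge\varphi\in\Gamma$; the derived theorem \emph{$@$-intro'} gives $\vdash(i\wedge\varphi)\ra@_i\varphi$, so $@_i\varphi\in\Gamma$ by modus ponens. For item~(2), the key observation is that $@_i\psi$ is an abbreviation of $\tup{@_i}\psi$, so the axiom \emph{comp$*$-dist}, instantiated with the path expression $@_i$ in place of $\alpha$ and with ${*}={=_e}$, reads $\vdash @_i\tup{\alpha=_e\beta}\ra\tup{@_i\alpha=_e@_i\beta}$; since $@_i\tup{\alpha=_e\beta}\in\Gamma$, modus ponens gives $\tup{@_i\alpha=_e@_i\beta}\in\Gamma$. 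For item~(3), I would chain \emph{$*$-comm}, \emph{agree} and \emph{$*$-comm} again: $\vdash\tup{\alpha=_e@_i\beta}\lra\tup{@_i\beta=_e\alpha}\lra\tup{@_j@_i\beta=_e\alpha}\lra\tup{\alpha=_e@_j@_i\beta}$, where the middle equivalence is the instance of \emph{agree} obtained by reading $@_i\beta$ as the first path argument; then modus ponens from $\tup{\alpha=_e@_i\beta}\in\Gamma$ closes the case.

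There is no genuine obstacle here; the whole argument is bookkeeping with MCS closure and modus ponens. The only point that deserves a second of care is, in~(2), recognising that once the abbreviation $@_i\varphi\equiv\tup{@_i}\varphi$ is unfolded the required implication is literally an instance of \emph{comp$*$-dist} (a legitimate instantiation, since $@_i$ is a path expression and the axioms are schemes over arbitrary path expressions). I would also note in passing that \emph{agree} and \emph{$*$-comm} are stated for an arbitrary comparison symbol, so the same derivations would apply verbatim with $\neq_e$ in place of $=_e$ should one want the corresponding variants of~(2) and~(3).
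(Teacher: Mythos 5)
Your proof is correct and follows essentially the same route as the paper, which simply cites \emph{$@$-intro'}, the distribution of $@$ over data comparisons, and \emph{agree} plus \emph{$*$-comm} for the three items respectively. You are in fact slightly more precise than the paper on item~(2): the paper's ``$@{=}$-dist'' names the needed implication $@_i\tup{\alpha=_e\beta}\ra\tup{@_i\alpha=_e@_i\beta}$ without saying where it comes from, and you correctly identify it as the instance of \emph{comp$*$-dist} with the path expression $@_i$.
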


\begin{proof}
Item 1 is a consequence of \emph{$@$-intro'}, 2 follows from \emph{$@{=}$-dist} and 
3 can be proved using {\em agree} and {\em $=$-comm}.	
\end{proof}

The next fact follows from the definition of MCS, as expected:

\begin{fact}\label{fact:mcs}
Let $\Gamma$ be an MCS. Then for all $\varphi$, either
$\varphi\in\Gamma$ or $\neg\varphi\in\Gamma$.	
\end{fact}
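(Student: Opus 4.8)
The plan is to first show that every MCS is deductively closed, and then read off the fact as an immediate consequence of this together with the defining property of an MCS.

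The auxiliary claim I would establish is: if $\Gamma$ is an MCS and $\Gamma\vdash\psi$, then $\psi\in\Gamma$. I would argue by contradiction. If $\psi\notin\Gamma$, then by the definition of MCS we have $\Gamma\cup\set{\psi}\vdash\bot$, so there is a finite $\Delta\subseteq\Gamma$ with $\vdash\bigwedge\Delta\wedge\psi\ra\bot$. On the other hand, $\Gamma\vdash\psi$ provides a finite $\Delta'\subseteq\Gamma$ with $\vdash\bigwedge\Delta'\ra\psi$. Since $\ahxpd$ (and hence every extension $\ahxpd+\Pi+\Rho$) contains a complete set of propositional axioms and rules, propositional reasoning on finite conjunctions of hypotheses gives $\vdash\bigwedge(\Delta\cup\Delta')\ra\bot$; as $\Delta\cup\Delta'$ is a finite subset of $\Gamma$, this means $\Gamma\vdash\bot$, contradicting $\Gamma\nvdash\bot$.

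Then I would conclude as follows. Let $\varphi$ be an arbitrary formula. If $\varphi\in\Gamma$ there is nothing to prove. Otherwise $\varphi\notin\Gamma$, so by the definition of MCS, $\Gamma\cup\set{\varphi}\vdash\bot$; hence there is a finite $\Delta\subseteq\Gamma$ with $\vdash\bigwedge\Delta\wedge\varphi\ra\bot$, which is propositionally equivalent to $\vdash\bigwedge\Delta\ra\neg\varphi$, i.e.\ $\Gamma\vdash\neg\varphi$. By the auxiliary claim, $\neg\varphi\in\Gamma$. (If one also wants the disjunction to be exclusive, note that $\set{\varphi,\neg\varphi}\subseteq\Gamma$ would give $\Gamma\vdash\bot$ by propositional logic, again contradicting consistency.)

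There is no genuine obstacle here: this is the standard maximal-consistent-set argument, and the only ingredient used beyond the definitions of $\vdash$ and of an MCS is that the axiomatic system is propositionally complete, so all the required manipulations of finite conjunctions of premises are derivable.
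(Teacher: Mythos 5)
Your proof is correct and is exactly the standard maximal-consistent-set argument (deductive closure plus the observation that $\varphi\notin\Gamma$ forces $\Gamma\vdash\neg\varphi$); the paper itself states this fact without proof, remarking only that it ``follows from the definition of MCS, as expected,'' and your argument is the one being implicitly invoked. The only ingredients you use --- the finitary definition of $\vdash$ and propositional completeness of the base system --- are indeed all that is needed.
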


So far, we presented an axiom system together with the standard tools
for proving its completeness. We also introduced non-orthodox rules
(i.e., rules with side conditions), which will play a crucial role in the
Henkin-style model we will build for proving completeness.  
The \emph{paste} rule expresses that path expressions
can control what happens in accessible states from a named state. 
The \emph{name} rule says that if $\varphi$ is provable to hold
in an arbitrary state named by $j$, then $\varphi$ is also provable.
Now we introduce some properties that will be required in the construction of 
the Henkin-style model.

\begin{definition}[Named, Pasted and $\rho$-saturated sets]\label{def:pasted}
  Let $\Sigma$ be a set of $\hxpd$-formulas. 
\begin{itemize}
  \item We say that $\Sigma$ is \emph{named} if
  for some nominal $i$ we have that $i \in \Sigma$ (and we will 
  say that $\Sigma$ is named by $i$). 
  \item We say that $\Sigma$ is {\em pasted} if the following holds:
	\begin{enumerate}
	\item $\tup{@_i\dowa\alpha =_e\beta}\in \Sigma$
	implies that for some nominal $j$, $@_i\tup{\dowa}j\wedge\tup{@_j\alpha =_e \beta}\in\Sigma$. 
	\item $\tup{@_i\dowa\alpha \not =_e \beta}\in \Sigma$
	implies that for some nominal $j$,
	$@_i\tup{\dowa}j\wedge\tup{@_j\alpha \not =_e\beta}\in\Sigma$.
	\end{enumerate}
      \item For $\varphi$ a formula and
        $i_1,\ldots,i_n,j_1,\ldots,j_n \in \nom$, let
        $\varphi[i_1/j_1,\ldots,j_1/j_n]$ be the simultaneous
        substitution of $i_k$ by $j_k$, $1 \le k \le n$ in $\varphi$.
        Let $\rho$ be an existential saturation rule with head
        $\varphi$, $\varphi_\forall = i_1,\ldots,i_n$ and
        $\varphi_\exists = j_1,\ldots,j_m$.  Then $\Sigma$ is
        \emph{$\rho$-saturated} if for all nominals
        $i'_1,\ldots,i'_n \in \Sigma$,
        $\varphi[i_1/i'_1,\ldots,i_n/i'_n,j_i/j'_n,\ldots,j_m/j'_m]\in\Sigma$
        for some $j'_1,\ldots,j'_m \in \nom$. For $\Rho$ a set of
        existential saturation rules, \emph{$\Sigma$ is
          $\Rho$-saturated} if it is $\rho$-saturated for all
        $\rho \in \Rho$.
\end{itemize}
\end{definition}

Now we are going to prove a crucial property in our completeness
proof: the \emph{Extended Lindenbaum Lemma}. Intuitively, it says that
the rules of $\ahxpd + \Pi + \Rho$ allow us to extend MCSs to \emph{named and
  pasted} MCSs, provided we enrich the language with new
nominals. This lemma will be useful to obtain the models we need from
an MCS.

\begin{lemma}[Extended Lindenbaum Lemma]\label{lemma:lindenbaum}
Let $\nom'$ be a (countably) infinite set of nominals disjoint from $\nom$, and
let $\hxpd'$ be the language obtained by adding these new nominals to $\hxpd$.
Then, every consistent set of formulas in $\hxpd$ can be extended to
a named and pasted MCS in $\hxpd'$.
\end{lemma}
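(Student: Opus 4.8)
The plan is a Henkin-style construction tailored to the hybrid XPath setting: first \emph{name} $\Gamma$ with a fresh nominal, then run a Lindenbaum enumeration in which every formula of the pasting-trigger form $\tup{@_i\dowa\alpha*\beta}$ is immediately supplied with a witness built from a brand-new nominal. For the naming step I would first recall that $\hxpd'$ is a conservative extension of $\hxpd$ (a standard fact for hybrid systems equipped with the \emph{name} rule, cf.~\cite{BtC06}), so $\Gamma$ remains consistent over $\hxpd'$; I would then choose $k\in\nom'$ not occurring in $\Gamma$ and set $\Gamma_0=\Gamma\cup\{k\}$. Consistency of $\Gamma_0$ follows from the derived rule \emph{name'}: were $\Gamma_0$ inconsistent, there would be a finite $\Gamma'\subseteq\Gamma$ with $\vdash k\to\neg\bigwedge\Gamma'$, and since $k$ does not appear in $\neg\bigwedge\Gamma'$ the rule would give $\vdash\neg\bigwedge\Gamma'$, contradicting the consistency of $\Gamma$.

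Next I would fix an enumeration $\varphi_0,\varphi_1,\dots$ of all $\hxpd'$-formulas in which every formula occurs infinitely often, and build an increasing chain $\Gamma_0\subseteq\Gamma_1\subseteq\cdots$ of consistent sets. At stage $n$: if $\Gamma_n\cup\{\varphi_n\}$ is inconsistent, put $\Gamma_{n+1}=\Gamma_n$; otherwise let $\Gamma_n^{+}=\Gamma_n\cup\{\varphi_n\}$ and, in case $\varphi_n$ has the form $\tup{@_i\dowa\alpha*\beta}$, choose a nominal $j\in\nom'$ not occurring in $\Gamma_n^{+}$ (possible, since only finitely many nominals of $\nom'$ have been introduced so far) and set $\Gamma_{n+1}=\Gamma_n^{+}\cup\{@_i\tup{\dowa}j\wedge\tup{@_j\alpha*\beta}\}$; in every other case $\Gamma_{n+1}=\Gamma_n^{+}$. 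When $\Rho\neq\emptyset$ one interleaves, in exactly the same way, one witness instance of the head of each $\rho\in\Rho$ (its universal nominals replaced by nominals already occurring, its existential nominals by fresh ones) for each matching tuple, so that the limit will also be $\Rho$-saturated. I would then take $\Gamma_\omega=\bigcup_n\Gamma_n$.

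The technical heart — and the step I expect to be the main obstacle — is showing that each $\Gamma_{n+1}$ is still consistent, specifically when a paste-witness is adjoined. I would argue by contradiction: if $\Gamma_n\cup\{\tup{@_i\dowa\alpha*\beta}\}$ is consistent but $\Gamma_n\cup\{\tup{@_i\dowa\alpha*\beta},\ @_i\tup{\dowa}j\wedge\tup{@_j\alpha*\beta}\}$ is not, there is a finite $\Delta\subseteq\Gamma_n$ with $\vdash @_i\tup{\dowa}j\wedge\tup{@_j\alpha*\beta}\to\neg(\bigwedge\Delta\wedge\tup{@_i\dowa\alpha*\beta})$; since $j$ was chosen fresh it is different from $i$ and occurs in none of $\alpha$, $\beta$, $\Delta$, so the \emph{paste} rule applies and yields $\vdash\tup{@_i\dowa\alpha*\beta}\to\neg(\bigwedge\Delta\wedge\tup{@_i\dowa\alpha*\beta})$, which is propositionally equivalent to $\vdash\tup{@_i\dowa\alpha*\beta}\to\neg\bigwedge\Delta$, i.e.\ $\Gamma_n\cup\{\tup{@_i\dowa\alpha*\beta}\}\vdash\bot$, a contradiction. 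The delicate points are the propositional rearrangements and, above all, verifying that the chosen nominal satisfies every side condition of \emph{paste} — this is precisely where the unorthodox rule earns its keep. The $\Rho$-saturation witnesses are handled by an identical argument with the relevant existential saturation rule replacing \emph{paste} and using the freshness of the newly introduced nominals.

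Finally I would check that $\Gamma_\omega$ has the required properties. It is consistent, since any derivation of $\bot$ from $\Gamma_\omega$ uses a finite subset that lies in some $\Gamma_n$, which is consistent by construction. It is maximal: if $\varphi\notin\Gamma_\omega$ then at each of the infinitely many stages $n$ with $\varphi_n=\varphi$ the set $\Gamma_n\cup\{\varphi\}$ was inconsistent, so $\Gamma_n\vdash\neg\varphi$ and hence $\Gamma_\omega\cup\{\varphi\}\vdash\bot$; combined with consistency this makes $\Gamma_\omega$ an MCS. It is named, since $k\in\Gamma_0\subseteq\Gamma_\omega$. And it is pasted: if $\tup{@_i\dowa\alpha*\beta}\in\Gamma_\omega$ it already belongs to some $\Gamma_N$, and at any stage $n\geq N$ with $\varphi_n=\tup{@_i\dowa\alpha*\beta}$ the set $\Gamma_n\cup\{\varphi_n\}=\Gamma_n$ is consistent, so the witness $@_i\tup{\dowa}j\wedge\tup{@_j\alpha*\beta}$ was put into $\Gamma_{n+1}\subseteq\Gamma_\omega$ — yielding both clauses of pastedness (and, by the interleaved steps, $\Rho$-saturation). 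This completes the plan.
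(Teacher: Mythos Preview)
Your proof is correct and follows essentially the same Henkin-style construction as the paper: name with a fresh nominal (justified via \emph{name'}), run a Lindenbaum enumeration adding paste-witnesses for formulas of the form $\tup{@_i\dowa\alpha*\beta}$, and verify consistency of the witness step via the \emph{paste} rule. Two minor remarks: the paper uses a plain enumeration (each formula once) rather than one with infinite repetitions, which already suffices since the chain is increasing; and your interleaved $\Rho$-saturation steps go beyond what this lemma states---in the paper that is handled separately in the Rule Saturation Lemma (Lemma~\ref{lemma:saturation}), which layers the $\Rho$-witnessing on top of repeated applications of the present lemma.
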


\begin{proof}
  Enumerate $\nom'$ and let $k$ be the first nominal in the
  enumeration. Given $\Sigma$ a consistent set in $\hxpd$,
  $\Sigma\cup\{k\}$ is consistent, otherwise for some conjunction
  $\theta$ from $\Sigma$, $\vdash k\ra \neg\theta$.  By the {\em name'}
  rule, $\vdash\neg\theta$, contradicting the consistency of $\Sigma$.

  Now enumerate all formulas in $\hxpd'$. Define
  $\Sigma^0 = \Sigma \cup \{k\}$ and suppose we have defined
  $\Sigma^n$, for $n\geq 0$.  Let $\varphi_{n+1}$ be the $(n+1)^{th}$
  formula in the enumeration of $\hxpd'$. Define $\Sigma_{n+1}$ as
  follows.  If $\Sigma^n\cup\{\varphi_{n+1}\}$ is inconsistent, then
  $\Sigma^{n+1}=\Sigma^n$. Otherwise:
\begin{enumerate}
	\item $\Sigma^{n+1}=\Sigma^n\cup\{\varphi_{n+1}\}$ if $\varphi_{n+1}$ is not of the form
	$\tup{@_i\dowa\alpha*\beta}$.
	\item $\Sigma^{n+1}=\Sigma^n \cup\{\varphi_{n+1}\}\cup\{@_i\tup{\dowa}j\wedge\tup{@_j\alpha*\beta}\}$,
	if $\varphi_{n+1}$ is of the form $\tup{@_i\dowa\alpha*\beta}$.
	Here $j$ is the first nominal in the enumeration that does not occur in any formula of $\Sigma^n$ or 
	$\tup{@_i\dowa\alpha*\beta}$.
\end{enumerate}

Let $\Sigma^\omega=\bigcup_{n\geq 0}\Sigma^n$. This set is named (by $k$), maximal and pasted.
Furthermore, it is consistent as a direct consequence of the {\em paste} rule.
\end{proof}

\begin{lemma}[Rule Saturation Lemma]\label{lemma:saturation}
  Let $\nom'$ be a (countably) infinite set of nominals disjoint from
  $\nom$, and let $\hxpd'$ be the language obtained by adding these
  new nominals to $\hxpd$.  Let $\Pi$ be a set of pure axioms, and
  $\Rho$ be a set of existential saturation rules.  Then, every
  consistent set of formulas in $\hxpd$ can be extended to a named,
  pasted and $\Rho$-saturated MCS in $\hxpd'$.
\end{lemma}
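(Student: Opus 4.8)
The plan is to reuse, essentially verbatim, the construction in the proof of Lemma~\ref{lemma:lindenbaum} (the Extended Lindenbaum Lemma) and weave in, at extra stages, the instances demanded by the side conditions of the rules in $\Rho$. As before we work in the enriched language $\hxpd'$, fix an enumeration of the fresh countable nominal supply $\nom'$, let $k$ be its first element, and start from $\Sigma^0=\Sigma\cup\{k\}$, which is consistent by the \emph{name'} rule exactly as in Lemma~\ref{lemma:lindenbaum}.

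Next I would set up a single dovetailed enumeration of two kinds of tasks: (a) all formulas of $\hxpd'$, and (b) all \emph{saturation requests} $\tup{\rho,\tup{i'_1,\dots,i'_n}}$ where $\rho\in\Rho$ has head $\varphi$ with $\varphi_\forall=i_1,\dots,i_n$ and $i'_1,\dots,i'_n\in\nom\cup\nom'$; since $\Rho$ and $\nom\cup\nom'$ are countable, there are only countably many requests, and I arrange the interleaving so that every request occurs infinitely often. Given $\Sigma^n$, the $(n{+}1)$-st task is processed as follows. If it is a formula $\varphi_{n+1}$, we do exactly what Lemma~\ref{lemma:lindenbaum} prescribes: keep $\Sigma^{n+1}=\Sigma^n$ if $\Sigma^n\cup\{\varphi_{n+1}\}$ is inconsistent, and otherwise add $\varphi_{n+1}$, together with $@_i\tup{\dowa}j\wedge\tup{@_j\alpha*\beta}$ for a fresh $j\in\nom'$ when $\varphi_{n+1}$ has the form $\tup{@_i\dowa\alpha*\beta}$. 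If it is a request $\tup{\rho,\tup{i'_1,\dots,i'_n}}$ whose nominals $i'_1,\dots,i'_n$ all occur in $\Sigma^n$, choose fresh nominals $j'_1,\dots,j'_m\in\nom'$ not occurring in $\Sigma^n$ and put $\Sigma^{n+1}=\Sigma^n\cup\{\varphi[i_1/i'_1,\dots,i_n/i'_n,j_1/j'_1,\dots,j_m/j'_m]\}$; if some $i'_k$ does not yet occur in $\Sigma^n$, set $\Sigma^{n+1}=\Sigma^n$ and leave the request for a later occurrence. Finally let $\Sigma^\omega=\bigcup_{n\ge0}\Sigma^n$.

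The verification is then a small extension of the one for Lemma~\ref{lemma:lindenbaum}. Consistency of each $\Sigma^n$ is proved by induction on $n$; the formula- and pasting-cases are as before (the latter using \emph{paste}), and the new saturation-case is the analogue argument: if $\Sigma^n\cup\{\varphi[\vec{i'},\vec{j'}]\}$ were inconsistent, then $\vdash\varphi[\vec{i'},\vec{j'}]\ra\neg\theta$ for a finite conjunction $\theta$ of members of $\Sigma^n$; the $j'_1,\dots,j'_m$, being fresh, do not occur in $\neg\theta$, so the existential saturation rule $\rho$ — which, like all axioms and rules here, is a scheme and may be instantiated with the nominals $\vec{i'},\vec{j'}$ — gives $\vdash\neg\theta$, contradicting the consistency of $\Sigma^n$. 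Consistency of $\Sigma^\omega$ follows since $\vdash$ is finitary; maximality and namedness (by $k$) hold as in Lemma~\ref{lemma:lindenbaum}; pastedness holds because any $\tup{@_i\dowa\alpha*\beta}\in\Sigma^\omega$ appears in some $\Sigma^n$ and was handled with a pasting step. For $\Rho$-saturation, fix $\rho\in\Rho$ and nominals $i'_1,\dots,i'_n$ occurring in $\Sigma^\omega$: each occurs already in some $\Sigma^N$, and since $\tup{\rho,\tup{i'_1,\dots,i'_n}}$ recurs infinitely often in the enumeration it is processed at some stage $n\ge N$, where all the $i'_k$ are present, so the required instance is placed in $\Sigma^{n+1}\subseteq\Sigma^\omega$.

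I expect the only delicate point to be the bookkeeping of the interleaved enumeration: every $\Rho$-saturation step introduces new nominals, which spawn new saturation requests, so the enumeration must be organized (each request appearing infinitely often, as above) to guarantee that every request is eventually served at a stage where all its parameter nominals already occur in the current set — all while the pasting obligations keep being discharged and consistency is preserved. The genuinely new logical content, the consistency argument at a saturation step, is a direct transcription of the \emph{paste}-rule argument already used in Lemma~\ref{lemma:lindenbaum}, now using an existential saturation rule instead; this mirrors the treatment in~\cite{BtC06}.
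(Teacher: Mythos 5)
Your proof is correct, and the key logical step --- that adding $\varphi[\vec{i'},\vec{j'}]$ with fresh existential witnesses preserves consistency, via the contrapositive $\vdash\varphi[\vec{i'},\vec{j'}]\ra\neg\theta$ and an application of the existential saturation rule with $\psi=\neg\theta$ --- is exactly the one the paper relies on (and, unlike the paper, you actually spell it out). The organization differs, though: the paper does not dovetail a single enumeration, but builds an $\omega$-chain $\Gamma=\Gamma^0\subseteq(\Gamma^0)^+\subseteq\Gamma^1\subseteq(\Gamma^1)^+\subseteq\cdots$, alternating a rule-saturation pass $(\cdot)^+$ (one sweep through all pairs of a rule and a tuple of parameter nominals, each served with fresh witnesses) with a full application of the Extended Lindenbaum Lemma, and takes the union. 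That modular route reuses Lemma~\ref{lemma:lindenbaum} as a black box, so maximality and pastedness of each $\Gamma^{n+1}$ come for free; its hidden cost is that each $\Gamma^n$ is already maximal in $\hxpd'$, so the ``fresh'' nominals needed by $(\Gamma^n)^+$ must really be drawn from a portion of $\nom'$ reserved for stage $n$ --- a point the paper glosses over and which your one-pass construction sidesteps entirely. Your route instead pays in bookkeeping: you need each request to recur infinitely often so it is served once its parameters have appeared, and you need the small extra observation (which your verification only sketches) that a formula of the form $\tup{@_i\dowa\alpha*\beta}$ entering $\Sigma^\omega$ through a saturation step still acquires its pasting witness when it is met at its own formula-stage of the enumeration, since it must be consistent with every stage. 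Both arguments are sound; yours is a legitimate and arguably more self-contained alternative.
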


\begin{proof}
  Let $\Sigma$ be a consistent set of formulas in $\hxpd$.  First, we
  will show that $\Sigma$ can be extended to some $\Sigma^+$ such that
  for every rule $\rho\in \Rho$ with head $\varphi$,
  $\varphi_\forall = i_1\ldots i_n$, and
  $\varphi_\exists = j_1\ldots j_m$, and for all
  $i'_1,\ldots,i'_n \in\Sigma$, there are nominals
  $j'_1,\ldots,j'_m$ such that
  $\varphi(i_1/i'_1,\ldots,i_n/i'_k,j_1/j'_1,\ldots,j'_m)\in\Sigma^+$.

  Let $\nom'$ be a (countably) infinite set of nominals disjoint from
  $\nom$.  Given $\Sigma$ and $\Rho$, say that a pair $(\rho,\bar i)$
  is well-formed if $\rho \in \Rho$ with head $\varphi$, and $\bar i$
  is a sequence of nominals in $\Sigma$ with length equal to the
  length of $\varphi_\forall$.  For $\Sigma$ and $\Rho$ countable, the
  set of well-formed pairs is countable and can be enumerated. Define
  $\Sigma^0 = \Sigma$.  Let $(\rho_{n+1}, i'_1 \ldots i'_k)$ be the
  $(n+1)^{th}$ pair in the enumeration.  Define $\Sigma^{n+1}$ as
  follows.

\begin{itemize}
\item[]
  $\Sigma^{n+1}=\Sigma^n \cup
  \{\varphi(i_1/i'_1,\ldots,i_k/i'_k,j_1/j'_1,\ldots,j_m/j'_m)\}$
  where $\varphi$ is the head of $\rho_{n+1}$,
  $\varphi_\forall = i_1,\ldots,i_k$,
  $\varphi_\exists = j_1,\ldots,j_m$, and $j'_1,\ldots,j'_m$ are the
  first $m$ nominals in $\nom'$ that do not occur in any formula of $\Sigma^n$.
\end{itemize}

We define $\Sigma^+ = \bigcup_{n\geq 0} \Sigma^n$, which is consistent
and extends $\Sigma$ as described above.

Now consider any consistent set of formulas $\Gamma$.  Let
$\Gamma^0=\Gamma$, and for all $n\geq 0$ let $\Gamma^{n+1}$ be a named
and pasted MCS extending $(\Gamma^n)^+$ (which exists by
Lemma~\ref{lemma:lindenbaum}). Then we have the following chain of
inclusions:
$$
\Gamma=\Gamma^0 \subseteq (\Gamma^0)^+ \subseteq \Gamma^{1} \subseteq (\Gamma^1)^+ \subseteq \ldots
$$
Let $\Gamma^\omega=\bigcup_{n\geq 0}\Gamma^n$. Then $\Gamma^\omega$ is
a named, pasted and $\Rho$-saturated MCS in $\hxpd'$. As we used only
countably many new nominals, this set is also countable.
\end{proof}


From a named and pasted MCS we can extract a model:

\begin{definition}[Extracted Model]\label{def:extracted}
  Let $\Gamma$ be a named and pasted MCS\footnote{For this definition,
    it is irrelevant whether $\Gamma$ is $\Rho$-saturated.}.  Define
  $\model_\Gamma=\tup{M, \{\eqrel_e\}_{e\in\eq}, \{R_{\dowa}\}_{\dowa\in\mo},
    \lbl, \no}$, the \emph{extracted model} from $\Gamma$, as
	\begin{itemize}
	\item $M=\set{\Delta_i \mid \Delta_i = \set{j \mid @_ij \in \Gamma}}$ 
	\item $\Delta_i R_{\dowa} \Delta_j$ iff $@_i\tup{\dowa}j \in \Gamma$
	\item $p\in\lbl(\Delta_i)$ iff $@_ip\in\Gamma$
	\item $\no(i)=\Delta_i$
	\item $\Delta_i \eqrel_e \Delta_j$ iff $@_i\tup{\epsilon=_e@_j}\in\Gamma$.
	\end{itemize}	
\end{definition}

We need to prove that  $\model_\Gamma$ is well defined, and that it is actually
an abstract hybrid data model.

\begin{proposition} \ 
\begin{enumerate}
\item $\Delta_i = \Delta_j$ implies $@_i\varphi \in \Gamma$ iff $@_j\varphi \in \Gamma$. 

\item For all $e\in\eq$, $\eqrel_e$ is an equivalence relation.
\end{enumerate}
\end{proposition}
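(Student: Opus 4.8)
The plan is to reduce both claims to manipulations of $@$ inside the MCS $\Gamma$, exploiting that $\Gamma$ is deductively closed together with the axioms \emph{$@$-refl}, \emph{$@$-intro}, \emph{$@$-self-dual}, \emph{agree}, \emph{comp-dist}, \emph{comp-neutral}, \emph{comp$*$-dist}, \emph{$@*$-dist}, \emph{$*$-comm}, \emph{$\epsilon$-trans}, \emph{equal}, and the derived theorem \emph{nom} of Proposition~\ref{tab:deriv}. Two auxiliary ``$@$-algebra'' facts do most of the work: \textbf{(a)} $\vdash@_j@_i\varphi\lra@_i\varphi$ for every node expression $\varphi$, obtained by unfolding $@_j@_i\varphi=\tup{@_j}\tup{@_i}\varphi$, passing to $\tup{@_j@_i}\varphi=\tup{@_j@_i[\varphi]=_e@_j@_i[\varphi]}$ via \emph{comp-dist}, and stripping the leading $@_j$ off both arguments by \emph{agree} and \emph{$*$-comm}; and \textbf{(b)} $\vdash@_i\tup{\epsilon=_e@_j}\lra\tup{@_i=_e@_j}$, obtained by combining \emph{comp$*$-dist} and \emph{$@*$-dist} (which give $@_i\tup{\alpha*\beta}\lra\tup{@_i\alpha*@_i\beta}$), erasing $\epsilon$ with \emph{comp-neutral}, and collapsing $@_i@_j$ to $@_j$ by \emph{agree} (using \emph{comp-neutral} to accommodate empty subpaths). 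I expect the only genuine care to lie in composing these axioms correctly; each is a short chain of rewrites in the style of the proof of Proposition~\ref{tab:deriv}, and after that everything else is routine.

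\emph{Part (1).} By \emph{$@$-refl}, $@_ii\in\Gamma$ and $@_jj\in\Gamma$, so $i\in\Delta_i$ and $j\in\Delta_j$; hence $\Delta_i=\Delta_j$ forces $@_ij\in\Gamma$ and $@_ji\in\Gamma$. Writing the node expression $@_i\varphi$ as the data comparison $\tup{@_i[\varphi]=_e@_i[\varphi]}$ and applying the derived theorem \emph{nom} twice, interleaved with \emph{$*$-comm} and with premise $@_ij\in\Gamma$, one obtains $\tup{@_j[\varphi]=_e@_j[\varphi]}=@_j\varphi\in\Gamma$ whenever $@_i\varphi\in\Gamma$; equivalently, from \emph{$@$-intro} one gets $\vdash(i\wedge@_i\varphi)\to\varphi$, then \emph{Nec} and \emph{K} for $@_j$ (with the $@_j$-instance of \emph{$@$-self-dual}, so $@_j$ distributes over $\wedge$) give $\vdash(@_ji\wedge@_j@_i\varphi)\to@_j\varphi$, and fact \textbf{(a)} finishes. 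The converse implication is symmetric, using $@_ji\in\Gamma$.

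\emph{Part (2).} By fact \textbf{(b)}, $\Delta_i\eqrel_e\Delta_j$ iff $\tup{@_i=_e@_j}\in\Gamma$ (recall $\Delta_i\eqrel_e\Delta_j$ means $@_i\tup{\epsilon=_e@_j}\in\Gamma$ by definition); I use this throughout. It also makes well-definedness of $\eqrel_e$ on $M$ transparent: if $\Delta_i=\Delta_{i'}$ and $\Delta_j=\Delta_{j'}$, then part~(1) applied to the node expression $\tup{\epsilon=_e@_j}$, together with \emph{nom} and \emph{$*$-comm}, gives $\tup{@_i=_e@_j}\in\Gamma$ iff $\tup{@_{i'}=_e@_{j'}}\in\Gamma$. \emph{Reflexivity}: \emph{equal} gives $\vdash\tup{\epsilon=_e\epsilon}$, hence $\vdash@_i\tup{\epsilon=_e\epsilon}$ by \emph{Nec}, which \emph{comp$*$-dist}/\emph{$@*$-dist} and \emph{comp-neutral} rewrite to $\vdash\tup{@_i=_e@_i}$; being a theorem it lies in $\Gamma$, so $\Delta_i\eqrel_e\Delta_i$. \emph{Symmetry}: immediate from \emph{$*$-comm}, since $\vdash\tup{@_i=_e@_j}\lra\tup{@_j=_e@_i}$. \emph{Transitivity}: assume $\tup{@_i=_e@_j}\in\Gamma$ and $\tup{@_j=_e@_k}\in\Gamma$; by fact \textbf{(b)} (and \emph{$*$-comm} for the former) both $@_j\tup{\epsilon=_e@_i}$ and $@_j\tup{\epsilon=_e@_k}$ lie in $\Gamma$; applying \emph{Nec} and \emph{K} for $@_j$ to the instance $\vdash\tup{\epsilon=_e@_i}\wedge\tup{\epsilon=_e@_k}\to\tup{@_i=_e@_k}$ of \emph{$\epsilon$-trans} yields $@_j\tup{@_i=_e@_k}\in\Gamma$, and one further use of \emph{comp$*$-dist}/\emph{$@*$-dist} and \emph{agree} collapses this to $\tup{@_i=_e@_k}\in\Gamma$, i.e.\ $\Delta_i\eqrel_e\Delta_k$. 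The main obstacle, as flagged, is establishing facts \textbf{(a)} and \textbf{(b)}; everything after that is a couple of lines.
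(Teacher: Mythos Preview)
Your proof is correct and follows essentially the same strategy as the paper: reduce both claims to axiom-chasing inside the MCS $\Gamma$. The organization differs slightly. For part~(1) the paper simply cites \emph{bridge} (taking $\alpha=@_i$ in $\tup{\alpha}j\wedge@_j\varphi\to\tup{\alpha}\varphi$ gives exactly $@_ij\wedge@_j\varphi\to@_i\varphi$), which is the same chain you reconstruct via \emph{nom} or via \emph{$@$-intro}/\emph{Nec}/\emph{K}/fact~\textbf{(a)}. For part~(2) the paper works directly with the defining formula $@_i\tup{\epsilon=_e@_j}$ and uses the derived theorem \emph{$@$-swap} for symmetry, whereas you first normalize to the equivalent form $\tup{@_i=_e@_j}$ via your fact~\textbf{(b)}, which makes symmetry a one-line consequence of \emph{$*$-comm}. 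Your factoring-out of \textbf{(a)} and \textbf{(b)} is a clean packaging (and your observation that \textbf{(b)} also handles well-definedness of $\eqrel_e$ on $M$ is a detail the paper leaves implicit), but the underlying manipulations are the same as the paper's use of \emph{agree}, \emph{$@$-swap}, \emph{comp$*$-dist}, and \emph{$@*$-dist}.
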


\begin{proof}
Item 1 ensures that the definition of $\model_\Gamma$ does not depend of
the particular nominal taken as representative of $\Delta_i$.  The
property follows directly from \emph{bridge}.

For item 2, we prove:

        \noindent {\em- Reflexivity:}  We need to show that $\Delta_i\eqrel_e\Delta_i$, which by definition is equivalent to
        $@_i\tup{\epsilon=_e@_i}\in\Gamma$.
        By  \emph{equal} we have $\tup{\epsilon=_e\epsilon}\in\Gamma$,
        and applying \emph{Nec} we get $@_i\tup{\epsilon=_e\epsilon}\in\Gamma$. Also, by \emph{comp=-dist}, $\tup{@_i=@_i}\in\Gamma$, and by \emph{agree} $\tup{@_i=_e@_i@_i}\in\Gamma$. Then, by \emph{$@=$-dist},
        $@_i\tup{\epsilon=_e@_i}\in\Gamma$, as wanted.

        \noindent {\em- Symmetry:} $\Delta_i\eqrel_e\Delta_j$ iff
        $@_i\tup{\epsilon=_e@_j}\in\Gamma$.  By {\em neutral} and {\em
          $=$-comm} we get
        $@_i\tup{\epsilon=_e@_j\epsilon}\in\Gamma$. Then, by {\em
          $@$-swap}
        $@_j\tup{\epsilon=_e@_i\epsilon}\in\Gamma$. Therefore (by
        \emph{neutral}) $@_j\tup{\epsilon=_e@_i}\in\Gamma$.

	\noindent {\em- Transitivity:} Suppose
        $\Delta_i\eqrel_e\Delta_j$ and $\Delta_j\eqrel_e\Delta_k$, iff
        $@_i\tup{\epsilon=_e@_j}\in\Gamma$ and
        $@_j\tup{\epsilon=_e@_k}\in\Gamma$.  This means that we have
        (by \emph{$@$-swap}) $@_j\tup{\epsilon=_e@_i}\in\Gamma$, and
        $@_j\tup{\epsilon=_e@_k}\in\Gamma$.  Then
        $@_j(\tup{\epsilon=@_i}\wedge\tup{\epsilon=_e@_k})\in\Gamma$ (let us call this fact $(\dagger)$ for later use).
        On the other hand,
        $\tup{\epsilon=_e@_i}\wedge\tup{\epsilon=_e@_k}\ra\tup{@_i=_e@_k}\in\Gamma$
        because of {\em $\epsilon$-trans}. Then (by {\em
          Nec})
        $@_j(\tup{\epsilon=_e@_i}\wedge\tup{\epsilon=_e@_k}\ra\tup{@_i=_e@_k})\in\Gamma$,
        and by {\em K} and distributivity of $@$ with respect to
        $\wedge$,
        $@_j\tup{\epsilon=_e@_i}\wedge@_j\tup{\epsilon=_e@_k}\ra
        @_j\tup{@_i=_e@_k}\in\Gamma$.
        By Modus Ponens with $(\dagger)$ we have
        $@_j\tup{@_i=_e@_k}\in\Gamma$, and by {\em comp=-dist}
        $\tup{@_j@_i=_e@_j@_k}\in\Gamma$. Using {\em back} and {\em
          agree} we get $\tup{@_i=_e@_i@_k}\in\Gamma$. Hence by
        \emph{$@{=}$-dist}, $@_i\tup{\epsilon=_e@_k}\in\Gamma$, which
        gives us $\Delta_i\eqrel_e\Delta_k$.
\end{proof}

\begin{proposition}\label{prop:addprop}
Let $\model_\Gamma=\tup{M,\{\eqrel_e\}_{e\in\eq},\{R_{\dowa}\}_{\dowa\in\mo},\lbl,\no}$ be the extracted 
model, for some $\Gamma$.
Then, 
\begin{enumerate}
	 \item $\Delta_i\not\eqrel_e\Delta_j$ if and only if $@_i\tup{\epsilon\neq_e @_j}\in\Gamma$,
	 \item if $\Delta_i R_{\dowa} \Delta_j$ then for all $@_j\varphi\in\Gamma$, $@_i\tup{\dowa}\varphi\in\Gamma$.
\end{enumerate}
\end{proposition}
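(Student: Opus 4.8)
The plan is to reduce both items to the membership behaviour of the $@$-prefixed data comparisons occurring in Definition~\ref{def:extracted}, after rewriting them into the normal form $\tup{@_i * @_j}$ for which the axiom \emph{$@$-data} is available. The technical core is the derived equivalence $\vdash @_i\tup{\epsilon * @_j} \lra \tup{@_i * @_j}$ for $* \in \{=_e, \neq_e\}$. For the left-to-right direction I would unfold $@_i\psi$ as $\tup{@_i}\psi$, push $@_i$ inside the comparison with \emph{comp$*$-dist} to obtain $\tup{@_i\epsilon * @_i@_j}$, delete the redundant $\epsilon$ with \emph{comp-neutral}, and collapse the $@_i@_j$ prefix using \emph{agree} (with \emph{$*$-comm} to move it to the appropriate side); the right-to-left direction reverses these steps, using \emph{$@*$-dist} in place of \emph{comp$*$-dist}. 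Since all the axioms involved are stated uniformly in $*$, the argument is identical for $=_e$ and $\neq_e$. (For item~1, only the left-to-right inclusion at the level of membership in $\Gamma$ is strictly needed, and this is already Proposition~\ref{prop:mcs}, item~2.)

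For item~1, by Definition~\ref{def:extracted} and Fact~\ref{fact:mcs} we have $\Delta_i \not\eqrel_e \Delta_j$ iff $@_i\tup{\epsilon =_e @_j} \notin \Gamma$ iff $\neg @_i\tup{\epsilon =_e @_j} \in \Gamma$. Using the equivalence above, together with the fact that an MCS respects provable equivalence and negation, this holds iff $\neg\tup{@_i =_e @_j} \in \Gamma$; by \emph{$@$-data} this is iff $\tup{@_i \neq_e @_j} \in \Gamma$; and, applying the equivalence once more, iff $@_i\tup{\epsilon \neq_e @_j} \in \Gamma$, which is the claim.

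For item~2, suppose $\Delta_i R_{\dowa} \Delta_j$, i.e.\ $@_i\tup{\dowa}j \in \Gamma$, and let $@_j\varphi \in \Gamma$. I would first derive the theorem $\vdash @_i\tup{\dowa}j \wedge @_j\varphi \ra @_i\tup{\dowa}\varphi$: instantiate \emph{bridge} to $\vdash \tup{\dowa}j \wedge @_j\varphi \ra \tup{\dowa}\varphi$, apply \emph{Nec} and \emph{K} for the normal modality $@_i$ together with the distribution of $@_i$ over $\wedge$ (from \emph{$@$-self-dual} and \emph{K}), and rewrite $@_i@_j\varphi$ to $@_j\varphi$ via a routine consequence of \emph{agree} (obtained by unfolding $@_k\psi$ as $\tup{@_k}\psi$, exactly as in the equivalence above). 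Since $\Gamma$ is an MCS it is closed under modus ponens, so from $@_i\tup{\dowa}j \in \Gamma$ and $@_j\varphi \in \Gamma$ we conclude $@_i\tup{\dowa}\varphi \in \Gamma$.

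The only real obstacle I anticipate is the bookkeeping imposed by the overloaded notation: one has to unfold $@_i\varphi$ and $\tup{\alpha}\varphi$ into data-comparison form before the path axioms (\emph{agree}, \emph{comp-neutral}, \emph{comp$*$-dist}, \emph{$@*$-dist}) apply, and must verify that each rewriting step is legitimate for both equality symbols. None of this is conceptually hard, and everything else is a direct appeal to maximal consistency.
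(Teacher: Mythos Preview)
Your proposal is correct and follows essentially the same approach as the paper: item~1 rests on \emph{$@$-data} after rewriting into the form $\tup{@_i * @_j}$, and item~2 rests on \emph{bridge}. The only difference is cosmetic routing for item~2: the paper first uses \emph{comp-dist} to move $@_i$ into the path (obtaining $\tup{@_i\dowa}j$), applies \emph{bridge} with $\alpha=@_i\dowa$, and then pulls $@_i$ back out via \emph{$@*$-dist}; you instead apply \emph{bridge} with $\alpha=\dowa$ and lift the resulting implication through $@_i$ via \emph{Nec}/\emph{K}/\emph{agree}. Both routings are standard hybrid-logic manoeuvres of the same length and difficulty.
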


\begin{proof}
Item $1$ follows from  {\em $@$-data}; for item $2$
suppose $\Delta_i R_{\dowa}\Delta_j$, then we have $@_i\tup{\dowa}j\in\Delta_i$. 
Let $@_j\varphi\in\Gamma$, then by \emph{comp-dist} we have
$\tup{@_i\dowa}j \in\Gamma$, hence by $bridge$ we get $\tup{@_i\dowa}\varphi\in\Gamma$.
Therefore, by {\em $@*$-dist}, $@_i\tup{\dowa}\varphi\in\Gamma$.
\end{proof}

Now, given a named and pasted MCS $\Gamma$ we can prove the
following lemma:

\begin{lemma}[Existence Lemma]\label{lemma:existence} 
Let $\Gamma$ be an MCS and  $\model_\Gamma=\langle M,\{\eqrel_e\}_{e\in\eq}, 
\{R_{\dowa}\}_{\dowa\in\mo}, \lbl$, $\no\rangle$ be
the extracted model from $\Gamma$. Let $\Delta_i\in M$, then
\begin{enumerate}
 	\item $@_i\tup{\dowa\alpha=_e\beta}\in\Gamma$ implies there exists
          $\Delta_j\in M$ s.t.\ $\Delta_i R_{\dowa} \Delta_j$ and $@_j\tup{\alpha=_e@_i\beta}\in\Gamma$.
 	\item $@_i\tup{\dowa\alpha\neq_e\beta}\in\Gamma$ implies there exists
          $\Delta_j\in M$ s.t.\ $\Delta_i R_{\dowa} \Delta_j$  and $@_j\tup{\alpha\neq_e@_i\beta}\in\Gamma$. 	
    \item $@_i\tup{@_j\alpha=_e@_k\beta}\in\Gamma$ implies 
    \ $@_j\tup{\alpha=_e@_k\beta}\in\Gamma$.
 	\item $@_i\tup{@_j\alpha\not=_e@_k\beta}\in\Gamma$ implies 
 	\ $@_j\tup{\alpha\not=_e@_k\beta}\in\Gamma$.
 \end{enumerate}  
\end{lemma}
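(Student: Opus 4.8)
The plan is to treat items 1--2 and items 3--4 separately: items 1--2 are where the \emph{pasted} property of $\Gamma$ comes into play, whereas items 3--4 need only the $@$-axioms together with Proposition~\ref{prop:mcs}. In both cases the core is a short ``$@$-juggling'' routine that pushes an $@_j$ prefix sitting on the first path of a data comparison out to the front of the whole comparison; this works precisely because the second path is also $@$-prefixed.

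First I would prove item~1. Assume $@_i\tup{\dowa\alpha=_e\beta}\in\Gamma$. By Proposition~\ref{prop:mcs}(2) we get $\tup{@_i\dowa\alpha=_e@_i\beta}\in\Gamma$. Reading this as $\tup{@_i\dowa\alpha'=_e\beta'}$ with $\alpha'=\alpha$ and $\beta'=@_i\beta$, it is exactly the trigger of clause~(1) of \emph{pastedness}, so there is a nominal $j$ with $@_i\tup{\dowa}j\wedge\tup{@_j\alpha=_e@_i\beta}\in\Gamma$; since $\Gamma$ is an MCS, both conjuncts lie in $\Gamma$. From $@_i\tup{\dowa}j\in\Gamma$ and the definition of $\model_\Gamma$ we obtain $\Delta_iR_{\dowa}\Delta_j$, with $\Delta_j\in M$ (it is nonempty, as $@_jj\in\Gamma$ by \emph{$@$-refl}). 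To finish, I turn $\tup{@_j\alpha=_e@_i\beta}\in\Gamma$ into $@_j\tup{\alpha=_e@_i\beta}\in\Gamma$: by \emph{$*$-comm} it is equivalent to $\tup{@_i\beta=_e@_j\alpha}$, by \emph{agree} (prepending $@_j$ to the leading $@_i\beta$) to $\tup{@_j@_i\beta=_e@_j\alpha}$, by \emph{$*$-comm} again to $\tup{@_j\alpha=_e@_j@_i\beta}$, and then the axiom \emph{$@*$-dist} delivers $@_j\tup{\alpha=_e@_i\beta}\in\Gamma$. Item~2 is the same argument with $=_e$ replaced by $\neq_e$, using clause~(2) of pastedness.

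For item~3, assume $@_i\tup{@_j\alpha=_e@_k\beta}\in\Gamma$. By Proposition~\ref{prop:mcs}(2), $\tup{@_i@_j\alpha=_e@_i@_k\beta}\in\Gamma$; applying \emph{agree} to each path (using \emph{$*$-comm} to move each side to first position) absorbs the outer $@_i$ and gives $\tup{@_j\alpha=_e@_k\beta}\in\Gamma$; then the same \emph{$*$-comm}/\emph{agree}/\emph{$@*$-dist} routine as in item~1 (the right path already carries the prefix $@_k$) yields $@_j\tup{\alpha=_e@_k\beta}\in\Gamma$. Item~4 is the $\neq_e$ analogue. (Alternatively, items 3--4 fall out of the derived theorem \emph{$@$-swap} together with \emph{agree}.)

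The routine parts are the MCS/propositional bookkeeping and the equivalence chains. The step that needs care is in items~1--2: one must first rewrite the hypothesis $@_i\tup{\dowa\alpha*\beta}$ into the exact shape $\tup{@_i\dowa\alpha*@_i\beta}$ so that \emph{pastedness} applies, and then notice that the ``extra'' $@_i$ it puts on the second path is exactly what makes the concluding \emph{$@*$-dist} step legal, since at that point both paths are $@$-prefixed. A minor additional point: the inequality cases (items~2 and~4) use the $\neq_e$ versions of the $@$-distribution facts employed for $=_e$; these are not singled out in Proposition~\ref{prop:mcs}, but follow in the same way from self-duality of $@$ and \emph{$@*$-dist}.
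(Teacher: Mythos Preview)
Your proof is correct and follows essentially the same approach as the paper's own proof: distribute $@_i$ over the comparison (what the paper calls \emph{$@{=}$-dist}/\emph{comp$*$-dist}, your Proposition~\ref{prop:mcs}(2)), invoke pastedness for items~1--2, then use \emph{agree} and \emph{$@*$-dist} to pull the $@_j$ out front. Your version is slightly more explicit about the uses of \emph{$*$-comm} needed to apply \emph{agree} to the second path, but otherwise the arguments coincide step for step.
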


\begin{proof}
  First, we discuss $1$ ($2$ is similar). 
  We suppose $@_i\tup{\dowa\alpha=_e\beta}\in\Gamma$.  
  Then, by {\em $@{=}$-dist},
  $\tup{@_i\dowa\alpha=_e@_i\beta}\in\Gamma$. Because $\Gamma$ is
  pasted, $@_i\tup{\dowa}j\wedge\tup{@_j\alpha=_e@_i\beta}\in\Gamma$.
  As $\Gamma$ is an MCS, $@_i\tup{\dowa}j\in\Gamma$ and
  $\tup{@_j\alpha=_e@_i\beta}\in\Gamma$.  By definition of $\model_\Gamma$
  we obtain $\Delta_i R_{\dowa} \Delta_j$, and by {\em agree}, we have
  $\tup{@_j\alpha=_e@_j@_i\beta}\in\Gamma$. Then,
  $@_j\tup{\alpha=_e@_i\beta}\in\Gamma$ by {\em $@{=}$-dist}. 

  \medskip For $3$ ($4$ is similar) assume
  $@_i\tup{@_j\alpha=_e@_k\beta}\in\Gamma$, then by {\em comp=-dist},
  $\tup{@_i@_j\alpha=_e@_i@_k\beta}\in\Gamma$. Applying {\em agree}
  twice, we have $\tup{@_j\alpha=_e@_j@_k\beta}\in\Gamma$, then by
  {\em $@{=}$-dist} $@_j\tup{\alpha=_e@_k\beta}\in\Gamma$.
\end{proof}

\begin{corollary}\label{coro:canonrel}
Let $\Gamma$ be an MCS	and let
$\model_\Gamma=\tup{M, \{\eqrel_e\}_{e\in\eq}, \{R_{\dowa}\}_{\dowa\in\mo}, \lbl, \no}$ 
be the extracted model, and $\Delta_i\in M$. If $@_i\tup{\dowa\alpha}\varphi\in\Gamma$,
then there exists $\Delta_j\in M$ such that $\Delta_i R_{\dowa} \Delta_j$ and
$@_j\tup{\alpha}\varphi\in\Gamma$.
\end{corollary}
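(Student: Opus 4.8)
The plan is to deduce the statement directly from the Existence Lemma (Lemma~\ref{lemma:existence}) by unfolding the diamond abbreviation. Recall that $\tup{\beta}\theta$ abbreviates $\tup{\beta[\theta] =_e \beta[\theta]}$ (for an arbitrary $e$), and that \emph{comp-dist} gives $\vdash \tup{\dowa\alpha}\varphi \lra \tup{\dowa}\tup{\alpha}\varphi$. So, writing $\psi := \tup{\alpha}\varphi$, the hypothesis $@_i\tup{\dowa\alpha}\varphi \in \Gamma$ yields $@_i\tup{\dowa}\psi \in \Gamma$, i.e.\ $@_i\tup{\dowa[\psi] =_e \dowa[\psi]} \in \Gamma$, which is now a literal $=_e$-comparison of the shape required by item~1 of Lemma~\ref{lemma:existence}.

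First I would apply Lemma~\ref{lemma:existence}(1), with its path parameters instantiated to $[\psi]$ and $\dowa[\psi]$: since $@_i\tup{\dowa[\psi] =_e \dowa[\psi]} \in \Gamma$, there is $\Delta_j \in M$ with $\Delta_i R_{\dowa} \Delta_j$ and $@_j\tup{[\psi] =_e @_i\dowa[\psi]} \in \Gamma$. It then remains to extract $\psi$ from the body of this comparison. Using \emph{comp-neutral} to rewrite $[\psi]$ as $[\psi]\epsilon$ and then \emph{$*$-test}, we get $\vdash \tup{[\psi] =_e @_i\dowa[\psi]} \lra \psi \wedge \tup{\epsilon =_e @_i\dowa[\psi]}$; prefixing with $@_j$ via \emph{Nec} and \emph{K}, and using that $\Gamma$ is an MCS (hence deductively closed, with $@_j$ distributing over $\wedge$), we conclude $@_j\psi \in \Gamma$, that is, $@_j\tup{\alpha}\varphi \in \Gamma$. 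Together with $\Delta_i R_{\dowa} \Delta_j$ this is exactly the claim.

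I do not foresee any genuine obstacle: the argument is a routine unfolding plus a single invocation of the Existence Lemma. The only points needing care are the overloaded bracket notation (keeping the test $[\psi]$ inside a path distinct from the box $[\dowa]$) and unfolding the $\tup{-}$ abbreviation at the right moment, so that Lemma~\ref{lemma:existence}(1) — stated for literal $=_e$-comparisons — becomes applicable. An alternative to the last step, avoiding \emph{comp-dist}, is to apply Lemma~\ref{lemma:existence}(1) to the unfolded hypothesis $@_i\tup{\dowa\alpha[\varphi] =_e \dowa\alpha[\varphi]} \in \Gamma$ directly, obtaining $@_j\tup{\alpha[\varphi] =_e @_i\dowa\alpha[\varphi]} \in \Gamma$, and then invoking the derived theorem $\vdash \tup{\mu =_e \nu} \ra \tup{\mu =_e \mu}$ (immediate from \emph{subpath} and \emph{comp-neutral}, after unfolding $\tup{\mu}\top$) under $@_j$ with $\mu = \alpha[\varphi]$.
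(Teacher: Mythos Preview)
Your proposal is correct. Your main route differs slightly from the paper's: you apply \emph{comp-dist} first to reduce the body to $\tup{\dowa}\psi$ before invoking the Existence Lemma, so the extraction step is a direct application of \emph{$*$-test}. The paper instead applies the Existence Lemma to the unfolded $@_i\tup{\dowa\alpha[\varphi]=_e\dowa\alpha[\varphi]}$ and then uses \emph{subpath} (together with \emph{comp-neutral}) to discard the right-hand comparison and recover $\tup{@_j\alpha[\varphi]}\top$, from which $@_j\tup{\alpha}\varphi$ follows --- this is essentially the alternative you sketch at the end. Your main route is marginally cleaner, while the paper's avoids the preliminary \emph{comp-dist} step; both are straightforward corollaries of the Existence Lemma.
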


\begin{proof}
  Let $\Delta_i\in M$.  By hypothesis,
  $@_i\tup{\dowa\alpha[\varphi]=_e\dowa\alpha[\varphi]}\in\Gamma$,
  then by Lemma~\ref{lemma:existence} there exists $\Delta_j \in M$
  such that $\Delta_i R_{\dowa} \Delta_j$ and
  $@_j\tup{\alpha[\varphi]=_e@_i\alpha[\varphi]}\in\Gamma$.  By {\em
    comp{=}-dist},
  $\tup{@_j\alpha[\varphi]=_e@_j@_i\alpha[\varphi]}\in\Gamma$, and by
  {\em comp-neutral} and {\em subpath} we get
  $\tup{@_j\alpha[\varphi]}\top\in\Gamma$.  Then, using {\em
    comp-dist}, {\em comp-assoc} and {\em =-test}, we have
  $@_j\tup{\alpha}\varphi\in\Gamma$.
\end{proof}

Now we are ready to prove the Truth Lemma that states that membership
in an MCS generating an extracted model is equivalent to being true at
a state in the extracted model. First let us introduce a notion of
size for node and path expressions, which we will use in the inductive
cases of the proof.

\begin{definition}\label{def:size}
We define inductively the size of a path and node expression (notation $\size\cdot$)
as follows:
$$
\begin{array}{l@{~=~}l@{\quad \quad \quad }l@{~=~}l}
\size{\dowa} & 2 & \size p & 1, ~p\in\lab\cup\nom	\\
\size{@_i} & 1 & \size{\neg\varphi} & \size\varphi \\
\size{[\varphi]} & 1+\size\varphi & \size{\varphi\wedge\psi} & \size\varphi + \size\psi \\
\size{\alpha\beta} & \size\alpha + \size\beta & \size{\tup{\alpha*\beta}} & \size\alpha + \size\beta,
\end{array}
$$	
where $\alpha,\beta$ are path expressions and $\varphi,\psi$ are node expressions.
\end{definition}

\begin{lemma}[Truth Lemma]\label{lemma:truth} 
	Let $\model_\Gamma=\tup{M, \{\eqrel_e\}_{e\in\eq}, \{R_{\dowa}\}_{\dowa\in\mo}, \lbl, \no}$ 
	be the extracted model from an MCS $\Gamma$, and let $\Delta_i\in M$. Then, for any formula
	$\varphi$,
	$$
	\model_\Gamma,\Delta_i\models\varphi ~~ \iff ~~ @_i\varphi\in\Gamma.
	$$
\end{lemma}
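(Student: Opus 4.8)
The plan is to prove the Truth Lemma by induction on the size $\size{\varphi}$ of the node expression $\varphi$, but with a crucial twist: the statement about node expressions has to be proved simultaneously with an auxiliary statement about \emph{path expressions}, since path expressions and node expressions are defined by mutual recursion and data comparisons $\tup{\alpha *_e \beta}$ embed path expressions inside node expressions. So I would first isolate a companion claim: for all path expressions $\alpha$ and all $\Delta_i, \Delta_j \in M$,
$$
\model_\Gamma, \Delta_i, \Delta_j \models \alpha \iff \tup{@_i \alpha}j \in \Gamma,
$$
or a suitable variant expressing that $\Delta_j$ is $\alpha$-reachable from $\Delta_i$ in syntactic terms (e.g.\ $@_i \tup{\alpha}j \in \Gamma$, using the abbreviation $\tup{\alpha}\varphi := \tup{\alpha[\varphi] =_e \alpha[\varphi]}$). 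The induction is on $\size\cdot$ as defined in Definition~\ref{def:size}, which is set up so that the subexpressions appearing in each clause are strictly smaller.

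For the node-expression cases: the atomic cases $\varphi = p$ and $\varphi = i$ follow directly from the definition of $\lbl$ and $\no$ in the extracted model, together with Proposition~\ref{prop:mcs} and the $@$-axioms; the Boolean cases $\neg\varphi$ and $\varphi \wedge \psi$ use Fact~\ref{fact:mcs} (maximality) and the fact that $@_i$ distributes over $\neg$ (by \emph{$@$-self-dual}) and over $\wedge$ (by \emph{Nec}, \emph{K}). The interesting node cases are the data comparisons $\tup{\alpha =_e \beta}$ and $\tup{\alpha \neq_e \beta}$. For the forward direction of, say, $\model_\Gamma, \Delta_i \models \tup{\alpha =_e \beta}$: unpack the semantics to get $\Delta_j, \Delta_k$ with $\model_\Gamma, \Delta_i, \Delta_j \models \alpha$, $\model_\Gamma, \Delta_i, \Delta_k \models \beta$ and $\Delta_j \eqrel_e \Delta_k$; apply the path part of the induction hypothesis to transfer the reachability facts into $\Gamma$, use the definition of $\eqrel_e$ (which says $@_j\tup{\epsilon =_e @_k} \in \Gamma$) together with axioms \emph{agree}, \emph{$*$-test}, \emph{comp-dist}, \emph{$@*$-dist} and \emph{$\epsilon$-trans} to assemble $@_i\tup{\alpha =_e \beta} \in \Gamma$. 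For the converse, $@_i\tup{\alpha =_e \beta} \in \Gamma$, the key tool is the Existence Lemma (Lemma~\ref{lemma:existence}) together with the fact that $\Gamma$ is \emph{pasted}: repeatedly peeling off leading modal steps $\dowa$ from $\alpha$ and $\beta$ produces witness nominals $j, k$ with $\Delta_i R_{\dowa}\cdots$, eventually reducing to $@_j\tup{\epsilon =_e @_k}$-type facts, i.e.\ $\Delta_j \eqrel_e \Delta_k$, and then the path induction hypothesis (in the forward direction) reconstructs the semantic witnesses.

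For the path-expression companion claim the cases are: $\alpha = \dowa$ (immediate from the definition $\Delta_i R_{\dowa} \Delta_j \iff @_i\tup{\dowa}j \in \Gamma$, via \emph{comp-dist}); $\alpha = @_k$ (uses $\no(k) = \Delta_k$, \emph{agree}, \emph{$@$-refl}); $\alpha = [\varphi]$ (uses $m = n$ and invokes the \emph{node}-expression induction hypothesis on the strictly smaller $\varphi$, plus \emph{$*$-test}); and $\alpha = \beta\gamma$ (uses \emph{comp-assoc}/\emph{comp-neutral} and, for the hard direction, the \emph{pasted} property / Corollary~\ref{coro:canonrel} to produce the intermediate state). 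I expect the main obstacle to be the bookkeeping in the composite cases $\alpha\beta$ and in the data-comparison cases — specifically, getting the inductive measure to decrease correctly when a leading $\dowa$ is stripped off inside a comparison and a new nominal $j$ is introduced (the formula $@_j\tup{\alpha' *_e @_i \beta}$ that results must be handled by an appeal to a smaller instance, which is exactly what the structure of Lemma~\ref{lemma:existence} and the $\size\cdot$ measure are engineered to permit). The substantive logical content — that the axioms suffice — has essentially been pre-packaged into Propositions~\ref{prop:addprop}, the equivalence-relation proposition, Lemma~\ref{lemma:existence} and Corollary~\ref{coro:canonrel}; the Truth Lemma itself is then a careful but routine induction that threads these together.
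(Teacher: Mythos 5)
Your proposal matches the paper's own proof in all essentials: the paper proves the lemma by exactly this mutual induction, pairing the node-expression claim (its IH1) with the path-expression companion claim $\model_\Gamma,\Delta_i,\Delta_j\models\alpha \iff @_i\tup{\alpha}j\in\Gamma$ (its IH2), inducting on structural complexity and then on the size measure of Definition~\ref{def:size}, using the Existence Lemma and the pasted property to peel off leading steps in the hard direction of data comparisons, and the \emph{bridge}/\emph{agree}/\emph{comp$*$-dist}/\emph{back} axioms to reassemble $@_i\tup{\alpha *_e \beta}$ from the named endpoints in the other. The bookkeeping concern you flag is precisely where the paper spends its effort (the $(\ddagger)$ assembly and the case split on the head of $\alpha$), so the proposal is correct and follows the same route.
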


\begin{proof}
	In fact we will prove a stronger result. 
	Let $\Delta_i,\Delta_j\in M$, $\varphi$ be a node expression and $\alpha$ be a path 
	expression.
	\begin{description}
		\item[(IH1)] $\model_\Gamma,\Delta_i\models\varphi ~~ \iff ~~ @_i\varphi \in \Gamma.$
		\item[(IH2)] $\model_\Gamma,\Delta_i,\Delta_j\models\alpha ~~ \iff ~~ @_i\tup{\alpha}j \in \Gamma.$
	\end{description}

The proof is a double induction argument, proceeding first on the
structural complexity of $\varphi$ and $\alpha$, and then on the size
of path-formulas as per Definition~\ref{def:size}.
First, we prove the base cases:

\medskip

	\noindent {- $\alpha=\dowa$:} Suppose $\evpath\models\dowa$, iff $\Delta_i R_{\dowa} \Delta_j$ (by $\models$), 
	iff $@_i\tup{\dowa}j\in\Gamma$ (by definition of extracted model). \smallskip

	\noindent {- $\alpha=@_k$:} Suppose $\evpath\models@_k$, iff $\no(k)=\Delta_j$. But by definition of $\no$,
	$\Delta_j=\Delta_k$, and because we know that $j\in\Delta_j$ we have $j\in\Delta_k$. Then, we have $@_kj\in\Gamma$,
	and by {\em agree}, $@_i@_kj\in\Gamma$. \smallskip

	\noindent {- $\varphi=p$:} $\evnode\models p$ iff $p\in\lbl(\Delta_i)$, iff $@_ip\in\Gamma$. \smallskip

	\noindent {- $\varphi=j$:} $\evnode\models j$ iff $\no(j)=\Delta_i$, iff $\Delta_i=\Delta_j$ iff $j\in\Delta_i$, iff
	$@_ij\in\Gamma$. \medskip

Now we prove the inductive cases:

\medskip

	\noindent {- $\varphi=\psi\wedge\theta$ and $\varphi=\neg\psi$} are direct from (IH1). \smallskip

	\noindent {- $\alpha=[\psi]$:} $\evpath\models[\psi]$ iff $\Delta_i=\Delta_j$ and $\evnode\models\psi$.
	By (IH1), we have $@_i\psi\in\Gamma$ and $j\in\Delta_i$, then $@_ij\in\Gamma$. As $\Gamma$ is an MCS, we have
	$@_i(\psi\wedge j)\in\Gamma$, and by idempotence of the conjunction we have $@_i(\psi\wedge\psi\wedge j \wedge j)\in\Gamma$. 
	Also, $\tup{\epsilon=_e\epsilon}$ is a theorem, by {\em Nec} we have $@_i\tup{\epsilon=_e\epsilon}\in\Gamma$, then
	$@_i(\psi\wedge\psi\wedge j \wedge j \wedge \tup{\epsilon=_e\epsilon})\in\Gamma$.
	Using {\em $=$-test} and {\em $=$-comm} we obtain $@_i\tup{[\psi][j]=_e[\psi][j]}\in\Gamma$
	(which is the same as $@_i\tup{[\psi]}j$) as we wanted. \smallskip

	
	\noindent {- $\alpha=\dowa\beta$:} For the left to right implication, let $\evpath\models\dowa\beta$ iff there is some $\Delta_k\in M$ such that
	$\model_\Gamma,\Delta_i,\Delta_k\models\dowa$ and $\model_\Gamma,\Delta_k,\Delta_j,\models\beta$.
	By (IH2), we have $@_i\tup\dowa k\in\Gamma$ and $@_k\tup\beta j \in\Gamma$, then
	$@_i\tup\dowa k \wedge @_k\tup\beta j\in\Gamma$. 
	By {\em agree}, we have $@_i\tup\dowa k \wedge @_i@_k\tup\beta j\in\Gamma$, and with a simple	hybrid logic argument, we get $@_i(\tup\dowa k \wedge @_k\tup\beta j)\in\Gamma$. 
	By {\em bridge}, we have $@_i\tup\dowa\tup\beta j\in\Gamma$, 
	hence by {\em comp-dist} $@_i\tup{\dowa\beta}j \in\Gamma$. 
	
	For the other direction, assume $@_i\tup{\dowa\beta}j \in\Gamma$.  Then, by 
  Corollary~\ref{coro:canonrel} there exists some $k$ such that $\Delta_i R_{\dowa} \Delta_k$ and $@_k\tup{\beta}j \in\Gamma$. 
  Then, applying (IH2) twice, we get $\model_\Gamma,\Delta_i,\Delta_k\models\dowa$ and $\model_\Gamma,\Delta_k,\Delta_j\models\beta$. 
  Therefore, $\model_\Gamma,\Delta_i,\Delta_j\models\dowa\beta$ as needed. 
	
  \smallskip
  \noindent {- $\alpha=@_k\beta$ and $\alpha=[\psi]\beta$:} these cases are similar to the previous one.
	
	\medskip

\noindent 
For node expressions of the form $\tup{\alpha*\beta}$ we need to do
induction on the size of $\alpha$ and $\beta$.  Notice that by {\em
  $*$-comm}, $@_i\tup{\alpha*\beta}\in\Gamma$ iff
$@_i\tup{\beta*\alpha}\in\Gamma$.  And by the semantic definition,
$\evnode\models\tup{\alpha*\beta}$ iff
$\evnode\models\tup{\beta*\alpha}$.  So we need only carry out the inductive steps for $\alpha$.  Moreover, by {\em comp-neutral},
$\vdash\tup{\alpha*\beta}\lra\tup{\alpha\epsilon*\beta}$ which is also
a validity. So we can assume that every path ends in a test.  The base
case then is when $\size{\alpha}+\size{\beta}=2$, and both $\alpha$
and $\beta$ are tests. \smallskip

	\noindent {- $\varphi=\tup{[\psi]=_e[\theta]}$:} direct from {\em test-dist}. \smallskip

	\noindent {- $\varphi=\tup{[\psi]\neq_e[\theta]}$:} 
By \emph{test-$\bot$} the formula is equivalent to $\bot$, hence
not in $\Gamma$. Moreover, by the defined semantics the formula is unsatisfiable. \smallskip

	\noindent Now, let us consider $\size\alpha+\size\beta \ge 3$: \smallskip

	\noindent {- $\varphi=\tup{\dowa\beta=_e\gamma}$:} Let us prove
        the right to left direction. Suppose
        $@_i\tup{\dowa\beta=_e\gamma}\in\Gamma$. By the Existence Lemma,
        there is $\Delta_j\in M$ such that $\Delta_i R_{\dowa} \Delta_j$,
        and $@_j\tup{\beta=_e@_i\gamma}\in\Gamma$, for some
        $j\in\nom$.  Notice that
        $\size{@_j\tup{\beta=_e@_i\gamma}}\leq\size{@_i\tup{\dowa\beta=_e\gamma}}$.
        Applying (IH1) we obtain
        $\model_\Gamma,\Delta_j\models\tup{\beta=_e@_i\gamma}$, so 
        there exists $\Delta_1,\Delta_2\in M$ such that
	\begin{enumerate}
		\item $\model_\Gamma,\Delta_j,\Delta_1\models\beta$,
		\item $\model_\Gamma,\Delta_j,\Delta_2\models@_i\gamma$,
		\item $\Delta_1\eqrel_e\Delta_2$.
	\end{enumerate}

	From $1$ and $\Delta_i R_{\dowa} \Delta_j$ we get $\model_\Gamma,\Delta_i,\Delta_1\models\dowa\beta$ and
	from $2$ and the semantic interpretation of $@$ we get $\model_\Gamma,\Delta_i,\Delta_2\models\gamma$. 
	Then, together with $3$ we get $\model_\Gamma,\Delta_i\models\tup{\dowa\beta=_e\gamma}$, as wanted.
	\smallskip

	For the other direction, suppose
	 $\evnode\models\tup{\dowa\beta=_e\gamma}$, iff there are $\Delta_j,\Delta_k$ such that
	 $\evnode,\Delta_j\models\dowa\beta$, $\evnode,\Delta_k\models\gamma$ and $\Delta_j\eqrel_e\Delta_k$. First notice that $\evnode,\Delta_j\models\dowa\beta$ iff there exists $\Delta_t \in M$ s.t. $\evnode,\Delta_t \models \dowa$ and 
   $\model_\Gamma,\Delta_t,\Delta_j\models\beta$. Then we have, $\Delta_i R_{\dowa} \Delta_t$, and by definition of $\model_\Gamma$ we get $@_i\tup{\dowa}t \in \Gamma$. On the other hand, by (IH2) on $\model_\Gamma,\Delta_t,\Delta_j\models\beta$ we obtain $@_t\tup{\beta}j\in\Gamma$. Since $\set{@_i\tup{\dowa}t,@_t\tup{\beta}j}\subseteq\Gamma$, by {comp-dist} we have: 

   \begin{enumerate}
    \item $@_i\tup{\dowa\beta}j\in\Gamma$.
   \end{enumerate}


   By (IH2) on $\evnode,\Delta_k\models\gamma$, and by definition of $\model_\Gamma$ together with $\Delta_j\eqrel_e\Delta_k$ we obtain, respectively:
	 \begin{enumerate}
    \setcounter{enumi}{1}
	 	\item $@_i\tup\gamma k \in \Gamma$, and
		\item $@_j\tup{\epsilon=_e@_k}\in\Gamma$. 
	 \end{enumerate}

	 By $1$ and Corollary~\ref{coro:canonrel} there exists $\Delta_l$ such that 
	 \begin{enumerate}
	 \setcounter{enumi}{3}
	 	\item $@_l\tup\beta j\in\Gamma$. 
	 \end{enumerate}

	 $(\ddagger)$ From $2$ we have $@_i\tup\gamma k\in\Gamma$ and from $3$ we can obtain
	 $@_k\tup{\epsilon=_e@_j}\in\Gamma$, 
	 then we have $\tup{@_i\gamma}k \wedge @_k\tup{\epsilon=_e@_j} \in \Gamma$, by 
	 {\em comp-dist}. By {\em bridge}, $\tup{@_i\gamma}\tup{\epsilon=_e@_j}\in\Gamma$, then 
	 by {\em comp$=$-dist} and {\em back}, we get $\tup{@_i\gamma=_e@_j}\in\Gamma$.
	 Applying {\em $=$-comm, comp-neutral, agree} and {\em $@$-dist},
	 $@_j\tup{\epsilon=_e@_i\gamma}\in\Gamma$.
	 
	 Also, from $4$ we have $@_l\tup{\beta}j\in\Gamma$, then
	 $@_j\tup{\epsilon=_e@_i\gamma}\wedge\tup{@_l\beta}j\in\Gamma$ (by MCS and {\em comp-dist}), 
	 and by {\em bridge} we get $\tup{@_l\beta}\tup{\epsilon=_e@_i\gamma}\in\Gamma$. 
	 By {\em comp$=$-dist and comp-neutral}, $\tup{@_l\beta=_e@_l\beta@_i\gamma}\in\Gamma$,
	 then by {\em back}, {\em agree} and {\em $@{=}$-dist} we have
	 $@_l\tup{\beta=_e@_i\gamma}\in\Gamma$.
	 Then we have
	 \begin{align*}
   & @_l\tup{\beta=_e@_i\gamma}\in\Gamma  \\
  \Ra ~ &  @_i\tup{\dowa}\tup{\beta=_e@_i\gamma}\in\Gamma \tag{$\Delta_i R_{\dowa} \Delta_l$, Prop.~\ref{prop:addprop} item 2} \\
    \Ra ~ &  @_i\tup{\dowa\beta=_e\dowa@_i\gamma}\in\Gamma \tag{{\em comp=-dist}}\\
  	\Ra ~ &  @_i\tup{\dowa\beta=_e@_i\gamma}\in\Gamma \tag{{\em back}} \\  
  	\Ra ~ &  \tup{@_i\dowa\beta=_e@_i@_i\gamma}\in\Gamma \tag{{\em comp=-dist}} \\
    \Ra ~ &  \tup{@_i\dowa\beta=_e@_i\gamma}\in\Gamma \tag{\em back} \\
   	\Ra ~ &  @_i\tup{\dowa\beta=_e\gamma}\in\Gamma \tag{\em $@$=-dist} 
\end{align*} 

	 \noindent {- $\varphi=\tup{[\psi]\beta=_e\gamma}$:} 
   Let us prove the right to left direction. Suppose $@_i\tup{[\psi]\beta=_e\gamma}\in\Gamma$.
   By {\em $=$-test}, we have $@_i(\psi\wedge\tup{\beta=_e\gamma})\in\Gamma$, iff 
   (by using an argument purely based on modal reasoning), $@_i\psi\in\Gamma$ and $@_i\tup{\beta=_e\gamma}\in\Gamma$.
   By (IH1) on both expressions we get $\evnode\models\psi$ and $\evnode\models\tup{\beta=_e\gamma}$. Then, by the semantic interpretation of $\wedge$ and $[\psi]$, we 
   get $\evnode\models\tup{[\psi]\beta=_e\gamma}$, as wanted.
   
   For the other direction, suppose $\evnode\models\tup{[\psi]\beta=_e\gamma}$,  
	 iff there are $\Delta_j,\Delta_k$ such that
	 $\evnode,\Delta_j\models[\psi]\beta$, $\evnode,\Delta_k\models\gamma$ and $\Delta_j\eqrel_e\Delta_k$.
	 Then, by (IH2) and definition of $\model_\Gamma$ we have:
	 \begin{enumerate}
	 	\item $@_i\tup{\beta}j\in\Gamma$,
	 	\item $@_i\tup\gamma k \in \Gamma$,
		\item $@_j\tup{\epsilon=_e@_k}\in\Gamma$, and
		\item $@_i\psi\in\Gamma$. 
	 \end{enumerate}

	 Using the same argument as in $(\ddagger)$ the proof that $@_i\tup{[\psi]\beta=_e\gamma}\in\Gamma$
	 is straightforward. 

	 \smallskip

	\noindent {- $\varphi=\tup{@_j\beta=_e\gamma}$:} For the left to right direction suppose
	$\evnode\models\tup{@_j\beta=_e\gamma}$, iff there are $\Delta_k,\Delta_l$ such that		
	 $\evnode,\Delta_k\models@_j\beta$, $\evnode,\Delta_l\models\gamma$ and $\Delta_k\eqrel_e\Delta_l$.
	 Then, by (IH2) and definition of $\model_\Gamma$ we have:
	 \begin{enumerate}
	 	\item $@_i\tup{@_j\beta}k\in\Gamma$ iff $@_j\tup\beta k\in \Gamma$,
	 	\item $@_i\tup\gamma l \in \Gamma$, and
		\item $@_l\tup{\epsilon=_e@_k}\in\Gamma$, iff $@_k\tup{\epsilon=_e@_l}\in\Gamma$. 
	 \end{enumerate}

	 By $1$ and $3$, applying {\em bridge} we get $@_j\tup\beta\tup{\epsilon=_e@_l}\in\Gamma$, iff (by {\em comp$=$-dist})
	 $@_j\tup{\beta=_e\beta @_l}\in\Gamma$. By {\em back}, we get  $@_j\tup{\beta=_e@_l}\in\Gamma$,
	 which is equivalent to $@_l\tup{\epsilon=_e@_j\beta}\in\Gamma$ (by {\em agree} and {\em comp=-dist}).
	 Together with $2$ and {\em bridge} we get $@_i\tup{\gamma}\tup{\epsilon=_e@_j\beta}\in\Gamma$,
	 hence $@_i\tup{\gamma=_e\gamma @_j\beta}\in\Gamma$ iff (by {\em back} and $=$-comm)
	 $@_i\tup{@_j\beta=_e\gamma}\in\Gamma$. 
	 \smallskip

    For the other direction suppose $@_i\tup{@_j\beta=_e\gamma}\in\Gamma$. 
    First notice that, in all the cases we considered so far, the induction is on the path expression appearing on the left side of the $=$. However, an analogous argument can be applied if we do induction in the path expression on the right side of the $=$, by {\em $=$-comm}. 
    Suppose we proceed as above for the node expression $\tup{@_j\beta=_e\gamma}$. If we apply the exact same steps, we will find out that this time we need
    to do induction on $\gamma$; but, as we mentioned, the cases for $\dowa$ and $[\varphi]$ 
    are symmetric in both sides of the $=$. As a consequence, it all boils down to consider only the case $\gamma=@_k\eta$.
    
    Suppose $@_i\tup{@_j\beta=_e@_k\eta}\in\Gamma$. By Existence Lemma
    we have $@_j\tup{\beta=_e@_k\eta}\in\Gamma$, hence by (IH1)
    $\model_\Gamma,\Delta_j\models\tup{\beta=_e@_k\eta}$. By semantics of $@$, and 
    the fact that $\model_\Gamma$ is named,
    $\evnode\models\tup{@_j\beta=_e@_k\eta}$. 
	\smallskip 

	 \noindent {- The cases involving $\neq_e$ are analogous, using item $1$ from 
	 Proposition~\ref{prop:addprop} to obtain $@_j\tup{\epsilon=_e@_k}\notin\Gamma$ 
	 in item $3$ above.}
	 \end{proof}

\begin{lemma}[Frame Lemma]\label{lemma:frame} 
If $\Gamma$ is a named, pasted and $\Rho$-saturated MCS of $\ahxpd + \Pi + \Rho$,
then the underlying frame of $\model_\Gamma$ satisfies $\fc(\Pi) \wedge \fc(\Rho)$.
\end{lemma}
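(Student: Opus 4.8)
The plan is to verify the two conjuncts $\fc(\Pi)$ and $\fc(\Rho)$ of the frame condition separately, in each case turning a first-order satisfaction claim about $\model_\Gamma$ into a membership claim about $\Gamma$ by means of the standard translation (Proposition~\ref{prop:stpreserves}). Two preliminary observations carry most of the weight. First, $\model_\Gamma$ is an abstract hybrid data model, so Proposition~\ref{prop:stpreserves} applies and each conjunct $Eq(e)$ sitting inside $\st_x(\cdot)$ holds automatically (since $\eqrel_e$ is an equivalence relation). Second, every element of $M$ is named: \emph{$@$-refl} gives $@_ii\in\Gamma$, hence $i\in\Delta_i$, hence $\Delta_i\in M$ and $\no(i)=\Delta_i$, so $M=\set{\no(i)\mid i \text{ a nominal}}$. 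It is this second fact that lets us replace universally quantified first-order variables by nominals.

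For $\fc(\Pi)$ I would argue as follows. Fix $\pi(i_1,\dots,i_n)\in\Pi$ and elements $m,d_1,\dots,d_n\in M$ to interpret $x,x_{i_1},\dots,x_{i_n}$; by the namedness observation choose nominals $k_0,k_1,\dots,k_n$ (possibly fresh) with $\no(k_0)=m$ and $\no(k_\ell)=d_\ell$. Since pure axioms are schemes, $\pi[i_1/k_1,\dots,i_n/k_n]$ is a theorem, so by \emph{Nec} (applied to the modality $@_{k_0}$) and maximality, $@_{k_0}\pi[i_1/k_1,\dots,i_n/k_n]\in\Gamma$; the Truth Lemma (Lemma~\ref{lemma:truth}) then yields $\model_\Gamma,m\models\pi[i_1/k_1,\dots,i_n/k_n]$, and Proposition~\ref{prop:stpreserves} yields that the corresponding instance of $\st_x(\pi)$ holds in $\model_\Gamma$ under $x\mapsto m$, $x_{i_\ell}\mapsto d_\ell$ — where one only checks that $\st_x(\pi[i_1/k_1,\dots])$ is $\st_x(\pi)$ with each $x_{i_\ell}$ renamed to $x_{k_\ell}$, modulo the fresh auxiliary variables the translation introduces. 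Since $m,d_1,\dots,d_n$ were arbitrary, $\model_\Gamma\models\forall x\forall x_{i_1}\cdots\forall x_{i_n}.\st_x(\pi)$, and ranging over $\Pi$ gives that the underlying frame of $\model_\Gamma$ satisfies $\fc(\Pi)$.

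For $\fc(\Rho)$, fix a rule $\rho\in\Rho$ with head $\varphi$, $\varphi_\forall=i_1\dots i_n$, $\varphi_\exists=j_1\dots j_m$, together with $m,d_1,\dots,d_n\in M$, and name them by nominals $k_0,k_1,\dots,k_n$ as before. Since $\Gamma$ is $\Rho$-saturated, instantiating $\rho$-saturation at the nominals $k_1,\dots,k_n$ produces nominals $j'_1,\dots,j'_m$ with $\varphi[i_1/k_1,\dots,i_n/k_n,j_1/j'_1,\dots,j_m/j'_m]\in\Gamma$. As $\Gamma$ is named, the Truth Lemma shows this formula is true in $\model_\Gamma$ at the point naming $\Gamma$; and since the head of an existential rule is (as in all the cases we treat) a Boolean combination of formulas of the form $@_i\chi$, its truth is node-independent — by \emph{agree} — so $\st_x(\varphi)$ does not depend on $x$ and the formula also holds at $m$. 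Proposition~\ref{prop:stpreserves} then shows that the matching instance of $\st_x(\varphi)$ holds under $x\mapsto m$, $x_{i_\ell}\mapsto d_\ell$, $x_{j_\ell}\mapsto\no(j'_\ell)$, so $\no(j'_1),\dots,\no(j'_m)$ witness the existential quantifiers; ranging over $m,d_1,\dots,d_n$ and over $\Rho$ gives that the underlying frame of $\model_\Gamma$ satisfies $\fc(\Rho)$.

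The main obstacle is not conceptual but bookkeeping: keeping exact the dictionary between ``substituting nominal $i_\ell$ by $k_\ell$ in a formula'' and ``renaming the first-order variable $x_{i_\ell}$ to $x_{k_\ell}$ in its standard translation'' (capture-freeness is automatic, since $\st$ always introduces fresh variables $y,z$), and handling the outermost universal quantifier of each frame condition. For pure axioms that quantifier is handled by \emph{Nec}-ing a theorem at a naming nominal $k_0$; for existential rules it is handled by the fact that their $@$-prefixed heads have node-independent truth, making the $\forall x$ in $\fc(\Rho)$ inessential. Everything else is routine once Lemma~\ref{lemma:truth} and Proposition~\ref{prop:stpreserves} are in hand.
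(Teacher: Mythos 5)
Your proof is correct and is essentially an expanded version of the paper's own (two-sentence) argument: both rest on the facts that every element of $\model_\Gamma$ is named, that $\Gamma$ contains every nominal-instantiation of each pure axiom (hence, via \emph{Nec} and the Truth Lemma, each instance holds at each named point), and that $\Rho$-saturation plus the Truth Lemma and Proposition~\ref{prop:stpreserves} supply the existential witnesses; the variable-renaming bookkeeping you describe is exactly what the paper leaves implicit.

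One remark on the caveat you flag for $\fc(\Rho)$: it is not a defect of your write-up but a genuine subtlety that the paper's one-line proof glosses over. As Definition~\ref{def:pasted} states it, $\rho$-saturation only places $\varphi[\bar{i}/\bar{i'},\bar{j}/\bar{j'}]$ in $\Gamma$ itself, which via the Truth Lemma yields its truth only at the point naming $\Gamma$; the outer $\forall x$ in $\fc(\Rho)$ is then discharged only when the head's truth value is independent of the evaluation point (e.g., when it is a Boolean combination of $@$-prefixed formulas, as in every example the paper considers, where \emph{agree}/\emph{back} give node-independence). For a genuinely node-dependent pure head such as $\tup{\dowa}j$ with $j$ existential, saturation as defined does not by itself give the frame property at every point, and one would either need to strengthen saturation to require $@_k\varphi[\dots]\in\Gamma$ for every nominal $k$, or restrict heads to $@$-formulas. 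So your explicit assumption is exactly the right thing to isolate; under it the argument is complete and matches the paper's intent.
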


\begin{proof}
  Since $\model_\Gamma$ is a named model and $\Gamma$ contains all
  instances of elements of $\Pi$, it follows that the underlying frame
  of $\model_\Gamma$ satisfies $\fc(\Pi)$.  Since $\model_\Gamma$ is a
  named model and $\Gamma$ is $\Rho$-saturated, it follows that the
  underlying frame of $\model_\Gamma$ satisfies $\fc(\Rho)$.
\end{proof}

As a result we obtain the completeness result. 

\begin{theorem}[Strong Completeness]\label{th:complete}
  Let $\Pi$ be a set of pure axioms and $\Rho$ a set of existential
  saturation rules.  Let $\mathfrak{C} = \mathsf{Mod}(\mathfrak{F})$
  for $\mathfrak{F}$ the class of all frames satisfying $\fc(\Pi) \wedge \fc(\Rho)$.
  Then, the axiomatic system $\ahxpd+\Pi+\Rho$ is strongly complete for $\mathfrak{C}$.
\end{theorem}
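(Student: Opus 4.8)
The plan is to run the by-now-standard Henkin-style argument, assembling the lemmas already established, and to prove the contrapositive: assuming $\Gamma \not\vdash_{\ahxpd+\Pi+\Rho} \varphi$, I would produce a model in $\mathfrak{C}$ satisfying $\Gamma \cup \set{\neg\varphi}$, so that $\Gamma \not\models_{\mathfrak{C}} \varphi$.

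First, from $\Gamma \not\vdash \varphi$ it follows immediately (by the definition of syntactic consequence and propositional reasoning) that $\Gamma \cup \set{\neg\varphi}$ is consistent for $\ahxpd + \Pi + \Rho$. I would then pass to the enriched language $\hxpd'$ obtained by adding a countably infinite set $\nom'$ of fresh nominals and apply Lemma~\ref{lemma:saturation}: the consistent set $\Gamma \cup \set{\neg\varphi}$ extends to a named, pasted and $\Rho$-saturated MCS $\Gamma^\omega$ of $\ahxpd' + \Pi + \Rho$, named by some $k \in \nom'$. Here one needs the (routine) remark that adding fresh nominals and the side-conditioned rules is a conservative extension, so that consistency is not lost in the passage from $\hxpd$ to $\hxpd'$; this is exactly the role played by \emph{name'} inside the proof of Lemma~\ref{lemma:lindenbaum}, where the new nominals behave as fresh first-order constants.

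Next, form the extracted model $\model_{\Gamma^\omega}$ of Definition~\ref{def:extracted}; by the two propositions following that definition it is a well-defined abstract hybrid data model, and by Lemma~\ref{lemma:frame} its underlying frame satisfies $\fc(\Pi)\wedge\fc(\Rho)$, so $\model_{\Gamma^\omega}\in\mathfrak{C}$ (passing to the reduct that forgets the interpretation of the nominals in $\nom'$ does not affect the frame, hence it still lies in $\mathsf{Mod}(\mathfrak{F})$). Now invoke the Truth Lemma (Lemma~\ref{lemma:truth}): for every formula $\psi$, $\model_{\Gamma^\omega},\Delta_k\models\psi$ iff $@_k\psi\in\Gamma^\omega$. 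Since $k\in\Gamma^\omega$ and $\Gamma\cup\set{\neg\varphi}\subseteq\Gamma^\omega$, for each $\psi\in\Gamma\cup\set{\neg\varphi}$ we have $\set{k,\psi}\subseteq\Gamma^\omega$, hence $@_k\psi\in\Gamma^\omega$ by Proposition~\ref{prop:mcs}(1), hence $\model_{\Gamma^\omega},\Delta_k\models\psi$. Thus $\model_{\Gamma^\omega},\Delta_k\models\Gamma$ and $\model_{\Gamma^\omega},\Delta_k\models\neg\varphi$, so $\Gamma\not\models_{\mathfrak{C}}\varphi$, which is what the contrapositive required; combined with the soundness direction established above this yields strong completeness.

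The heavy lifting is already done in the Truth Lemma and the Saturation Lemma, so within this proof itself there is no genuinely hard step; the two points that deserve care in a detailed write-up are (i) the conservativity remark for the language extension $\hxpd \to \hxpd'$, which is the one place a hasty argument could slip, and (ii) the verification that $\model_{\Gamma^\omega}$ (or its reduct) genuinely lands in $\mathfrak{C}=\mathsf{Mod}(\mathfrak{F})$, which is Lemma~\ref{lemma:frame} plus the definition of $\mathsf{Mod}$.
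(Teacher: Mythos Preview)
Your proposal is correct and follows essentially the same approach as the paper: extend the consistent set to a named, pasted, $\Rho$-saturated MCS via the Rule Saturation Lemma, build the extracted model, apply the Frame Lemma to land in $\mathfrak{C}$, and use the Truth Lemma together with \emph{$@$-intro} (your Proposition~\ref{prop:mcs}(1)) to conclude satisfaction at the naming point. The only cosmetic difference is that the paper phrases the goal as ``every consistent set is satisfiable'' rather than the contrapositive, and it does not explicitly mention the reduct back to the original nominal signature---your remarks on conservativity and the reduct are a welcome clarification but not a departure.
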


\begin{proof}
  We need to prove that every set of $\hxpd$-formulas $\Sigma$ is
  consistent if and only if $\Sigma$ is satisfiable in an abstract
  hybrid data model satisfying the frame properties defined by $\Pi$
  and $\Rho$.
 
  For any consistent $\Sigma$, we can use the Rule Saturation Lemma to
  obtain $\Sigma^\omega$, which is a named, pasted and $\Rho$-saturated
  MCS in $\hxpd$ extended by a set $\nom'$ of additional nominals.  Let
  $\model_{\Sigma^\omega}=\tup{M,\{\eqrel_e\}_{e\in\eq},
    \{R_{\dowa}\}_{\dowa\in\mo},\lbl,\no}$
  be the extracted model from $\Sigma^\omega$. Let $i\in\Sigma^\omega$, for all $\varphi\in\Sigma$, by \emph{$@$-intro} we have $@_i\varphi\in\Sigma^\omega$ since $\Sigma^\omega$ is MCS.  Then by
  the Truth Lemma, $\model_{\Sigma^\omega},\Delta_i\models\varphi$.  By the Frame Lemma,
  $\model_{\Sigma^\omega}$ satisfies all required frame properties.
\end{proof}

Because the class of abstract data models is a conservative abstraction of
concrete data models, we can conclude:

\begin{corollary}\label{coro:concrcompl}
The axiomatic system $\ahxpd+\Pi+\Rho$ is strongly  complete for 
the class of concrete hybrid data models satisfying the frame conditions
$\fc(\Pi)\wedge\fc(\Rho)$.
\end{corollary}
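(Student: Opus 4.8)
The plan is to reduce the statement to Theorem~\ref{th:complete} via the abstraction/concretization correspondence between the two kinds of data models described after Definition~\ref{def:models}. As in the proof of Theorem~\ref{th:complete}, and using the usual contrapositive argument (if $\Gamma\nvdash\varphi$ then $\Gamma\cup\set{\neg\varphi}$ is $\ahxpd+\Pi+\Rho$-consistent), it suffices to show that every consistent set $\Sigma$ of $\hxpd$-formulas is satisfiable in some concrete hybrid data model whose associated abstract model has a frame satisfying $\fc(\Pi)\wedge\fc(\Rho)$.

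First I would take a consistent $\Sigma$ and apply Theorem~\ref{th:complete} to obtain an abstract hybrid data model $\model=\tup{M,\set{\eqrel_e}_{e\in\eq},\set{R_{\dowa}}_{\dowa\in\mo},\lbl,\no}$ whose underlying frame satisfies $\fc(\Pi)\wedge\fc(\Rho)$, together with $m\in M$ such that $\model,m\models\Sigma$. Then I would \emph{concretize} $\model$ in the canonical way: set $D=\bigcup_{e\in\eq}\set{e}\times(M/{\eqrel_e})$ and define $\model^c=\tup{M,D,\set{R_{\dowa}}_{\dowa\in\mo},\lbl,\no,data}$ with $data(e,n)=(e,[n]_{\eqrel_e})$, where $[n]_{\eqrel_e}$ is the $\eqrel_e$-class of $n$. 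The key observation is that $data(e,n)=data(e,l)$ if and only if $n\eqrel_e l$; consequently the abstract model obtained by re-abstracting $\model^c$ (i.e., replacing $data$ by the relations ``having the same $e$-value'') is exactly $\model$, so $\model^c$ lies in the intended class of concrete models.

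Next I would verify, by a routine simultaneous induction on the structure of node and path expressions, that
\[
\model^c,n,n'\models\alpha \iff \model,n,n'\models\alpha
\qquad\text{and}\qquad
\model^c,n\models\varphi \iff \model,n\models\varphi
\]
for all $n,n'\in M$, all path expressions $\alpha$, and all node expressions $\varphi$; here the semantics of concrete models is read off the clauses for abstract models with the condition $n\eqrel_e l$ replaced by $data(e,n)=data(e,l)$. Every case is immediate from the coincidence of the semantic clauses, except $\varphi=\tup{\alpha=_e\beta}$ and $\varphi=\tup{\alpha\neq_e\beta}$, where one combines the induction hypothesis for $\alpha$ and $\beta$ with the equivalence $data(e,n)=data(e,l)\iff n\eqrel_e l$ recorded above. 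In particular $\model^c,m\models\Sigma$, which concludes the proof.

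The argument is essentially bookkeeping, and the only two points that deserve care are (i) pinning down what it means for a concrete data model to ``satisfy $\fc(\Pi)\wedge\fc(\Rho)$'' — namely, through its associated abstract model — and observing that concretization leaves untouched the interpretations of the relation symbols $R_{\dowa}$ and $D_e$ occurring in these first-order conditions; and (ii) carrying the mutual induction on node and path expressions through cleanly. Neither is a genuine difficulty, which is precisely why the result is stated as a corollary of Theorem~\ref{th:complete}.
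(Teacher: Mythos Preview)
Your proof is correct and is exactly the argument the paper has in mind: the paper states the corollary without proof, merely observing that ``the class of abstract data models is a conservative abstraction of concrete data models,'' which is precisely the abstraction/concretization correspondence you spell out in detail. Your explicit concretization and mutual induction simply unpack that one-line justification.
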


\subsection{Completeness for path formulas}
\label{subsec:pathcomp}

The main contribution of this article is a characterization of (local) semantic
consequence between node formulas by means of an axiomatic
system. Given that soundness of an axiomatic system is usually
granted, this is the main outcome of the strong completeness result we
just proved. In the setting of XPath it also makes sense to discuss
the issue of inference between path formulas. Path inference
can be traced back to~\cite{Pratt91}, in the context of dynamic algebras.  
We will now show how Theorem~\ref{th:complete} can also be used to 
characterize path inference. 

First, recall that previous work like~\cite{cateLM10,AbriolaDFF17}
provided \emph{equational} axiomatizations that characterize
theorems of the form $\vdash \varphi \equiv \psi$ (for $\varphi,\psi$
node expressions) and $\vdash \alpha \equiv \beta$ (for $\alpha,\beta$
path expressions). The first are obviously covered by our results that
characterize theoremhood for arbitrary node expressions (in this
case $\equiv$ is nothing more than $\leftrightarrow$).  For
equivalence between path formulas the following proposition suffices:

\begin{proposition}
Let $\alpha,\beta$ be path expressions in $\hxpd$, then $\models
\alpha \equiv \beta$ iff $\models @_i\tup{\alpha}j \leftrightarrow
@_i\tup{\beta}j$ for $i,j$ not appearing in $\alpha,\beta$.
\end{proposition}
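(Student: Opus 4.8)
The plan is to reduce everything to the observation that the node expression $@_i\tup{\alpha}j$ expresses exactly ``there is an $\alpha$-path from the node named $i$ to the node named $j$''. Concretely, unfolding the abbreviations $@_i\varphi \coloneqq \tup{@_i}\varphi$ and $\tup{\alpha}\varphi \coloneqq \tup{\alpha[\varphi]=_e\alpha[\varphi]}$ and then applying the semantic clauses --- exactly as in the remark preceding Proposition~\ref{coro:hybridclassic}, using reflexivity of the relations $\eqrel_e$ --- one checks that for every abstract hybrid data model $\model$ with nominal map $\no$ and every point $m$,
\[
\model,m\models @_i\tup{\alpha}j \quad\iff\quad \model,\no(i),\no(j)\models\alpha ,
\]
and likewise with $\beta$ in place of $\alpha$. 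Here $i,j$ are understood to be two distinct nominals. Recall also that $\models\alpha\equiv\beta$ means that $\alpha$ and $\beta$ denote the same binary relation in all models, i.e.\ $\model,m,n\models\alpha$ iff $\model,m,n\models\beta$ for every $\model$ and all points $m,n$.

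For the left-to-right implication, assume $\models\alpha\equiv\beta$. Then for any $\model$ and any values of $\no(i),\no(j)$ we have $\model,\no(i),\no(j)\models\alpha$ iff $\model,\no(i),\no(j)\models\beta$, so by the displayed equivalence $\model,m\models @_i\tup{\alpha}j$ iff $\model,m\models @_i\tup{\beta}j$ for all $\model,m$; hence $\models @_i\tup{\alpha}j\leftrightarrow @_i\tup{\beta}j$. This direction uses neither the side condition nor any freedom in the choice of $\no$.

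For the converse I would argue contrapositively. Suppose $\not\models\alpha\equiv\beta$ and fix $\model$ and points $m,n$ witnessing the failure, say $\model,m,n\models\alpha$ and $\model,m,n\not\models\beta$ (the other case is symmetric). Let $\model'$ be $\model$ with its nominal map replaced by $\no'$ defined by $\no'(i)=m$, $\no'(j)=n$, and $\no'(k)=\no(k)$ for every other nominal $k$ (if $m=n$ this simply sets $\no'(i)=\no'(j)=m$, which is permitted, since $\no'$ need not be injective). Since neither $i$ nor $j$ occurs in $\alpha$ or in $\beta$, the standard coincidence lemma --- a routine induction on node and path expressions --- shows that $\model$ and $\model'$ assign the same truth value to $\alpha$, to $\beta$, and to every test occurring inside them, at every pair of points; in particular $\model',m,n\models\alpha$ and $\model',m,n\not\models\beta$. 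By the displayed equivalence this yields $\model',m\models @_i\tup{\alpha}j$ while $\model',m\not\models @_i\tup{\beta}j$, so $\not\models @_i\tup{\alpha}j\leftrightarrow @_i\tup{\beta}j$, as required.

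There is no real obstacle: the only step that must be asserted rather than computed is the coincidence lemma used in the converse direction, and the hypothesis ``$i,j$ do not appear in $\alpha,\beta$'' is precisely what makes it applicable. One may additionally remark that, combined with Theorem~\ref{th:complete}, the same equivalence gives a proof-theoretic counterpart: one can define $\vdash\alpha\equiv\beta$ to mean $\vdash @_i\tup{\alpha}j\leftrightarrow @_i\tup{\beta}j$ and obtain completeness of path inference for free.
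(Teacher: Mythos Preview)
Your argument is correct and is exactly the natural one; the paper itself states this proposition without proof, treating it as an immediate observation, so there is nothing to compare against beyond noting that your writeup spells out the details the paper leaves implicit. Your remark that $i$ and $j$ should be taken distinct is a genuine (if minor) clarification of the statement, and your use of the coincidence lemma under the freshness hypothesis is precisely the point where the side condition is needed.
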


Hence, completeness for equivalence theorems between path formulas follow from
Theorem~\ref{th:complete}.  More interesting is to consider whether a
strong completeness result for path consequence is also possible.  
We need first to introduce this notion. There exists at least two natural
possible definitions:

\begin{definition}
Let $\Lambda \cup \{\alpha\}$ be a set of path formulas in $\hxpd$, 
and let $\mathfrak{C}$ be a class of models. 
Define
\begin{itemize}
\item $\Lambda \models^1_{\mathfrak{C}} \alpha$ iff for any model
  $\model \in \mathfrak{C}$, ($\forall  m,n$, 
$\model,m,n \models \Lambda$ implies $\model,m,n \models \alpha$).
\item $\Lambda \models^2_{\mathfrak{C}} \alpha$ iff for any model
  $\model \in \mathfrak{C}$,
($\forall m,n$, $\model,m,n \models \Lambda$ implies $\forall m,n$, $\model,m,n \models \alpha$).
\end{itemize}
\end{definition}

As before we write $\models$ instead of $\models_{\mathfrak{C}}$ if
the intended class is clear from context.  From the above definition
it is easy to show that $\models^1_{\mathfrak{C}}$ implies
$\models^2_{\mathfrak{C}}$ but not vice versa.  
Now, both $\models^1$ and $\models^2$ can be captured using Theorem~\ref{th:complete}.

\begin{theorem}
Let $\Pi$ be a set of pure axioms and $\Rho$ be a set of existential
  saturation rules.  Let $\mathfrak{C} = \mathsf{Mod}(\mathfrak{F})$
  for $\mathfrak{F}$ the class of frames satisfying $\fc(\Pi) \wedge \fc(\Rho)$.
Let $\Lambda \cup  \{\alpha\}$ be a set of path expressions in $\hxpd$
then
\begin{enumerate}
\item $\Lambda \models^1_{\mathfrak{C}}\alpha$ implies $\{@_i\tup{\beta}j
  \mid \beta \in \Lambda\} \vdash @_i\tup{\alpha}j$,
  for $i,j$ not in $\Lambda \cup \{\alpha\}$.
\item $\Lambda \models^2_{\mathfrak{C}}\alpha$ implies $\{@_i\tup{\beta}j
  \mid \beta \in \Lambda\} \vdash @_k\tup{\alpha}l$,
  for $i,j,k,l$ not in $\Lambda \cup \{\alpha\}$.
\end{enumerate}

\end{theorem}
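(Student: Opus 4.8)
The plan is to reduce both statements to the strong completeness result for node formulas (Theorem~\ref{th:complete}), by encoding a path $\beta$ as the node formula $@_i\tup{\beta}j$, where $i$ and $j$ are fresh nominals naming the two endpoints of the path. The key observation is that, for any nominals $i,j$, the formula $@_i\tup{\beta}j$ is \emph{global}: its truth at a state $m$ of a model $\model$ does not depend on $m$, since by the semantics (cf.\ Proposition~\ref{coro:hybridclassic}) $\model,m\models@_i\tup{\beta}j$ holds iff $\model,\no(i),\no(j)\models\beta$ (the endpoint test $j$ forces the witness of $\tup{\beta}j$ to be $\no(j)$). Hence semantic consequence between the encoded formulas faithfully mirrors semantic consequence between paths, and Theorem~\ref{th:complete} then supplies the corresponding $\ahxpd+\Pi+\Rho$ derivation. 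Throughout, fix distinct nominals $i,j$ (and, for item~2, also $k,l$) not occurring in $\Lambda\cup\{\alpha\}$, and set $\Gamma=\{@_i\tup{\beta}j\mid\beta\in\Lambda\}$.

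For item~1 I would first establish the \emph{semantic} implication: if $\Lambda\models^1_{\mathfrak{C}}\alpha$ then $\Gamma\models_{\mathfrak{C}}@_i\tup{\alpha}j$. Take $\model\in\mathfrak{C}$ and a state $m$ with $\model,m\models\Gamma$; by the observation above, $\model,\no(i),\no(j)\models\beta$ for every $\beta\in\Lambda$, i.e.\ $\model,\no(i),\no(j)\models\Lambda$. Since $\Lambda\models^1_{\mathfrak{C}}\alpha$, we obtain $\model,\no(i),\no(j)\models\alpha$, hence $\model,m\models@_i\tup{\alpha}j$. This proves $\Gamma\models_{\mathfrak{C}}@_i\tup{\alpha}j$, and since $\ahxpd+\Pi+\Rho$ is strongly complete for $\mathfrak{C}$ (Theorem~\ref{th:complete}) --- which applies even when $\Lambda$, hence $\Gamma$, is infinite --- we conclude $\Gamma\vdash@_i\tup{\alpha}j$.

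Item~2 is identical up to the last step. Again from $\model,m\models\Gamma$ we get $\model,\no(i),\no(j)\models\Lambda$; now $\Lambda\models^2_{\mathfrak{C}}\alpha$ gives $\model,m',n'\models\alpha$ for \emph{all} states $m',n'$ of $\model$, in particular for $m'=\no(k)$ and $n'=\no(l)$, so $\model,m\models@_k\tup{\alpha}l$. Thus $\Gamma\models_{\mathfrak{C}}@_k\tup{\alpha}l$, and Theorem~\ref{th:complete} yields $\Gamma\vdash@_k\tup{\alpha}l$. There is no real difficulty in this argument; the only delicate point is the bookkeeping of the freshness and distinctness conditions on $i,j,k,l$, which is precisely what guarantees that the encoded formulas act as genuinely generic statements about arbitrary pairs of states (rather than being accidentally constrained by $\alpha$ or by each other), together with the globality of $@$-prefixed formulas, which makes the evaluation point of the node-consequence relation irrelevant. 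I would finally remark that the converses also hold by soundness --- given $\model,m,n\models\Lambda$ one re-names $i,j$ (resp.\ $i,j,k,l$) to the relevant states, which is legitimate since the naming function is arbitrary in $\mathsf{Mod}(\mathfrak{F})$ and these nominals do not occur in $\Lambda\cup\{\alpha\}$ --- but only the stated directions are required.
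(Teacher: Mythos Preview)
Your proposal is correct and follows essentially the same approach as the paper: reduce path consequence to node consequence via the encoding $\beta\mapsto @_i\tup{\beta}j$ with fresh $i,j$, establish the corresponding semantic implication, and then invoke the strong completeness Theorem~\ref{th:complete}. The paper's own proof is a two-line sketch stating precisely the semantic biconditionals you spell out (and likewise notes the converses), so your more detailed argument is simply a fleshed-out version of the same idea.
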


\begin{proof}
The result is a corollary of Theorem~\ref{th:complete} because the
following hold:
\begin{enumerate}
\item $\Lambda \models^1_{\mathfrak{C}}\alpha$ iff $\{@_i\tup{\beta}j
  \mid \beta \in \Lambda\} \models_{\mathfrak{C}}@_i\tup{\alpha}j$,
  for $i,j$ not in $\Lambda \cup \{\alpha\}$.
\item $\Lambda \models^2_{\mathfrak{C}}\alpha$ iff $\{@_i\tup{\beta}j
  \mid \beta \in \Lambda\} \models_{\mathfrak{C}}@_k\tup{\alpha}l$,
  for $i,j,k,l$ not in $\Lambda \cup \{\alpha\}$.\qedhere
\end{enumerate}

\end{proof}


\subsection{Some Concrete Examples of Pure Extensions}
\label{subsec:pure}
In this section we introduce some extensions of $\hxpd$, and their
corresponding extended axiomatic systems. In all cases, we can apply
the result from previous section and automatically obtain
completeness, since we only use pure axioms and/or existential
saturation rules.

\paragraph{\bf Backwards Navigation. }
We start with the language $\hxpv$, i.e., $\hxpd$ extended with the
path expression $\dowam$ (backwards navigation).  The intuitive semantics
for $\dowam$ is
$$
\model,x,y\models \dowam ~~\iff ~~  x R_{\dowa}^{-1} y,
$$
i.e., $\dowam$ is interpreted on the inverse of the accessibility
relation associated to $\dowa$. Equivalently, 
$$
\model,x,y\models \dowam ~~\iff ~~  y R_{\dowa} x.
$$

In fact, following our presentation, $\dowam$ is an
additional modal operator from $\mo$ (to which all rules and axioms of $\ahxpd$
apply), and in addition we will insist that in models of $\hxpv$, 
this accessibility relation is always interpreted as the inverse of 
the one for $\dowa$.

Formally, consider two modal symbols $\dowa$ and $\dowam$ in $\mo$ (and
assume that their respective accessibility relations in a model are
$R_{\dowa}$ and $R_{\dowam}$).  Define the set $\Pi_1$ of pure axioms
that characterizes the interaction between $\dowa$ and $\dowam$ as

\begin{center}
\begin{tabular}{l}
\toprule 
{\bf Axioms for $\dowa,\dowam$-interaction} \\
\midrule
\begin{tabular}{l@{~~~~}l}
down-up & $\vdash i\ra [\dowa]\tup{\dowam}i$ \\
up-down & $\vdash i\ra [\dowam]\tup{\dowa}i$	
\end{tabular} \\
\bottomrule
\end{tabular}
\end{center}

Since the axioms presented above are pure, the axiom system we obtain
is complete for models obtained from frames satisfying $\fc(\Pi_1)$.
It is a simple exercise to verify that $\fc(\Pi_1)$ is equivalent to
$\forall x.\forall y.(R_{\dowa}(x,y) \leftrightarrow R_{\dowam}(y,x))$.

\begin{proposition}
\label{lemma:inverse}
$\ahxpd + \Pi_1$ is sound and strongly complete  for $\hxpv$.
\end{proposition}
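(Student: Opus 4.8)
The plan is to obtain the statement as an immediate instance of Theorem~\ref{th:complete}, taking $\Pi = \Pi_1$ and $\Rho$ empty. By the convention fixed above, $\dowam$ is treated as just another symbol of $\mo$, so every axiom and rule of $\ahxpd$ applies to it unchanged, and the general soundness and strong completeness machinery already developed is available. The only work that remains is (i) to check that $\Pi_1$ is sound over the intended models, and (ii) to verify that the class of frames cut out by $\fc(\Pi_1)$ is exactly the class of $\hxpv$-frames, that is, those in which $R_{\dowam} = R_{\dowa}^{-1}$.

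For (i), I would argue semantically using Proposition~\ref{coro:hybridclassic}. If $\model, m \models i$ then $m = \no(i)$; and for every $n$ with $m R_{\dowa} n$ we have $n R_{\dowam} m$ in an $\hxpv$-model, hence $n R_{\dowam} \no(i)$, so $\model, n \models \tup{\dowam}i$; therefore $\model, m \models [\dowa]\tup{\dowam}i$. Validity of $i \to [\dowam]\tup{\dowa}i$ is symmetric. Combined with the soundness of $\ahxpd$ (which never mentions $\dowam$), this yields soundness of $\ahxpd + \Pi_1$ over all $\hxpv$-models. (Alternatively, soundness of $\Pi_1$ is automatic, since by construction $\fc(\Pi_1)$ is the universal closure of the standard translations of the axioms in $\Pi_1$, which are valid exactly on models of $\fc(\Pi_1)$ by Proposition~\ref{prop:stpreserves}.)

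For (ii), I would unfold the standard translation of Definition~\ref{def:standtr} (using Proposition~\ref{prop:stpreserves} to handle the abbreviations $[\dowa]\psi$ and $\tup{\dowam}i$, which translate to the expected universal and existential first-order patterns). Then $\st_x(i \to [\dowa]\tup{\dowam}i)$ is equivalent to $x = x_i \to \forall y.(R_{\dowa}(x,y) \to R_{\dowam}(y,x_i))$, whose universal closure over $x$ and $x_i$ is equivalent to $\forall x. \forall y.(R_{\dowa}(x,y) \to R_{\dowam}(y,x))$; the second axiom symmetrically contributes $\forall x. \forall y.(R_{\dowam}(x,y) \to R_{\dowa}(y,x))$. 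Hence $\fc(\Pi_1)$ is logically equivalent to $\forall x. \forall y.(R_{\dowa}(x,y) \leftrightarrow R_{\dowam}(y,x))$, i.e., to $R_{\dowam} = R_{\dowa}^{-1}$. So, with $\mathfrak{F}$ the class of frames satisfying $\fc(\Pi_1)$, the class $\mathsf{Mod}(\mathfrak{F})$ is precisely the class of $\hxpv$-models, and Theorem~\ref{th:complete} gives strong completeness of $\ahxpd + \Pi_1$ for $\hxpv$.

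The only genuinely delicate step is the bookkeeping in (ii): one must be careful that $[\dowa]$ and $\tup{\dowam}$, being abbreviations rather than primitives, have standard translations that unfold as claimed, and that quantifying universally over the nominal variable $x_i$ together with the antecedent $x = x_i$ correctly eliminates $x_i$ from the frame condition. Everything else is a direct appeal to results already established in the paper.
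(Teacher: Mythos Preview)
Your proposal is correct and follows exactly the paper's approach: the paper states the proposition immediately after observing that $\Pi_1$ is pure and that $\fc(\Pi_1)$ is equivalent to $\forall x.\forall y.(R_{\dowa}(x,y) \leftrightarrow R_{\dowam}(y,x))$, treating the result as a direct instance of Theorem~\ref{th:complete}. You simply spell out the ``simple exercise'' that the paper leaves implicit.
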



\paragraph{\bf Sibling Navigation.}
We consider now the extension of $\hxpv$ with sibling navigation $\sib$,
denoted $\hxps$. Intuitively, 
$$
\model,x,y\models \sib ~~\iff ~~ x\neq y \mbox{ and there is some $z$ s.t. $z R_{\dowa} x$ and $zR_{\dowa}y$,}
$$
where $\dowa\in\mo$ is fixed. 

Consider now three modal symbols $\dowa$, $\dowam$ and $\sib$ in $\mo$ (and
their respective accessibility relations $R_{\dowa}$, $R_{\dowam}$ and
$R_{\sib}$).  Define the set $\Pi_2$ of pure axioms
that characterizes their interaction as the axioms in $\Pi_1$ together with

\begin{center}
\begin{tabular}{l}
\toprule 
{\bf Axioms for siblings} \\
\midrule
\begin{tabular}{l@{~~~~}l}
is-sib & $\vdash\tup{\sib} i \ra \tup{\dowam}\tup{\dowa} i$ \\
has-sib & $\vdash\neg i \wedge \tup{\dowam}\tup{\dowa} i \ra \tup{\sib}i$ \\
irref-sib & $\vdash i\ra \neg\tup{\sib} i$	
\end{tabular} \\
\bottomrule
\end{tabular}
\end{center}

As $\Pi_2$ extends $\Pi_1$, $\fc(\Pi_2)$ ensures that $R_{\dowam}$ and
$R_{\dowa}$ are inverses, and in addition that 
$\forall x.\forall y.(R_{\sib}(x,y) \leftrightarrow ((x \not = y)
\wedge (\exists z. (R_{\dowam} (x,z) \wedge R_{\dowa} (z,y)))))$.

\begin{proposition}
$\ahxpd + \Pi_2$ is sound and strongly complete  for $\hxps$.
\end{proposition}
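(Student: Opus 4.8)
The plan is to obtain this as an instance of the Strong Completeness Theorem (Theorem~\ref{th:complete}), taking $\Pi = \Pi_2$ and $\Rho = \emptyset$. Since $\dowam$ and $\sib$ are treated simply as additional symbols of $\mo$, the whole machinery of $\ahxpd$ (in particular all the path axioms) already applies to them, so only the underlying frame class needs to be pinned down. The first thing I would check is that $\Pi_2$ consists only of \emph{pure} axioms: none of \emph{down-up}, \emph{up-down}, \emph{is-sib}, \emph{has-sib} or \emph{irref-sib} contains a propositional symbol, only nominals and modalities. Granting that, Theorem~\ref{th:complete} immediately yields that $\ahxpd + \Pi_2$ is strongly complete for $\mathsf{Mod}(\mathfrak{F})$, where $\mathfrak{F}$ is the class of abstract data frames satisfying $\fc(\Pi_2)$, while soundness over that class is delivered by the Soundness Theorem. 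Hence the statement reduces to showing that $\mathsf{Mod}(\mathfrak{F})$ is precisely the intended class of $\hxps$-models, i.e.\ that $\fc(\Pi_2)$ expresses exactly that $R_{\dowam}$ is the converse of $R_{\dowa}$ and that $R_{\sib}$ is the sibling relation induced by $R_{\dowa}$.

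For the converse-relation half I would reuse the computation already done for $\hxpv$: since $\Pi_1 \subseteq \Pi_2$, and as recorded before Proposition~\ref{lemma:inverse}, $\fc(\Pi_1)$ is equivalent to $\forall x\forall y.(R_{\dowa}(x,y) \leftrightarrow R_{\dowam}(y,x))$. For the three sibling axioms I would simply unfold their standard translations (Definition~\ref{def:standtr}); writing $x_i$ for the variable associated with the nominal $i$, one gets $\st_x(\tup{\sib}i) \equiv R_{\sib}(x,x_i)$, $\st_x(\tup{\dowam}\tup{\dowa}i) \equiv \exists z.(R_{\dowam}(x,z) \wedge R_{\dowa}(z,x_i))$ and $\st_x(i) \equiv x = x_i$ (the $Eq(e)$-conjuncts being vacuous on abstract data frames). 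Taking universal closures, \emph{is-sib} becomes $\forall x\forall y.(R_{\sib}(x,y) \to \exists z.(R_{\dowam}(x,z) \wedge R_{\dowa}(z,y)))$, \emph{has-sib} becomes $\forall x\forall y.((x \neq y \wedge \exists z.(R_{\dowam}(x,z) \wedge R_{\dowa}(z,y))) \to R_{\sib}(x,y))$ and \emph{irref-sib} becomes $\forall x.\neg R_{\sib}(x,x)$; their conjunction is $\forall x\forall y.(R_{\sib}(x,y) \leftrightarrow (x \neq y \wedge \exists z.(R_{\dowam}(x,z) \wedge R_{\dowa}(z,y))))$, which after substituting $R_{\dowam} = R_{\dowa}^{-1}$ from the first half becomes the intended sibling semantics $R_{\sib}(x,y) \leftrightarrow (x \neq y \wedge \exists z.(R_{\dowa}(z,x) \wedge R_{\dowa}(z,y)))$.

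Putting these together, $\mathfrak{F}$ is exactly the class of frames on which $\dowam$ and $\sib$ carry their intended readings, so $\mathsf{Mod}(\mathfrak{F})$ coincides with the class of $\hxps$-models, and the proposition follows from Theorem~\ref{th:complete}. I do not expect a genuine obstacle here: the computations are mechanical, and the only point that deserves a line of care is that $\fc$ quantifies universally over the nominal-variables $x_i$, so that --- the naming function $\no$ ranging over all valuations in $\mathsf{Mod}(\mathfrak{F})$ --- the frame conditions hold at every element, not merely at named ones; but this is precisely the device already exploited in the general completeness argument, so no new work is required.
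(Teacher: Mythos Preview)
Your proposal is correct and matches the paper's approach exactly: the paper does not give an explicit proof but simply records (in the paragraph preceding the proposition) that $\fc(\Pi_2)$ is equivalent to the conjunction of the inverse condition for $R_{\dowam}$ and the sibling condition $\forall x.\forall y.(R_{\sib}(x,y) \leftrightarrow ((x \neq y) \wedge \exists z.(R_{\dowam}(x,z) \wedge R_{\dowa}(z,y))))$, leaving the proposition as an instance of Theorem~\ref{th:complete}. Your unfolding of the standard translations is precisely the verification the paper leaves implicit.
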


\paragraph{\bf Data Equality Properties.}
As a final, very simple, example let us consider pure axioms defining
the behaviour of equality tests.  Consider a language with two equality
test operators $=_e$ and $=_d$ such that one defines finer
equivalence classes than the other (i.e., if $\sim_e$ and $\sim_d$ are
their respective accessibility relations we want to ensure that
${\sim_e} \subseteq {\sim_d}$). Let us call this logic $\hxpe$. 
Let $\Pi_3$ be the singleton set containing the axiom 

\begin{center}
\begin{tabular}{l}
\toprule 
{\bf Axioms for equality inclusion} \\
\midrule
\begin{tabular}{l@{~~~~}l}
incl & $\vdash\tup{@_i=_e@_j} \ra \tup{@_i=_d@_j}$ \\
\end{tabular} \\
\bottomrule
\end{tabular}
\end{center}

$\fc(\Pi_3)$ is equivalent to $\forall x.\forall y. (D_e (x,y) \to D_d (x,y))$.

\begin{proposition}
$\ahxpd + \Pi_3$ is sound and strongly complete  for $\hxpe$.
\end{proposition}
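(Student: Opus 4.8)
The plan is to read this proposition off the general Strong Completeness Theorem (Theorem~\ref{th:complete}) by instantiating $\Pi := \Pi_3$ and $\Rho := \emptyset$, so that $\ahxpd + \Pi_3$ is of the form $\ahxpd + \Pi + \Rho$. Here $\hxpe$ is simply $\hxpd$ over an equality signature $\eq$ containing two symbols $e$ and $d$, and its intended class of models consists of those abstract hybrid data models that in addition satisfy $\eqrel_e \subseteq \eqrel_d$. The first thing to observe is that the single axiom \emph{incl}, namely $\tup{@_i=_e@_j}\ra\tup{@_i=_d@_j}$, contains no propositional symbols (only the nominals $i$ and $j$), so $\Pi_3$ genuinely is a set of pure axioms and Theorem~\ref{th:complete} applies without modification.

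The one computation to carry out is that $\fc(\Pi_3)$ carves out exactly the desired frame class. Using Definition~\ref{def:standtr}, $\st'_x(\tup{@_i=_e@_j}) = \exists y\,\exists z\,\bigl((y{=}x_i)\wedge(z{=}x_j)\wedge D_e(y,z)\bigr)$, which is logically equivalent to $D_e(x_i,x_j)$, and likewise for $=_d$; hence $\st'_x(\emph{incl})$ is equivalent to $D_e(x_i,x_j)\to D_d(x_i,x_j)$, with the leading quantifier over $x$ vacuous. Since over abstract data frames the conjuncts $Eq(e)\wedge Eq(d)$ occurring in $\st_x$ are automatically satisfied (equivalence of $\eqrel_e$ and $\eqrel_d$ is built into Definition~\ref{def:models}), $\fc(\Pi_3)$ is equivalent to $\forall x\,\forall y\,(D_e(x,y)\to D_d(x,y))$, i.e.\ to $\eqrel_e\subseteq\eqrel_d$. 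Thus $\mathsf{Mod}(\mathfrak{F})$ for $\mathfrak{F}$ the frames satisfying $\fc(\Pi_3)$ is precisely the class of models of $\hxpe$, as required.

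Soundness of $\ahxpd+\Pi_3$ over this class follows from the general soundness theorem for $\ahxpd$ together with a one-line check of \emph{incl}: if $\model,m\models\tup{@_i=_e@_j}$ then, unfolding the semantic clause for $\tup{\alpha=_e\beta}$ with $\alpha=@_i$ and $\beta=@_j$, we get $\no(i)\eqrel_e\no(j)$, hence $\no(i)\eqrel_d\no(j)$ by the frame condition, hence $\model,m\models\tup{@_i=_d@_j}$. Strong completeness is then immediate from Theorem~\ref{th:complete}, and strong completeness over concrete data models with the corresponding restriction follows from Corollary~\ref{coro:concrcompl}. There is no real obstacle here: the entire Henkin development of Section~\ref{subsec:complete} is uniform in $\eq$ (all data axioms, such as \emph{equal}, \emph{distinct}, \emph{$\epsilon$-trans}, \emph{$@*$-dist}, are schemes over $e\in\eq$, so they already govern $=_d$), so adding the symbol $d$ requires no new machinery; the whole content of the proposition is the routine translation computation above plus the remark that $\Pi_3$ is pure.
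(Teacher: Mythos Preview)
Your proposal is correct and follows exactly the route the paper intends: the paper simply records that $\fc(\Pi_3)$ is equivalent to $\forall x.\forall y.(D_e(x,y)\to D_d(x,y))$ and states the proposition without further argument, treating it as an immediate instance of Theorem~\ref{th:complete}. You have spelled out the standard-translation computation and the soundness check in more detail than the paper does, but there is no difference in approach.
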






\section{Hybrid XPath on Trees}
\label{sec:tree}
The Henkin-style model construction from Section~\ref{subsec:complete}
provides us with the tools to obtain complete axiomatizations for a
wide variety of extensions of $\hxpd$.  However there are interesting
cases in which completeness is not a direct corollary of the
results we already presented.  One such case is the class of tree
models which we call $\ct$.

We will devote this section to investigate hybrid \xpath over the
class of tree models. First, in Section~\ref{subsec:treeincomp} we
will show that it is not possible to get a strongly complete finitary
first-order axiomatization over trees.  In
Section~\ref{subsec:treepure} we show that it is possible to obtain a
strongly complete infinite first-order axiomatization over a slightly
larger class of models, while in Section~\ref{subsec:treeweak} we
prove that this axiomatization is weakly complete for $\ct$.

Tree models are interesting in this context since \xpath is commonly used as a query
language over \xml documents; and mathematically, the relational structure of 
an \xml document is a tree. In our setting, this consists of asking that the 
union of all relations in a model has a tree-shape.

Without loss of generality, in the rest of this section we will work
in a mono-modal setting, with a unique basic path expression $\dowa$,
and with single equality/inequality comparisons denoted $=$ and
$\neq$. The path expression $\dowa$ will be interpreted over an
accessibility relation $R_{\dowa}$, representing the union of all the
accessibility relations in the model. To achieve this, we need to work
in a signature where the set $\mo$ is finite. Therefore, we can include the 
followig additional axiom:

\begin{center}
    \begin{tabular}{l}
    \toprule
    \begin{tabular}{l@{\ \ \ }l}
    $\dowa$-definition & $\tup{\dowa}\top \lra \tup{\dowa_1\cup\ldots\cup\dowa_k}\top$ \ \ \small{with $\mo=\set{\dowa_1,\ldots,\dowa_k}$} \\
    \end{tabular} \\
    \bottomrule
    \end{tabular}
\end{center}

It is a straightforward exercise to show that this axiom characterizes $R_{\dowa}$ as the union
of all the relations of the model, in a signature where $\mo$ is finite. 
As a consequence, in this section we will focus on axiomatizing $R_{\dowa}$ as the
acessibility relation of a tree. 
In this way, we obtain
the intended meaning for \xpath as a query language for \xml documents.

Hence, a model in this signature is a tuple
$\model=\tup{M,\eqrel,R_{\dowa},\lbl,\no}$, in which
we have a unique data equality relation $\eqrel$ and the 
accessibility relation $R_{\dowa}$.  

\subsection{Failure of Strong Completeness over Trees}
\label{subsec:treeincomp}

Let us start by formally defining the structures we will consider. 

\begin{definition}\label{def:ctree}
  $\ct$ is defined as the class of abstract hybrid data
  models $\tup{M,\eqrel, R_{\dowa},\lbl,\no}$ satisfying:
  \begin{itemize}
  \item there is a unique point in $M$ without predecessors by
    $R_{\dowa}$, which we call the \emph{root};
  \item every node $n\in M$ is reachable from the root by $R_{\dowa}$
    in zero or more steps;
  \item for all $n,l,l'\in M$, if $l R_{\dowa} n$ and
    $l' R_{\dowa} n$, then $l=l'$; and
  \item there is no $n\in M$ such that $n$ is reachable by $R_{\dowa}$
    from itself in one or more steps.
  \end{itemize}
\end{definition}

\noindent
Intuitively, a model
$\tup{M,\eqrel, R_{\dowa},\lbl,\no}$ is in $\ct$ if $\tup{M,R_{\dowa}}$
is a tree.

In order to show that there is no finitary first-order axiomatization
for $\hxpd$ over $\ct$, we will use a standard tool from first-order
model theory: the \emph{compacteness} theorem. In particular, we will
show that $\hxpd$ over $\ct$ is not compact, and consequently, a
finitary axiomatization may not exist (see, e.g.,~\cite{Enderton:2001}
for details).

\begin{proposition}\label{prop:treenoncompact}
 $\hxpd$ is not compact over the class $\ct$.   
\end{proposition}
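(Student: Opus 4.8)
The plan is to establish non-compactness directly, by exhibiting an infinite set $\Gamma$ of $\hxpd$-formulas such that every finite $\Gamma_0\subseteq\Gamma$ is satisfiable in some model of $\ct$, while $\Gamma$ itself has no model in $\ct$. The semantic fact to exploit is that, by Definition~\ref{def:ctree}, every node of a tree has a well-defined finite \emph{depth} --- the length of the unique $R_{\dowa}$-path from the root --- so no node can consistently be required to possess an infinite strictly ascending chain of ancestors. Concretely, I would fix a sequence $c_0,c_1,c_2,\dots$ of pairwise distinct nominals and take
\[
\Gamma \;=\; \{\, @_{c_{n+1}}\tup{\dowa}c_n \mid n\in\NN \,\}.
\]
By the abbreviations of Section~\ref{subsec:synsem} together with Proposition~\ref{coro:hybridclassic}, in any model $\model$ the formula $@_{c_{n+1}}\tup{\dowa}c_n$ holds (at every point, since it is prefixed by $@$) iff $\no(c_{n+1})\,R_{\dowa}\,\no(c_n)$; so $\Gamma$ asserts that the named points form an infinite chain $\cdots R_{\dowa}\no(c_2)R_{\dowa}\no(c_1)R_{\dowa}\no(c_0)$.

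First I would check that every finite $\Gamma_0\subseteq\Gamma$ is satisfiable in $\ct$. Let $N$ be at least the largest index of a nominal occurring in $\Gamma_0$, and take $\model$ to be the finite path $v_0\,R_{\dowa}\,v_1\,R_{\dowa}\cdots R_{\dowa}\,v_N$, with $\eqrel$ the identity, the empty propositional valuation, and every unmentioned nominal sent to $v_0$; this $\model$ plainly lies in $\ct$. Setting $\no(c_k):=v_{N-k}$ for $0\le k\le N$, we get $\no(c_{n+1})=v_{N-n-1}\,R_{\dowa}\,v_{N-n}=\no(c_n)$ for every $n<N$, hence $\model$ satisfies every formula of $\Gamma_0$.

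Next I would show $\Gamma$ has no model in $\ct$. Assume $\model=\tup{M,\eqrel,R_{\dowa},\lbl,\no}\in\ct$ and $\model,m\models\Gamma$, so $\no(c_{n+1})R_{\dowa}\no(c_n)$ for all $n$. Since in a tree the uniqueness of predecessors forces $\mathrm{depth}(v)=\mathrm{depth}(u)+1$ whenever $uR_{\dowa}v$, iterating along the chain yields $\mathrm{depth}(\no(c_0))=\mathrm{depth}(\no(c_n))+n\ge n$ for every $n\in\NN$, contradicting that $\mathrm{depth}(\no(c_0))$ is a fixed natural number. (Equivalently: acyclicity forces all the $\no(c_n)$ to be distinct, since an identification $\no(c_n)=\no(c_m)$ with $n<m$ would produce an $R_{\dowa}$-cycle of length $m-n$; then $\no(c_0)$ would have infinitely many distinct ancestors, impossible at finite depth.) Hence $\Gamma$ is unsatisfiable in $\ct$, so $\hxpd$ is not compact over $\ct$, which is Proposition~\ref{prop:treenoncompact}.

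The one delicate design choice --- and the step I expect to be the real obstacle --- is the shape of $\Gamma$. One must resist tempting alternatives such as fixing a root nominal $r$ and a single witness $c$ and asserting paths $@_r\tup{\dowa}\cdots\tup{\dowa}c$ of growing length, because any two such assertions of different lengths are already jointly unsatisfiable in a tree (paths between two named nodes are unique), so no infinite family of them has all finite subsets satisfiable. Using a fresh nominal for each link of the ancestor chain is precisely what keeps every finite fragment satisfiable while the whole set is not. Everything else is routine: it amounts to unwinding $@$ and $\tup{\dowa}$ via Proposition~\ref{coro:hybridclassic}, and no transitive-closure operator or backward modality is required.
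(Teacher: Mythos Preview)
Your proof is correct and follows essentially the same idea as the paper: force an infinite ascending $R_{\dowa}$-chain of named points, which is finitely satisfiable in $\ct$ but globally contradicts the existence of a root. The paper packages the constraints into growing conjunctions $\lin(n)$ that additionally assert pairwise distinctness of the nominals $0,\ldots,n$ via auxiliary formulas $\only{n}{k}$, whereas you take the individual edge formulas $@_{c_{n+1}}\tup{\dowa}c_n$ and let the depth function (equivalently, acyclicity) do the work of distinctness. Your version is slightly leaner, since the explicit distinctness clauses in the paper are in fact redundant over trees; otherwise the two arguments are the same.
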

\begin{proof}
  We need to show that there exists an infinite set of
  $\hxpd$-formulas such that every finite subset
  $\Gamma^{fin}\subset \Gamma$ is satisfiable in the class $\ct$, but
  $\Gamma$ is not.

    Without loss of generality, we will use natural numbers as names for nominals,
    i.e., $\nom=\nat$. We start by defining the following formula indicating that, at
    the given point, the nominal $k$ is the only one satisfied from the set $\set{0,\ldots,n}$:
    $$
     \only{n}{k}  =  @_k(\underset{0\leq j\neq k \leq n}{\bigwedge} \neg j).   
    $$
    We can define for $n \ge 0$:
    \[
    \lin(n) = \bigwedge_{0< i\le n}@_i(\tup{\dowa}i-1) \wedge \bigwedge_{0 \le i \le n}\only{n}{i}.
    \]
    The formula $\lin(n)$ states that there exists a linear chain of
    named states of length $n$.
    Now let us define  the set $\Gamma_\lin=\set{\lin(n) \mid n\in\nat}$. Notice that $\Gamma_\lin$ enforces structures with at least one infinite branch of the shape
    \begin{center}
        \begin{tikzpicture}[>=stealth']
        \tikzstyle myBG=[line width=3pt,opacity=1.0]
        \newcommand{\drawBGT}[2]
        {
            \draw[ar,->,white,myBG]  (#1) -- (#2);
            \draw[ar,->,black,very thick] (#1) -- (#2);
        }
        
        \node (h1) at (0,0) [world,label=left:$\ldots$,label=above:$3$] {} ;
        \node (h2) at (1,0) [world,label=above:$2$] {} ;
        \node (h3) at (2,0) [world,label=above:$1$] {} ;
        \node (h4) at (3,0) [world,label=above:$0$] {} ;
        \drawBGT{h1}{h2};
        \drawBGT{h2}{h3};
        \drawBGT{h3}{h4};
        \end{tikzpicture}
    \end{center}
    
    \noindent
    $\Gamma_\lin$ satisfies the following properties:
    \begin{enumerate}
        \item\label{it:notmodel} $\Gamma_\lin$ does not have a model in $\ct$, since it enforces an infinite chain without a root (since every node has a predecessor), and
        \item\label{it:finmodel} for all finite sets $\Gamma_\lin^{fin}\subset\Gamma_\lin$, $\Gamma_\lin^{fin}$ has a model in $\ct$.
    \end{enumerate}
 Hence, the proposition follows.   
\end{proof}

\begin{theorem}\label{th:treenotstrong}
  There is no finitary first-order axiomatization which is strongly
  complete for $\hxpd$ over $\ct$.
\end{theorem}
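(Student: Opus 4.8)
The plan is to derive this theorem as an immediate consequence of Proposition~\ref{prop:treenoncompact}, using the standard correspondence between finitary first-order axiomatizability and compactness of the consequence relation. First I would recall the general fact: if $\mathsf{A}$ is a finitary axiomatic system (finitely many axiom and rule schemes, each rule with finitely many premises) that is sound and strongly complete for a semantic consequence relation $\models_{\mathfrak{C}}$, then $\models_{\mathfrak{C}}$ is \emph{compact}, in the sense that whenever $\Gamma \models_{\mathfrak{C}} \varphi$ there is a finite $\Gamma' \subseteq \Gamma$ with $\Gamma' \models_{\mathfrak{C}} \varphi$; equivalently, if every finite subset of $\Gamma$ is satisfiable in $\mathfrak{C}$ then $\Gamma$ is satisfiable in $\mathfrak{C}$. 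The reason is that a derivation witnessing $\Gamma \vdash_{\mathsf{A}} \varphi$ is a finite object, so it uses only finitely many formulas of $\Gamma$ as hypotheses; strong completeness turns the semantic statement into a syntactic derivation, and then soundness turns the finite-hypothesis syntactic derivation back into a finite-hypothesis semantic statement.

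The key steps, in order, would be: (1) Suppose for contradiction that some finitary first-order axiomatic system $\mathsf{A}$ is sound and strongly complete for $\hxpd$ over $\ct$, i.e., $\Gamma \models_{\ct} \varphi$ iff $\Gamma \vdash_{\mathsf{A}} \varphi$ for all sets $\Gamma \cup \{\varphi\}$ of $\hxpd$-formulas. (2) Take the set $\Gamma_{\lin}$ constructed in the proof of Proposition~\ref{prop:treenoncompact}. By property~(\ref{it:notmodel}) there, $\Gamma_{\lin}$ has no model in $\ct$, so vacuously $\Gamma_{\lin} \models_{\ct} \bot$. (3) By strong completeness, $\Gamma_{\lin} \vdash_{\mathsf{A}} \bot$, so by the definition of syntactic consequence there is a finite $\Gamma' \subseteq \Gamma_{\lin}$ with $\vdash_{\mathsf{A}} \bigwedge \Gamma' \to \bot$, hence $\Gamma' \vdash_{\mathsf{A}} \bot$. (4) By soundness, $\Gamma' \models_{\ct} \bot$, i.e., the finite set $\Gamma'$ has no model in $\ct$. (5) This contradicts property~(\ref{it:finmodel}) of Proposition~\ref{prop:treenoncompact}, which says every finite subset of $\Gamma_{\lin}$ is satisfiable in $\ct$. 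Therefore no such $\mathsf{A}$ exists.

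The only subtlety — and the place where one has to be slightly careful rather than merely routine — is pinning down exactly what ``finitary first-order axiomatization'' is allowed to mean, so that the argument in step~(3)/(4) genuinely goes through. The point is that ``finitary'' must entail that $\vdash_{\mathsf{A}}$ is an \emph{inductively generated}, \emph{finite-support} consequence relation: every derivation is a finite tree each of whose nodes is justified by an axiom instance or by a rule applied to finitely many already-derived formulas, so $\Gamma \vdash_{\mathsf{A}} \psi$ always reduces to $\Gamma' \vdash_{\mathsf{A}} \psi$ for some finite $\Gamma' \subseteq \Gamma$. This is precisely the notion of syntactic consequence already fixed in the paper's Definition of theorems and syntactic consequence (where $\Gamma \vdash_{\mathsf{A}} \varphi$ is defined via a finite $\Gamma' \subseteq \Gamma$ with $\vdash_{\mathsf{A}} \bigwedge\Gamma' \to \varphi$), so for axiomatizations of the form considered throughout the paper — extensions $\ahxpd + \Pi + \Rho$ by pure axioms and existential saturation rules, or any Hilbert-style system with finitely many finite-premise schemes — the finite-support property is immediate. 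I would state this as a short remark justifying the phrase ``finitary'' and then the contradiction is as above; no genuinely hard computation is involved, since all the combinatorial work is already done inside Proposition~\ref{prop:treenoncompact}.
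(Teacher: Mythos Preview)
Your proposal is correct and takes essentially the same approach as the paper: both arguments reduce the nonexistence of a strongly complete finitary system to the failure of compactness established in Proposition~\ref{prop:treenoncompact}, via the observation that finitary derivations have finite support. The only cosmetic difference is direction---the paper argues ``every finite subset satisfiable $\Rightarrow$ every finite subset consistent $\Rightarrow$ $\Gamma_{\lin}$ consistent $\Rightarrow$ $\Gamma_{\lin}$ satisfiable, contradiction,'' whereas you run the contrapositive ``$\Gamma_{\lin}$ unsatisfiable $\Rightarrow$ $\Gamma_{\lin}$ inconsistent $\Rightarrow$ some finite subset inconsistent $\Rightarrow$ that subset unsatisfiable, contradiction''---but the content is identical.
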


\begin{proof}
  Suppose there is some finitary first-order axiomatization $\ah$ which
  is sound and strongly complete for $\hxpd$ over $\ct$, and let
  $\vdash$ be obtained from $\ah$. For any set of formulas $\Gamma$,
  we know:
  \begin{enumerate}
  \item[($\dagger$)]\label{it:consist} if for all finite set
    $\Gamma^{fin}\subseteq\Gamma$, $\Gamma^{fin}$ is consistent (i.e.,
    $\Gamma^{fin}\not\vdash\bot$), then $\Gamma$ is consistent.
  \end{enumerate}
        
  This follows from the fact that any proof is finite, so in order to
  make $\Gamma$ inconsistent, only a finite set of its formulas is
  needed.

  Since $\ah$ is strongly complete, by item~\ref{it:finmodel} in
  Proposition~\ref{prop:treenoncompact}, each set $\Gamma_\lin^{fin}$
  defined therein is consistent. Then, by item ($\dagger$) also
  $\Gamma_\lin$ is consistent, and again by strong completeness of
  $\ah$ we can conclude that $\Gamma_\lin$ has a model in $\ct$,
  contradicting item~\ref{it:notmodel} in
  Proposition~\ref{prop:treenoncompact}.  Therefore, there is no
  finitary first-order axiomatic system $\ah$ strongly complete for
  $\ct$.
\end{proof}

\begin{corollary}
There are no $\Pi$ and $\Rho$ such that $\ahxpd+\Pi+\Rho$ is strongly complete for $\ct$. 
\end{corollary}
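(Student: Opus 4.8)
The plan is to obtain this corollary essentially for free from Theorem~\ref{th:treenotstrong}, once I observe that every system of the form $\ahxpd+\Pi+\Rho$ qualifies as a finitary first-order axiomatization in the sense of that theorem. First I would check that $\ahxpd+\Pi+\Rho$ is finitary: it is a Hilbert-style calculus, so its derivations are finite sequences of formulas, and neither adding the pure axioms in $\Pi$ (just further axiom schemes) nor adding the existential saturation rules in $\Rho$ (further inference rules, each carrying a side condition) disturbs this — in particular the system stays finitary even when $\Pi$ or $\Rho$ is infinite, since what matters is the finiteness of individual proofs, not of the axiom set. Next I would note that it is first-order in the relevant sense: $\ahxpd$ is sound for the class of all abstract hybrid data models, which is elementary (``$\eqrel_e$ is an equivalence relation'' is first-order), and by the Soundness theorem together with Theorem~\ref{th:complete} the extension $\ahxpd+\Pi+\Rho$ is sound and strongly complete for $\mathsf{Mod}(\mathfrak{F})$ with $\mathfrak{F}$ the first-order definable class of frames satisfying $\fc(\Pi)\wedge\fc(\Rho)$. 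Hence $\ahxpd+\Pi+\Rho$ is a finitary first-order axiomatization, so by Theorem~\ref{th:treenotstrong} it cannot be (sound and) strongly complete for $\ct$, which is exactly the claim.

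If one prefers a self-contained argument, I would simply replay the proof of Theorem~\ref{th:treenotstrong} with the witnessing set $\Gamma_\lin$ from Proposition~\ref{prop:treenoncompact}. Assuming $\ahxpd+\Pi+\Rho$ is sound and strongly complete for $\ct$: every finite $\Gamma_\lin^{fin}\subset\Gamma_\lin$ has a model in $\ct$ by item~\ref{it:finmodel} of that proposition, hence is consistent (by soundness over $\ct$); because any derivation of $\bot$ uses only finitely many hypotheses, consistency of all finite subsets of $\Gamma_\lin$ propagates to $\Gamma_\lin$ itself; strong completeness over $\ct$ then makes $\Gamma_\lin$ satisfiable in $\ct$, contradicting item~\ref{it:notmodel} of Proposition~\ref{prop:treenoncompact}. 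So no such $\Pi,\Rho$ can exist.

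There is no genuinely hard step here; the only things to be careful about are the two small points flagged above. The first is that ``finitary'' must be read as finiteness of proofs rather than of the axiom/rule set, so that infinite $\Pi$ and $\Rho$ are still covered. The second is that ``strongly complete for $\ct$'' has to be understood as including soundness over $\ct$ — exactly as in the hypothesis of Theorem~\ref{th:treenotstrong} — since an inconsistent extension (for instance adding to $\Pi$ the pure formula $\neg @_i i$, which contradicts the theorem $@_i i$) would be vacuously complete for every class. Under this standard reading the corollary is an immediate consequence of the failure of compactness of $\hxpd$ over $\ct$ established in Proposition~\ref{prop:treenoncompact}.
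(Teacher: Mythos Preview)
Your proposal is correct and matches the paper's approach: the paper states the corollary without proof, treating it as immediate from Theorem~\ref{th:treenotstrong}, and you have correctly spelled out why it follows (any $\ahxpd+\Pi+\Rho$ has finite proofs, so the compactness argument of Proposition~\ref{prop:treenoncompact} goes through verbatim). Your observation that the ``first-order'' qualifier plays no real role in the proof of Theorem~\ref{th:treenotstrong}, and that the self-contained replay via $\Gamma_\lin$ is therefore the cleanest route, is also accurate.
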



\subsection{Axiomatizing the Class $\ctm$ with Pure Axioms}
\label{subsec:treepure}
Define the class $\ctm$ as follows:

\begin{definition}
\label{def:ctreem}
Let $\ctm$ be the class of models
$\tup{M,\eqrel,R_{\dowa},\lbl,\no}$ such that
\begin{itemize}
\item there is no $n\in M$ such that $n$ is reachable by $R_{\dowa}$
  from itself in one or more steps;
\item for all $n \in M, n$ has at most one predecessor by $R_{\dowa}$.
\end{itemize}
\end{definition}





Notice that the class $\ctm$ admits forests made of a collection of
models that consist of a tree, possibly extended with an infinite
chain attached to the root.  Models from the class $\ctm$ differ from
those in $\ct$, given that we relaxed the first two conditions from
Definition~\ref{def:ctree}.  We will introduce a strongly complete
axiomatization for $\ctm$. Let $\Pi_{forest^-}$ be the set of axioms
below. Define $\dowa^1$ as $\dowa$, and $\dowa^{n+1}$ as
$\dowa\dowa^n$.

\begin{center}
    \begin{tabular}{l}
    \toprule 
    {\bf Axioms for forests} \\
    \midrule
    \begin{tabular}{l@{~~~~~~~~}l}
    \emph{no-loops} & $\vdash @_i\neg\tup{\dowa^n} i$ \ \ \small{for all $n>0$} \\
    \emph{no-join} & $\vdash @_j\tup{\dowa}i \wedge @_k\tup{\dowa}i \ra @_j k$
    \end{tabular} \\
    \bottomrule
    \end{tabular}
\end{center}

In the table above, \emph{no-loops} is an infinite set of axioms preventing loops of any
size, whereas \emph{no-join} prevents the existence of more than one predecesor. 
Since our axioms are pure and enforce the appropriate
structures, from Theorem~\ref{th:complete} we get:

\begin{theorem}
\label{th:treemcomplete}
The axiomatic system $\ahxpd+\Pi_{forest^-}$ is strongly complete for $\ctm$.    
\end{theorem}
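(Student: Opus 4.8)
The plan is to obtain the result as an immediate instance of Theorem~\ref{th:complete}. Since $\Pi_{forest^-}$ consists only of pure axioms and involves no existential saturation rules, I set $\Pi = \Pi_{forest^-}$ and $\Rho = \emptyset$, and let $\mathfrak{F}$ be the class of all abstract data frames satisfying $\fc(\Pi_{forest^-})$. Theorem~\ref{th:complete} then yields that $\ahxpd+\Pi_{forest^-}$ is strongly complete for $\mathsf{Mod}(\mathfrak{F})$, so the theorem reduces to showing $\mathsf{Mod}(\mathfrak{F}) = \ctm$; equivalently, that a frame satisfies $\fc(\Pi_{forest^-})$ if and only if it meets the two conditions of Definition~\ref{def:ctreem}. (Soundness of $\ahxpd+\Pi_{forest^-}$ over $\ctm$ likewise follows from the Soundness theorem once this frame correspondence is in place.)

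The remaining work is to unfold the standard translation of the two axiom schemes. Neither \emph{no-loops} nor \emph{no-join} contains an equality operator $=_e$, so in each case the $\bigwedge\{Eq(e)\mid {=}_e \text{ appearing}\}$ prefix of $\st_x$ is empty and $\st_x$ coincides with $\st'_x$. Using the abbreviation $@_i\varphi = \tup{@_i}\varphi$ and the fact that, via \emph{comp-dist}, $\tup{\dowa^n}\top$ unfolds into an $n$-fold nesting of $\tup{\dowa}$, one checks that over any frame and any assignment $g$ with $g(x_i) = \no(i)$, the sentence $\st_x(@_i\neg\tup{\dowa^n}i)$ is equivalent to $\neg R_{\dowa}^n(x_i,x_i)$, where $R_{\dowa}^n$ denotes the $n$-fold relational composition of $R_{\dowa}$ with itself. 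Taking the universal closure over $x_i$ and conjoining over all $n>0$ (as the definition of $\fc$ prescribes for an infinite axiom set), the contribution of \emph{no-loops} to $\fc(\Pi_{forest^-})$ is equivalent to ``there is no $n\in M$ reachable from itself by $R_{\dowa}$ in one or more steps'', which is exactly the first bullet of Definition~\ref{def:ctreem}.

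Similarly, $\st_x(@_j\tup{\dowa}i\wedge@_k\tup{\dowa}i\ra@_jk)$ unfolds to the formula $R_{\dowa}(x_j,x_i)\wedge R_{\dowa}(x_k,x_i)\to x_j = x_k$, whose universal closure over $x_i,x_j,x_k$ states that every node has at most one $R_{\dowa}$-predecessor --- the second bullet of Definition~\ref{def:ctreem}. Hence $\fc(\Pi_{forest^-})$ is equivalent to the conjunction of the two defining conditions of $\ctm$, so $\mathfrak{F}$ is precisely the class of forest-shaped frames in question and $\mathsf{Mod}(\mathfrak{F}) = \ctm$. Applying Theorem~\ref{th:complete} then concludes the proof.

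I do not expect a genuine obstacle here: the argument is a direct appeal to the general completeness machinery, and the only points requiring care are bookkeeping ones --- correctly expanding the abbreviations $@_i$ and $\tup{\dowa^n}$ inside the standard translation, and observing that the definition of $\fc(\Pi)$ already accommodates the infinite scheme \emph{no-loops}, so that the resulting frame condition is an infinite first-order conjunction rather than a single sentence. Neither point affects the applicability of Theorem~\ref{th:complete}.
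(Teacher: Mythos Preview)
Your proposal is correct and follows exactly the approach the paper takes: the paper's entire justification is the single sentence ``Since our axioms are pure and enforce the appropriate structures, from Theorem~\ref{th:complete} we get [the result]'', and you carry out precisely this appeal, additionally spelling out the frame correspondence that the paper leaves implicit. One small inaccuracy: the abbreviations $@_i\varphi$ and $\tup{\dowa^n}i$ unfold via $\tup{\alpha}\varphi := \tup{\alpha[\varphi]=_e\alpha[\varphi]}$, so the primitive forms of \emph{no-loops} and \emph{no-join} do contain an occurrence of $=_e$ and hence $\st_x$ adds the conjunct $Eq(e)$; but since $\eqrel$ is an equivalence relation in every abstract data frame by definition, this conjunct is vacuous and your computation of the effective frame condition is unaffected.
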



\subsection{A Weakly Complete Axiomatization for Trees}
\label{subsec:treeweak}
In this section we will show that the axiom system
$\ahxpd+\Pi_{forest^-}$ introduced before, is weakly complete for
$\ct$. Recall that for \emph{weak completeness} we need to show that
for any $\hxpd$-formula $\varphi$, $\models\varphi$ implies
$\vdash\varphi$.  In order to achieve this result we need to work on
the extracted model.

In what follows we use $m R_{\dowa}^h n$ to denote $m R_{\dowa} u_1 R_{\dowa} \ldots R_{\dowa} u_{h-1} R_{\dowa} n$, for some sequence of states $u_1,\ldots,u_{h-1}$.

\begin{definition}\label{def:restmodel}
 Let $\model=\tup{M,\eqrel,R_{\dowa},\lbl,\no}$ be a hybrid data model, $N\subseteq\nom$, $N\neq\emptyset$, and $n\geq 0$. We define $\restmodel{\model}{n}{N}$
 as
 $$
 \restmodel{\model}{n}{N} = \tup{S,\eqrel_{\uphr S},{R_{\dowa}}_{\uphr S},\lbl_{\uphr S},\no_{\uphr S}},
 $$
\noindent where
\begin{itemize}
    \item $S=\set{m \mid \mbox{ there is some }i\in N \mbox{ such that } \no(i)\ra^h m, ~ \mbox{ with }h\leq n}$,
    \item ${\eqrel_{\uphr S}} = {\eqrel}\cap(S\times S)$;
    \item ${{R_{\dowa}}_{\uphr S}}={R_{\dowa}}\cap(S\times S)$;
    \item $\lbl_{\uphr S}(m)=~\lbl(m)$, for all $m\in S$; and
    \item $\no_{\uphr S}(i) =
    \left\{
        \begin{array}{ll}
            \no(i)  & \mbox{if } i\in N \\
            m & \mbox{if } i\notin N, ~m\in S \mbox{ arbitrary}.
        \end{array}
    \right.$
\end{itemize}
\end{definition}

Intuitively, $\restmodel{\model}{n}{N}$ is the restriction of $\model$ to the states reached from a nominal in $N$, in at most $n$ steps. It is obvious that $\restmodel{\model}{n}{N}$ has finite depth if $N$ is finite.

In what follows, we will write $\nom(\varphi)$ to denote the set of nominals appearing in the formula $\varphi$.

\begin{proposition}
\label{prop:cuttree}
Let $\varphi$ be an $\hxpd$-formula, and let $\model_\Gamma=\tup{M,\eqrel,R_{\dowa},\lbl,\no}$ be the extracted
 model from some $\ahxpd + \Pi_{forest^-}$ MCS $\Gamma$, such that $\model_\Gamma,\Delta_i\models\varphi$, for some $\Delta_i$. Let $\restmodel{(\model_\Gamma)}{\md(\varphi)}{\nom(\varphi)\cup\set{i}}= \tup{S,\eqrel_{\uphr S},{R_{\dowa}}_{\uphr S},\lbl_{\uphr S},\no_{\uphr S}}$, then,
  \begin{enumerate}
      \item $\Delta_i\in \restmodel{(\model_\Gamma)}{\md(\varphi)}{\nom(\varphi)\cup\set{i}}$, and
      \item $\restmodel{(\model_\Gamma)}{\md(\varphi)}{\nom(\varphi)\cup\set{i}},\Delta_i\models\varphi$.
  \end{enumerate}
\end{proposition}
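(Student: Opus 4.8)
The plan is to establish a \emph{locality lemma} in the spirit of the locality of basic modal logic: the truth value of an expression at a node $\Delta_k$ of the extracted model depends only on the nodes reachable from $\Delta_k$ by at most $\md$-many $R_{\dowa}$-steps, together with the nodes named by the nominals occurring in the expression and the $R_{\dowa}$-balls around those named nodes. Write $\model=\model_\Gamma$, $N=\nom(\varphi)\cup\set{i}$, $n=\md(\varphi)$, and let $\model'=\restmodel{\model}{n}{N}$, whose domain is the set $S$ of Definition~\ref{def:restmodel}. Item~(1) is immediate: since $i\in N$ and $\no(i)=\Delta_i$ is reachable from $\no(i)$ in $0\le n$ steps, $\Delta_i\in S$.

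For item~(2) I would prove, by induction on the structure of expressions (reading path expressions as strings of atomic steps $\dowa$, $@_j$, $[\theta]$, so that the recursion is on subexpression complexity and tests are handled by the node-expression clause), the following two statements: for every node expression $\psi$ with $\nom(\psi)\subseteq N$ and every $\Delta_k$ with $\set{m \mid \exists h\le\md(\psi),\ \Delta_k R_{\dowa}^h m}\subseteq S$, we have $\model,\Delta_k\models\psi$ iff $\model',\Delta_k\models\psi$; and for every path expression $\alpha$ with $\nom(\alpha)\subseteq N$, every $\Delta_k$ satisfying the analogous ball condition for $\md(\alpha)$, and every $\Delta_l$, we have $\model,\Delta_k,\Delta_l\models\alpha$ iff $\model',\Delta_k,\Delta_l\models\alpha$. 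The engine of the induction is that the ball condition propagates along navigation: after a $\dowa$-step the remaining modal budget drops by one, so the new node and its (smaller) ball stay inside $S$; a test $[\theta]$ leaves the node fixed and $\md(\theta)\le\md([\theta]\alpha)$; and an $@_j$-jump moves to $\no(j)$ with $j\in\nom(\alpha)\subseteq N$, whose \emph{entire} $n$-ball lies in $S$ by the very definition of $S$, while the modal depth of the suffix following the jump is at most $\md(\psi)\le n$. The atomic cases use that $\lbl_{\uphr S}$ agrees with $\lbl$ on $S$ and that $\no_{\uphr S}(j)=\no(j)$ for $j\in N$; the Boolean cases are trivial; and for a data comparison $\tup{\alpha*\beta}$ (with $*$ being $=$ or $\neq$) one invokes the path statement for $\alpha$ and $\beta$ and notes that the witnessing endpoints necessarily lie in $S$, so $\eqrel_{\uphr S}$ agrees with $\eqrel$ on them. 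Applying the node statement to $\psi=\varphi$ and $\Delta_k=\Delta_i$ — whose $n$-ball is inside $S$ by construction, since $i\in N$ — yields $\model_\Gamma,\Delta_i\models\varphi$ iff $\model',\Delta_i\models\varphi$, and since the left-hand side holds by hypothesis, item~(2) follows.

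I expect the only delicate point to be the bookkeeping around $@_j$: since a jump is free in modal depth ($\md(@_j\alpha)=\md(\alpha)$), one cannot control the jump target by distance from the current node, and must instead use separately the facts that every nominal occurring anywhere in $\varphi$ lies in $N$ and that every $N$-named node carries its full $n$-ball inside $S$. With the invariant phrased as above this goes through cleanly, and the remaining cases are the routine propagation-of-modal-depth arguments familiar from the locality theorem for modal logic.
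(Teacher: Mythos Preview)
Your approach matches the paper's: both establish item~(2) by a mutual structural induction on node and path expressions, under side conditions on the nominals and modal depth of the subexpression and on the position of the evaluation point. The paper phrases its invariant as $\md(\psi)\le\md(\varphi)$, $\nom(\psi)\subseteq\nom(\varphi)$, and simply $\Delta_j\in S$ (respectively $\Delta_j,\Delta_k\in S$ for paths); your ball condition $\{m\mid \Delta_k R_{\dowa}^{h} m,\ h\le\md(\psi)\}\subseteq S$ is a sharper formulation of the positional constraint, and in fact the one that makes the data-comparison step go through cleanly, since mere membership $\Delta_j\in S$ does not by itself force the witnesses of $\tup{\alpha*\beta}$ to lie in $S$.

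One point to tighten: your stated invariant omits the bound $\md(\psi)\le n$, yet you invoke it in the $@_j$ case when you write ``the modal depth of the suffix following the jump is at most $\md(\psi)\le n$''. That bound does not follow from the ball condition alone: if $\Delta_k$ happens to have no descending chains of length exceeding $n$, the ball condition is satisfied vacuously even for $\md(\psi)>n$, and then after a jump the $\md(\alpha')$-ball around $\no(j)$ can escape $S$. Add $\md(\psi)\le n$ (and $\md(\alpha)\le n$) explicitly to the invariants, as the paper does; it is preserved by every recursive step since passing to a subexpression never increases modal depth, and with it your $@_j$ argument is complete.
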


\begin{proof}
Item $1$ follows from the definition of $\restmodel{(\model_\Gamma)}{\md(\varphi)}{\nom(\varphi)\cup\set{i}}$, since $i\in\Delta_i$ by MCS.
For item $2$, we will prove a stronger result. 
 \begin{description}
    \item[(IH1)] For all $\alpha\in\mathsf{PExp}$ such that $\md(\alpha)\leq\md(\varphi)$ and $\nom(\alpha)\subseteq\nom(\varphi)$, if $\Delta_j,\Delta_k\in\restmodel{(\model_\Gamma)}{\md(\varphi)}{\nom(\varphi)\cup\set{i}}$ then
    $$
    \model_\Gamma,\Delta_j,\Delta_k\models \alpha ~~ \iff ~~
    \restmodel{(\model_\Gamma)}{\md(\varphi)}{\nom(\varphi)\cup\set{i}},\Delta_j,\Delta_k\models\alpha.
    $$
    \item[(IH2)] For all $\psi\in\mathsf{NExp}$ such that $\md(\psi)\leq\md(\varphi)$
 and $\nom(\psi)\subseteq\nom(\varphi)$, if $\Delta_j\in\restmodel{(\model_\Gamma)}{\md(\varphi)}{\nom(\varphi)\cup\set{i}}$ then
 $$
 \model_\Gamma,\Delta_j\models \psi ~~ \iff ~~
 \restmodel{(\model_\Gamma)}{\md(\varphi)}{\nom(\varphi)\cup\set{i}},\Delta_j\models\psi.
 $$
 \end{description}
 The proof is by induction on $\alpha$ and $\psi$.
Assume $\Delta_j,\Delta_k\in \restmodel{(\model_\Gamma)}{\md(\varphi)}{\nom(\varphi)\cup\set{i}}$. Let us start by the base cases.

\medskip

\noindent {- $\alpha=\dowa$:} Suppose $\model_\Gamma,\Delta_j,\Delta_k\models\dowa$, iff $\Delta_j R_{\dowa} \Delta_k$. Since $\Delta_j,\Delta_k\in \restmodel{(\model_\Gamma)}{\md(\varphi)}{\nom(\varphi)\cup\set{i}}$,  by definition we have $\Delta_j\ra_{\uphr S}\Delta_k$. Therefore, $\restmodel{(\model_\Gamma)}{\md(\varphi)}{\nom(\varphi)\cup\set{i}},\Delta_j,\Delta_k\models\dowa$. The other direction is straightforward, since 
$\Delta_j {R_{\dowa}}_{\uphr S}\Delta_k$ implies  $\Delta_j R_{\dowa} \Delta_k$. \smallskip

\noindent {- $\alpha=@_l$:} Suppose $\model_\Gamma,\Delta_j,\Delta_k\models@_l$, iff  $\no(l)=\Delta_k$. By assumption, $\nom(@_l)\subseteq\nom(\varphi)$, then $\no_{\uphr S}(l)=\no(l)=\Delta_k$, therefore $\restmodel{(\model_\Gamma)}{\md(\varphi)}{\nom(\varphi)\cup\set{i}},\Delta_j,\Delta_k\models@_l$. The other direction follows from the fact that $\nom(@_l)=\set{l}\subseteq\nom(\varphi)$,
then $\no(l)=\Delta_k$.\smallskip

\noindent {- $\psi=p$:} Suppose $\model_\Gamma,\Delta_j\models p$, iff $p\in\lbl(\Delta_j)$. By assumption $\Delta_j\in S$, then $p\in\lbl_{\uphr S}(\Delta_j)$. Hence, $\restmodel{(\model_\Gamma)}{\md(\varphi)}{\nom(\varphi)\cup\set{i}},\Delta_j\models p$. The other direction is similar.\smallskip

\noindent {- $\psi=k\in\nom(\varphi)$:}  Suppose $\model_\Gamma,\Delta_j\models k$, iff $\no(k)=\Delta_j$. By assumption $\Delta_j\in S$, then $\no_{\uphr S}=\Delta_j$. Hence, $\restmodel{(\model_\Gamma)}{\md(\varphi)}{\nom(\varphi)\cup\set{i}},\Delta_j\models k$. The other direction is similar.\medskip

Now we prove the inductive cases. \smallskip

\noindent {- $\alpha=[\theta]$:}  Suppose $\model_\Gamma,\Delta_j,\Delta_k\models[\theta]$, iff $\Delta_j=\Delta_k$ and $\model_\Gamma,\Delta_k\models\theta$. By $\Delta_j\in S$ and (IH2), we have 
$\restmodel{(\model_\Gamma)}{\md(\varphi)}{\nom(\varphi)\cup\set{i}},\Delta_j\models \theta$, iff $\restmodel{(\model_\Gamma)}{\md(\varphi)}{\nom(\varphi)\cup\set{i}},\Delta_j,\Delta_k\models [\theta]$.
For the other direction, use similar steps.\smallskip

\noindent {- $\alpha=\beta\gamma$:} Suppose $\model_\Gamma,\Delta_j,\Delta_k\models\beta\gamma$, iff there exists $\Delta_l$ such that 
$\model_\Gamma,\Delta_j,\Delta_l\models\beta$ and $\model_\Gamma,\Delta_l,\Delta_k\models\gamma$. Notice that since $\Delta_j,\Delta_k\in S$, and $\md(\alpha)\leq\md(\varphi)$, we have $\Delta_j\ra^n \Delta_l$, for some $n\leq\md(\varphi)$, then $\Delta_l\in S$. By (IH1), we have $\restmodel{(\model_\Gamma)}{\md(\varphi)}{\nom(\varphi)\cup\set{i}},\Delta_j,\Delta_l\models\beta$ and $\restmodel{(\model_\Gamma)}{\md(\varphi)}{\nom(\varphi)\cup\set{i}},\Delta_l,\Delta_k\models\gamma$. Therefore, $\restmodel{(\model_\Gamma)}{\md(\varphi)}{\nom(\varphi)\cup\set{i}},\Delta_j,\Delta_k\models\beta\gamma$. The other direction is direct from (IH1).
\smallskip

\noindent {- $\psi=\neg\theta$ and $\psi=\theta\wedge\theta'$:} are direct from (IH2).\smallskip

\noindent {- $\psi=\tup{\beta=\gamma}$:}  $\model_\Gamma,\Delta_j\models\tup{\beta=\gamma}$, iff there exists $\Delta_k,\Delta_l$ such that $\model_\Gamma,\Delta_j,\Delta_k\models\beta$, $\model_\Gamma,\Delta_j,\Delta_l\models\gamma$
and $\Delta_k\eqrel\Delta_l$. From the fact that $\md(\psi)\leq\md(\varphi)$,
we have $\md(\beta)\leq\md(\varphi)$ and $\md(\gamma)\leq\md(\varphi)$. Then,
$\Delta_j\ra^n\Delta_k$, and $\Delta_j\ra^m\Delta_l$, for some $n,m\leq\md(\varphi)$. As a consequence, $\Delta_k,\Delta_l\in S$, then by (IH1) we get  $\restmodel{(\model_\Gamma)}{\md(\varphi)}{\nom(\varphi)\cup\set{i}},\Delta_j,\Delta_k\models\beta$ and  $\restmodel{(\model_\Gamma)}{\md(\varphi)}{\nom(\varphi)\cup\set{i}},\Delta_j,\Delta_l\models\gamma$. On the other hand, since $\Delta_k\eqrel\Delta_l$, we also get $\Delta_k\eqrel_{\uphr S}\Delta_l$, therefore  $\restmodel{(\model_\Gamma)}{\md(\varphi)}{\nom(\varphi)\cup\set{i}},\Delta_j\models\tup{\beta=\gamma}$. The other direction follows from (IH1).
\smallskip

\noindent {- $\psi=\tup{\beta\neq\gamma}$:} Analogous to the previous case.
\end{proof}

With this construction at hand, we obtain almost the structure of the model we are looking for.

\begin{lemma}
Let $\model_\Gamma$ be the extracted model for some MCS $\Gamma$, $n\in\nat$ and $N\subseteq\nom$ finite. $\restmodel{(\model_\Gamma)}{n}{N}$ is a forest composed by a finite number of disjoint trees.
\end{lemma}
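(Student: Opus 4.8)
The plan is to reduce the statement to two facts about the directed graph $\tup{S,{R_{\dowa}}_{\uphr S}}$ underlying $\restmodel{(\model_\Gamma)}{n}{N}$: that it has finitely many connected components, and that each component is a rooted tree. First I would invoke the Frame Lemma (Lemma~\ref{lemma:frame}): since $\model_\Gamma$ is the extracted model of an $\ahxpd+\Pi_{forest^-}$ MCS, its frame satisfies $\fc(\Pi_{forest^-})$, i.e.\ $R_{\dowa}$ is acyclic and every node has at most one $R_{\dowa}$-predecessor; hence $\model_\Gamma\in\ctm$. Because ${R_{\dowa}}_{\uphr S}\subseteq R_{\dowa}$, these two properties are inherited by $\restmodel{(\model_\Gamma)}{n}{N}$, so it remains only to analyse connectivity.

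The key device is the \emph{backward chain}: since a node has at most one $R_{\dowa}$-predecessor, ``the predecessor inside $S$ of'' is a partial function, and from any $m\in S$ we can iterate it to obtain $m=m_0,m_1,m_2,\dots$ with $m_{k+1}\,{R_{\dowa}}_{\uphr S}\,m_k$, all $m_k$ pairwise distinct by acyclicity. I claim this chain is finite. For $m\in S$ set $d(m)=\min\{h\le n \mid \no(i)\ra^h m \text{ for some } i\in N\}$, which exists by the definition of $S$. If $d(m_k)\ge 1$ then, along a witnessing path $\no(i)\ra^{d(m_k)}m_k$, the node immediately before $m_k$ is an $R_{\dowa}$-predecessor of $m_k$, so by uniqueness it equals $m_{k+1}$, and it is reached from $\no(i)$ in $d(m_k)-1$ steps; thus $d(m_{k+1})\le d(m_k)-1$. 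Hence $d$ strictly decreases until it reaches $0$, and $d(m_k)=0$ exactly when $m_k=\no(i)$ for some $i\in N$. If the chain were infinite it would therefore meet the finite set $\{\no(i)\mid i\in N\}$ infinitely often, contradicting the distinctness of the $m_k$. So the chain terminates at a node $r(m)$ with no predecessor in $S$.

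It then remains to assemble the tree structure. If $m\,{R_{\dowa}}_{\uphr S}\,m'$ with $m,m'\in S$, then $m$ is the unique predecessor of $m'$, so the backward chains of $m$ and of $m'$ end at the same node; by connectedness, $r(\cdot)$ is constant on each component, say equal to $r_C$ on component $C$, and reversing any backward chain yields a directed ${R_{\dowa}}_{\uphr S}$-path $r_C\ra^* m$ for every $m\in C$ --- so $C$ is a tree rooted at $r_C$. A short argument shows $r_C=\no(i)$ for some $i\in N$ (if $d(r_C)\ge 1$, the node before $r_C$ on a witnessing path would be a predecessor of $r_C$ lying in $S$); consequently there are at most $|\{\no(i)\mid i\in N\}|\le|N|$ components. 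Putting this together, $\restmodel{(\model_\Gamma)}{n}{N}$ is the disjoint union of finitely many trees, as claimed.

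The main obstacle is the finiteness of the backward chain, and this is precisely where both forest axioms are used: \emph{no-loops} gives acyclicity (hence distinctness of chain elements), while \emph{no-join} gives uniqueness of predecessors (hence that a predecessor lying on a witnessing $N$-path must coincide with the chain's next element, forcing $d$ to drop). One has to be careful that the chain may re-enter the named part of the model several times; it is the pigeonhole argument over the \emph{finite} set $N$, combined with acyclicity, that rules out an infinite chain. Note that $S$ itself need not be finite --- branching can be infinite --- so only the finite depth of $\restmodel{(\model_\Gamma)}{n}{N}$ and the bound on the number of components are available, and the argument is designed to use exactly those.
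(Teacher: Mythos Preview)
Your proof is correct and follows essentially the same approach as the paper, which gives only a two-sentence sketch (``from each nominal $i\in N$, we generate a tree \ldots\ Since $N$ is finite, we generate only a finite number of trees''), invoking \emph{no-loops} and \emph{no-join} exactly as you do. Your backward-chain and pigeonhole argument is a careful way of filling in the details the paper leaves implicit, in particular the possibility that the subtree generated from one nominal in $N$ may contain another nominal's anchor node; the paper does not spell this out, so your version is strictly more complete.
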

\begin{proof}
 It is a straightforward exercise to check that from each nominal $i\in N$, we generate a tree (remember that by \emph{no-loops} there are no cycles in $\model_\Gamma$, and by \emph{no-join} each node has at most one precedessor).
Since $N$ is finite, we generate only a finite number of trees.
\end{proof}

\begin{definition}\label{def:itree}
Let $\restmodel{\model}{n}{N}$ be a model as in Definition~\ref{def:restmodel}.
We define $\itree{\model}{n}{N}$ as the tree resulting of adding a new node $r$ to $\restmodel{\model}{n}{N}$, and edges from $r$ to the root to each of its subtrees. Essentially, $\itree{\model}{n}{N}$ is of the form:

\begin{center}
    \begin{tikzpicture}[scale=.75]
    \node (g1) at (4.5,1) [blackworld,label=above:$r$] {} ;
    
    \draw (0,0) node[anchor=north,label=above:$i_1$]{}
      -- (-1,-2) node[anchor=north]{}
      -- (1,-2) node[anchor=south]{}
      -- cycle;
    
      \draw (3,0) node[anchor=north,label=above:$i_2$]{}
      -- (2,-2) node[anchor=north]{}
      -- (4,-2) node[anchor=south]{}
      -- cycle;

      \draw (3,-1.2) node[anchor=north,label=above:$i_4$]{}
      -- (2.5,-2.5) node[anchor=north]{}
      -- (3.5,-2.5) node[anchor=south]{}
      -- cycle;
    
      \draw (6,0) node[anchor=north,label=above:$i_3$]{}
      -- (5,-2) node[anchor=north]{}
      -- (7,-2) node[anchor=south]{}
      -- cycle;
    
      \draw (7.5,-.25) node[] {\vdots};
    
      \draw (9,0) node[anchor=north,label=above:$i_n$]{}
      -- (8,-2) node[anchor=north]{}
      -- (10,-2) node[anchor=south]{}
      -- cycle;
      \draw[ar,->,black,thick] (g1) -- (0,0);
      \draw[ar,->,black,thick] (g1) -- (3,0);
      \draw[ar,->,black,thick] (g1) -- (6,0);
      \draw[ar,->,black,thick] (g1) -- (7.5,0);
      \draw[ar,->,black,thick] (g1) -- (9,0);
    \end{tikzpicture}
    \end{center}
\end{definition}

\begin{proposition}\label{prop:itreepres}
    Let $\model_\Gamma$ be the extracted model for some $\ahxpd+\Pi_{forest^-}$ MCS $\Gamma$ and let $\varphi$ be an $\hxpd$-formula such that $\model_\Gamma,  \Delta_i\models\varphi$, for some $\Delta_i$. Then,
    \begin{enumerate}
        \item $\itree{(\model_\Gamma)}{\md(\varphi)}{\nom(\varphi)\cup\set{i}}$ is a tree of finite height, and
        \item  $\itree{(\model_\Gamma)}{\md(\varphi)}{\nom(\varphi)\cup\set{i}},\Delta_i\models\varphi$.
    \end{enumerate}
\end{proposition}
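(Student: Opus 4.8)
The plan is to read off item~1 from the structural facts already established about $\restmodel{(\model_\Gamma)}{n}{N}$ (writing $n=\md(\varphi)$, $N=\nom(\varphi)\cup\set{i}$, and $S$ for its domain), and to read off item~2 from Proposition~\ref{prop:cuttree} by showing that attaching a fresh, unnamed root on top of a model does not disturb the $\hxpd$-truth of any formula at an old point. From Proposition~\ref{prop:cuttree} we already have $\Delta_i\in S$ and $\restmodel{(\model_\Gamma)}{n}{N},\Delta_i\models\varphi$, so these are the two inputs to the argument.

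For item~1, I would first invoke the Lemma immediately preceding Definition~\ref{def:itree}: since $N$ is finite, $\restmodel{(\model_\Gamma)}{n}{N}$ is a forest made of finitely many pairwise disjoint trees, and by the remark right after Definition~\ref{def:restmodel} it has finite depth. (It is also easy to see that each of these trees is rooted at a node named by a nominal of $N$: a predecessor-free $\Delta\in S$ satisfies $\no(j)R_{\dowa}^{h}\Delta$ for some $j\in N$ with $h\le n$, and if $h\ge1$ then the $R_{\dowa}$-predecessor of $\Delta$ on that path is still within $n$ steps of $\no(j)$, hence in $S$, contradicting predecessor-freeness of $\Delta$ in $\restmodel{(\model_\Gamma)}{n}{N}$; so $h=0$ and $\Delta=\no(j)$.) Now $\itree{(\model_\Gamma)}{n}{N}$ just adds one fresh node $r$ together with an $R_{\dowa}$-edge from $r$ to the root of each of these finitely many trees. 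Checking Definition~\ref{def:ctree}: $r$ is the unique node without an $R_{\dowa}$-predecessor, every node is reachable from $r$, every node still has at most one predecessor (the old roots acquire $r$ as their single predecessor), and no cycle is created since $r$ has no predecessor; so $\itree{(\model_\Gamma)}{n}{N}$ is a tree, and its height is at most $1$ plus the finite depth of $\restmodel{(\model_\Gamma)}{n}{N}$, i.e.\ finite.

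For item~2, I would isolate and apply the following general fact: if a model over the tree signature (only the relation $R_{\dowa}$, plus $\eqrel$, a valuation and $\no$) is extended by a single fresh element $r$ and some outgoing $R_{\dowa}$-edges from $r$ only, with $\eqrel$, the valuation and $\no$ left unchanged on the old domain (so $r$ is unnamed and forms its own $\eqrel$-class), then for every $\hxpd$-expression the truth value at every old point, and between every pair of old points, is unchanged. This is exactly the relationship between $\restmodel{(\model_\Gamma)}{n}{N}$ and $\itree{(\model_\Gamma)}{n}{N}$ restricted to $S$. I would prove this fact by a simultaneous induction on path and node subexpressions, the crux being that from an old point one can never navigate to $r$: the forward step $\dowa$ cannot, because $r$ has no $R_{\dowa}$-predecessor; a jump $@_j$ always lands on the old point $\no(j)$; a test does not move; and a composition $\alpha\beta$ reduces, via the intermediate point, to the two cases ``the intermediate point is old'' and ``the intermediate point is $r$'', each dispatched by the induction hypothesis. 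Consequently every point touched while evaluating $\varphi$ at $\Delta_i$ remains in $S$, where the two models coincide verbatim, and so $\itree{(\model_\Gamma)}{n}{N},\Delta_i\models\varphi$ follows from $\restmodel{(\model_\Gamma)}{n}{N},\Delta_i\models\varphi$.

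The main obstacle is the bookkeeping inside that induction, specifically the data-comparison cases $\tup{\alpha*\beta}$: there one must know that the two witnessing endpoints of $\alpha$ and $\beta$ are not $r$, so that their $\eqrel$-relationship coincides with the one in $\restmodel{(\model_\Gamma)}{n}{N}$. This is precisely what the auxiliary ``no path reaches $r$ from an old point'' statement, carried through the induction alongside the main equivalences, provides. The only other point worth double-checking is the finite-depth claim used in item~1, which is already recorded in the remark following Definition~\ref{def:restmodel}.
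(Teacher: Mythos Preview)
Your proof is correct. Item~1 matches the paper, which simply says ``by construction it is obvious''; you spell out the verification, which is fine.

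For item~2 you take a genuinely different route. The paper argues via bisimulation: it defines $Z$ to be the identity relation on $S$ (so $r$ is not related to anything), checks that this is an $\hxpd$-bisimulation in the sense of Definition~\ref{def:bisimulations}, and then invokes Proposition~\ref{prop:invariance} together with Proposition~\ref{prop:cuttree}. Your approach instead carries out a direct simultaneous induction on node and path expressions, the key invariant being that no path expression evaluated from an old point can reach $r$. Both arguments rest on the same structural facts about $r$ (unnamed, no $R_{\dowa}$-predecessor), so they are morally equivalent; the bisimulation route is shorter once the invariance machinery is in place, while your induction is more self-contained and makes explicit exactly which clause (the $\tup{\alpha*\beta}$ case and the composition case) needs the ``$r$ is unreachable'' fact. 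One small wording issue: in the composition case you list ``the intermediate point is $r$'' as a case to be ``dispatched by the induction hypothesis'', but what you really mean is that the IH on $\alpha$ rules that case out entirely; it would be cleaner to say so.
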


\begin{proof}
 For $1$, by construction it is obvious that $\itree{(\model_\Gamma)}{\md(\varphi)}{\nom(\varphi)\cup\set{i}}$ is a tree of finite height. To prove $2$ we can use a bisimulation-based argument. Define $Z$ as the identity relation between the states in $\restmodel{(\model_\Gamma)}{\md(\varphi)}{\nom(\varphi)\cup\set{i}}$ and $\itree{(\model_\Gamma)}{\md(\varphi)}{\nom(\varphi)\cup\set{i}}$ (therefore $r$ is not in the relation). It is a straightforward exercise to show that $Z$ is an $\hxpd$-bisimulation. Since $\restmodel{(\model_\Gamma)}{\md(\varphi)}{\nom(\varphi)\cup\set{i}},\Delta_i\models\varphi$, by Proposition~\ref{prop:invariance} we get
 $\itree{(\model_\Gamma)}{\md(\varphi)}{\nom(\varphi)\cup\set{i}},\Delta_i\models\varphi$. 
\end{proof}

Because we were able to transform the extracted model into a model in $\ct$ which still satisfies the intended formula, we obtain the intended completeness result.

\begin{theorem}
 The system $\ahxpd+\Pi_{forest^-}$ is weakly complete for $\ct$.  
\end{theorem}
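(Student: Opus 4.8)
The plan is to argue the contrapositive: if $\nvdash_{\ahxpd+\Pi_{forest^-}}\varphi$ then $\not\models_{\ct}\varphi$, i.e.\ $\neg\varphi$ is satisfiable in some model of $\ct$. So assume $\nvdash\varphi$; then $\set{\neg\varphi}$ is consistent, since $\set{\neg\varphi}\vdash\bot$ would give $\vdash\neg\varphi\ra\bot$ and hence $\vdash\varphi$. As $\Pi_{forest^-}$ consists only of pure axioms and no existential saturation rules are in play, the Rule Saturation Lemma (Lemma~\ref{lemma:saturation}) with $\Rho=\emptyset$ extends $\set{\neg\varphi}$ to a named and pasted MCS $\Gamma$ of $\ahxpd+\Pi_{forest^-}$ in a language $\hxpd'$ with countably many fresh nominals. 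Let $k$ be a nominal naming $\Gamma$, so $k\in\Gamma$; since also $\neg\varphi\in\Gamma$, Proposition~\ref{prop:mcs}(1) yields $@_k\neg\varphi\in\Gamma$.

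Next I would pass to the extracted model $\model_\Gamma$ of Definition~\ref{def:extracted}. Because $\model_\Gamma$ is a named model whose generating MCS contains every instance of the pure axioms of $\Pi_{forest^-}$, the Frame Lemma (Lemma~\ref{lemma:frame}) guarantees $\model_\Gamma\in\ctm$; that is, $\tup{M,R_{\dowa}}$ is a forest of trees, each possibly carrying an infinite backward chain at its root, with no loops and at most one predecessor per node. By the Truth Lemma (Lemma~\ref{lemma:truth}), $\model_\Gamma,\Delta_k\models\neg\varphi$. This is essentially the content of strong completeness over $\ctm$ (Theorem~\ref{th:treemcomplete}); the remaining task is to trim $\model_\Gamma$ to a genuine tree without disturbing the truth of $\neg\varphi$.

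For the trimming step I would invoke the two preparatory results of this subsection applied to the formula $\neg\varphi$, using $\md(\neg\varphi)=\md(\varphi)$ and $\nom(\neg\varphi)=\nom(\varphi)$. Proposition~\ref{prop:cuttree} shows that $\restmodel{(\model_\Gamma)}{\md(\varphi)}{\nom(\varphi)\cup\set{k}}$ still contains $\Delta_k$ and still satisfies $\neg\varphi$ there; by the intervening lemma this restriction is a forest of finitely many pairwise disjoint trees of depth at most $\md(\varphi)$. Then Proposition~\ref{prop:itreepres} produces $\itree{(\model_\Gamma)}{\md(\varphi)}{\nom(\varphi)\cup\set{k}}$, the single tree of finite height obtained by adjoining a fresh root $r$ above all the subtree roots; this structure lies in $\ct$ (it has a unique predecessor-free point $r$, every node is reachable from $r$ by $R_{\dowa}$, each node has at most one predecessor, and it is acyclic), and since the identity relation on the states of the restriction is an $\hxpd$-bisimulation with it, Proposition~\ref{prop:invariance} gives $\itree{(\model_\Gamma)}{\md(\varphi)}{\nom(\varphi)\cup\set{k}},\Delta_k\models\neg\varphi$. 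Thus $\neg\varphi$ is satisfiable in $\ct$, i.e.\ $\not\models_{\ct}\varphi$, and the contrapositive is complete.

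I do not expect a serious obstacle: the genuinely delicate point — that a bounded-depth restriction of the extracted forest model can be glued into a bona fide tree while preserving the target formula — is exactly what Propositions~\ref{prop:cuttree} and \ref{prop:itreepres} already supply. The only items requiring a line of verification are that the glued structure $\itree{\cdot}{\cdot}{\cdot}$ meets all four clauses of Definition~\ref{def:ctree}, and that the fresh nominals introduced by the Lindenbaum/saturation construction are irrelevant since they do not occur in $\varphi$.
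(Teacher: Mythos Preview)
Your proposal is correct and follows essentially the same route as the paper: the paper's proof of this theorem is the one-line remark ``Because we were able to transform the extracted model into a model in $\ct$ which still satisfies the intended formula, we obtain the intended completeness result,'' which is precisely the chain you spell out (consistent $\set{\neg\varphi}$ $\to$ named pasted MCS $\to$ extracted model in $\ctm$ via the Frame Lemma $\to$ Proposition~\ref{prop:cuttree} $\to$ Proposition~\ref{prop:itreepres} $\to$ tree model in $\ct$ satisfying $\neg\varphi$). Your write-up simply makes explicit the details the paper leaves implicit.
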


Moreover, $\itree{(\model_\Gamma)}{\md(\varphi)}{\nom(\varphi)\cup\set{i}}$ can be transformed into a bounded finite model, and hence, also decidability of the validity problem follows.

\begin{definition}
  Let $\varphi$ be an $\hxpd$-formula, and let
  $\model=\tup{M,\eqrel,R_{\dowa},\lbl,\no}\in\ct$.  We write
  $m\formequiv_{\Gamma}m'$ if for all $n$, $n R_{\dowa} m$ if and only if
  $n R_{\dowa} m'$, and $\model,m\models \varphi$ if and only if $\model,m'\models
  \varphi$, for all $\varphi \in \Gamma$. If $\Gamma$ is finite, $\formequiv_\Gamma$ is an
  equivalence relation of finite index.

  Given an equivalence relation $\formequiv$ over a set $M$, a
  \emph{selection function $s$ for $\formequiv$} is a
  function $s:M_{/{\formequiv}}\to M$ such that
  $s([m])=m'$, for $m'\in[m]$ arbitrary.
\end{definition}

\begin{definition}\label{def:ftree}
Let $\itree{(\model_\Gamma)}{\md(\varphi)}{\nom(\varphi)\cup\set{i}}=\tup{M,\eqrel,R_{\dowa},\lbl,\no}$ be as in Definition~\ref{def:itree}, let $\varphi$ be an $\hxpd$-formula, and let $s$ be a selection function for $\formequiv_{\sub(\varphi)}$. We define $\ftree{(\model_\Gamma)}{\md(\varphi)}{\nom(\varphi)\cup\set{i}}=\tup{F,\eqrel_{\uphr F},{R_{\dowa}}_{\uphr F},\lbl_{\uphr F},\no_{\uphr F}}$ where
\begin{itemize}
    \item $F=\set{s([m]) \mid m\in M}$;
    \item ${\eqrel_{\uphr F}}  = {\eqrel} \cap (F\times F)$;
    \item ${{R_{\dowa}}_{\uphr F}} = {R_{\dowa}} \cap (F \times F)$;
    \item $\lbl_{\uphr F}(m) = ~ \lbl(m)$, for all $m\in F$; and
    \item $\no_{\uphr F}(i) = 
    \left\{
        \begin{array}{ll}
            \no(i)  & \mbox{if } i\in \nom(\varphi) \\
            m & \mbox{if } i\notin \nom(\varphi), m\in F \text{ arbitrary}.
        \end{array}
    \right.$
\end{itemize}
\end{definition}

\begin{proposition}\label{prop:finmodel}
    Let $\model_\Gamma$ be the extracted model for some $\ahxpd+\Pi_{forest^-}$ MCS $\Gamma$ and let $\varphi$ be an $\hxpd$-formula such that $\model_\Gamma,  \Delta_i\models\varphi$, for some $\Delta_i$. Then,
    \begin{enumerate}
        \item $\ftree{(\model_\Gamma)}{\md(\varphi)}{\nom(\varphi)\cup\set{i}}$ is a finite tree, and 
        \item $\ftree{(\model_\Gamma)}{\md(\varphi)}{\nom(\varphi)\cup\set{i}},s([\Delta_i])\models\varphi$.
    \end{enumerate}
\end{proposition}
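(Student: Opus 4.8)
The plan is to obtain this as a filtration of the finite‑height tree produced by Proposition~\ref{prop:itreepres}. Write $\model^{IT}=\itree{(\model_\Gamma)}{\md(\varphi)}{\nom(\varphi)\cup\set{i}}$ and $\model^{FT}=\ftree{(\model_\Gamma)}{\md(\varphi)}{\nom(\varphi)\cup\set{i}}$; by Proposition~\ref{prop:itreepres}, $\model^{IT}$ is a tree of height at most $\md(\varphi)+1$ and $\model^{IT},\Delta_i\models\varphi$.

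For item~(1) I would first observe that in the tree $\model^{IT}$ the equivalence $\formequiv_{\sub(\varphi)}$ identifies only nodes sharing the same $R_{\dowa}$-predecessor --- that is, siblings --- since the clause ``for all $n$, $n R_{\dowa} m$ iff $n R_{\dowa} m'$'' forces $m,m'$ to have the same (unique) parent, or both to be the root. Hence $\formequiv_{\sub(\varphi)}$ refines the partition of $\model^{IT}$ into levels, and among the children of any fixed node it has at most $2^{\size{\sub(\varphi)}}$ classes. Choosing the selection function $s$ so that it fixes every node named by a nominal of $\varphi$ --- consistent, because if $k\in\nom(\varphi)$ then $k\in\sub(\varphi)$, so $\no(k)$ is the unique member of its class realising the nominal $k$, hence a singleton class --- and so that it keeps the root $r$, the resulting $F$ is finite ($\model^{IT}$ has finite height and each node keeps boundedly many children) and $\model^{FT}$ is a finite tree (discarding, if necessary, the finitely many components not reachable from $r$, which contain no node named in $\varphi$, and in particular not $s([\Delta_i])$, a child of $r$ since $s([\Delta_i])\formequiv_{\sub(\varphi)}\Delta_i$ inherits the parent $r$ of $\Delta_i$).

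For item~(2) the core is a Filtration Lemma: for every $\psi\in\sub(\varphi)$ and every node $\Delta_j$ of $\model^{IT}$, $\model^{IT},\Delta_j\models\psi$ iff $\model^{FT},s([\Delta_j])\models\psi$, and likewise $\model^{IT},\Delta_j,\Delta_k\models\alpha$ iff $\model^{FT},s([\Delta_j]),s([\Delta_k])\models\alpha$ for path expressions $\alpha$ with $\sub(\alpha)\subseteq\sub(\varphi)$. I would prove this by the same double induction used for the Truth Lemma (Lemma~\ref{lemma:truth}): reduce each comparison $\tup{\alpha*\beta}$ via \emph{comp-neutral} and \emph{$*$-comm} so that both paths end in a test, take tests as the base case, and in the inductive step peel off the leading $\dowa$, $@_k$ or $[\theta]$ of a path. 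The atomic cases use that $\formequiv_{\sub(\varphi)}$ preserves $\lbl$ and preserves $\no$ on $\nom(\varphi)$ and that $s$ fixes named nodes; the $\dowa$-step uses, from $\model^{FT}$ to $\model^{IT}$, that ${R_{\dowa}}_{\uphr F}\subseteq R_{\dowa}$, and from $\model^{IT}$ to $\model^{FT}$, that any successor of $\Delta_j$ witnessing the relevant subformula has a $\formequiv_{\sub(\varphi)}$-equivalent sibling retained in $F$ that, by the induction hypothesis, still witnesses it in $\model^{FT}$. Applying the lemma to $\psi=\varphi\in\sub(\varphi)$ and to $\Delta_i$, together with $\model^{IT},\Delta_i\models\varphi$ from Proposition~\ref{prop:itreepres}, gives $\model^{FT},s([\Delta_i])\models\varphi$. (Alternatively, one can phrase the same argument as exhibiting an $\md(\varphi)$-bisimulation and invoke Proposition~\ref{prop:linvariance}, but the induction above is self-contained.)

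The step I expect to be the main obstacle is the witness-preservation direction of the Filtration Lemma for path expressions: one has to check that collapsing the (possibly infinitely many) siblings of a node to one representative per $\sub(\varphi)$-type does not destroy any comparison $\tup{\alpha*_e\beta}\in\sub(\varphi)$ true at a retained node. This is exactly where the specific design of $\sub(\cdot)$ (recording the prefix formulas $\tup{\dowa\cdots}\top$ and $\tup{[\theta]\cdots}\top$) and the bound $\md(\varphi)\le$ (height of $\model^{IT}$) come in: because $\model^{IT}$ has finite height the induction never descends below the tree, and $\sub(\varphi)$ is closed enough under the peeling operations that the induction hypothesis applies at each step.
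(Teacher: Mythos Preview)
Your route for item~(2) is a direct Filtration Lemma proved by the double induction of Lemma~\ref{lemma:truth}; the paper instead writes down an explicit family $(Z_j)_{j\le\md(\varphi)}$ linking each $m$ in $\model^{IT}$ to $s([m])$ in $\model^{FT}$, asserts it is an $\md(\varphi)$-bisimulation, and invokes Proposition~\ref{prop:linvariance}. You already mention this alternative, so the two presentations are interchangeable; your write-up for item~(1) is also more careful than the paper's one-line ``finite height plus bounded branching'' remark.

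There is, however, a genuine gap --- precisely the obstacle you flag in your last paragraph --- and the closure argument you give does not resolve it (nor does the paper's bisimulation claim). The equivalence $\formequiv_{\sub(\varphi)}$ matches siblings only on their parent and on truth of formulas in $\sub(\varphi)$; it records nothing about the $\eqrel$-class of a node. Take $\varphi=\tup{\dowa\neq\dowa}$, so that $\sub(\varphi)=\{\tup{\dowa\neq\dowa},\tup{\dowa}\top\}$ and $\md(\varphi)=1$. In $\model^{IT}$ every child of $\Delta_i$ is a leaf, hence falsifies both formulas in $\sub(\varphi)$, hence all children of $\Delta_i$ collapse to a single $\formequiv_{\sub(\varphi)}$-class; the selection keeps one of them. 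In $\model^{FT}$ the node $\Delta_i=s([\Delta_i])$ has a unique child, so $\model^{FT},\Delta_i\not\models\tup{\dowa\neq\dowa}$ although $\model^{IT},\Delta_i\models\tup{\dowa\neq\dowa}$. This breaks the forward direction of your Filtration Lemma at the comparison step (no retained sibling witnesses the inequality), and the same example defeats the $(\Zig_=)$ clause of the paper's claimed $\ell$-bisimulation. The reason your ``$\sub(\varphi)$ is closed enough under peeling'' argument fails is that peeling $\tup{\dowa\neq\dowa}$ down to $\tup{\dowa}\top$ discards the data constraint entirely; nothing in $\sub(\varphi)$ distinguishes the two witnessing leaves. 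Any fix must refine the sibling-collapse so that data information survives --- e.g., by enriching the type of a leaf with its $\eqrel$-relationship to the other endpoints reachable within $\md(\varphi)$ steps, or equivalently by adding formulas of the shape $\tup{\epsilon *_e @_j}$ for suitable auxiliary names $j$ to the closure set.
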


\begin{proof}
  $1$ follows from the fact that
  $\itree{(\model_\Gamma)}{\md(\varphi)}{\nom(\varphi)\cup\set{i}}$ is
  a tree of finite height, and the selection function selects only one
  representant for each equivalence class, which gives us finite
  width. In order to prove $2$, let $d(m)$ be the distance from $m$ to
  the root, define:
\[
\begin{array}{rl}
m Z_{\md(\varphi)} s([m]) & \text{if } d(m) \le d(\Delta_i) \\
m Z_{\max (\md(\varphi) - (d(\Delta_i) - d(m)), 0)} s([m]) & \text{if }
d(m) > d(\Delta_i).
\end{array}
\]

We claim
  $$
\itree{(\model_\Gamma)}{\md(\varphi)}{\nom(\varphi)\cup\set{i}},\Delta_i\bisim_{\md(\varphi)} \ftree{(\model_\Gamma)}{\md(\varphi)}{\nom(\varphi)\cup\set{i}},s([\Delta_i]),
  $$
  therefore (by Propositions~\ref{prop:linvariance} and~\ref{prop:itreepres}), $\ftree{(\model_\Gamma)}{\md(\varphi)}{\nom(\varphi)\cup\set{i}},s([\Delta_i])\models\varphi$.
\end{proof}

As a consequence we get:
\begin{theorem}
\label{th:tree-bmp}
$\hxpd$ on trees has the bounded model property.
\end{theorem}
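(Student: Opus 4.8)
The plan is to read off the bounded model property directly from Proposition~\ref{prop:finmodel}, adding only an effective bound on the size of the finite tree it produces. So suppose $\varphi$ is satisfiable over $\ct$. Since every tree model satisfies \emph{no-loops} and \emph{no-join}, we have $\ct\subseteq\ctm$, and hence $\ahxpd+\Pi_{forest^-}$ is sound over $\ct$; consequently $\set{\varphi}$ is consistent. By the Rule Saturation Lemma (Lemma~\ref{lemma:saturation}, with $\Pi=\Pi_{forest^-}$ and $\Rho=\emptyset$) it extends to a named and pasted MCS $\Gamma$ of $\ahxpd+\Pi_{forest^-}$, named by some nominal $k$. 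Let $\model_\Gamma$ be the extracted model. Since $k,\varphi\in\Gamma$, item~1 of Proposition~\ref{prop:mcs} gives $@_k\varphi\in\Gamma$, so by the Truth Lemma (Lemma~\ref{lemma:truth}) we get $\model_\Gamma,\Delta_k\models\varphi$. Applying Proposition~\ref{prop:finmodel} with $i:=k$, the structure $\ftree{(\model_\Gamma)}{\md(\varphi)}{\nom(\varphi)\cup\set{k}}$ is a finite tree, hence an element of $\ct$, and $\varphi$ is satisfied in it at $s([\Delta_k])$. Thus every $\ct$-satisfiable formula already has a finite tree model; it only remains to bound its size by a computable function of $\size{\varphi}$.

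For the bound I would trace through the construction. The intermediate tree $\itree{(\model_\Gamma)}{\md(\varphi)}{\nom(\varphi)\cup\set{k}}$ is obtained from $\restmodel{(\model_\Gamma)}{\md(\varphi)}{\nom(\varphi)\cup\set{k}}$ by attaching a fresh root $r$, so it has height at most $\md(\varphi)+1$, and $r$ has at most $\size{\nom(\varphi)}+1\le\size{\varphi}+1$ children. Passing to the quotient $\ftree{(\model_\Gamma)}{\md(\varphi)}{\nom(\varphi)\cup\set{k}}$ keeps one node per $\formequiv_{\sub(\varphi)}$-class; and in a tree all siblings share the same (unique) predecessor, so two siblings are $\formequiv_{\sub(\varphi)}$-equivalent exactly when they agree on every formula of $\sub(\varphi)$. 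Hence after the quotient each node retains at most $2^{\size{\sub(\varphi)}}$ of its children. Multiplying along a branch of length at most $\md(\varphi)$ yields the bound $(\size{\varphi}+1)\cdot(\md(\varphi)+1)\cdot 2^{\size{\sub(\varphi)}\cdot\md(\varphi)}$ on the number of nodes, which, since $\md(\varphi)\le\size{\varphi}$ and $\size{\sub(\varphi)}\le\size{\varphi}$, is a computable function of $\size{\varphi}$. This gives the bounded model property.

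The main obstacle is not the top-level argument, which merely reassembles Proposition~\ref{prop:finmodel}, but the width estimate: one has to be certain that $\formequiv_{\sub(\varphi)}$ --- which combines the ``same predecessors'' and ``same $\sub(\varphi)$-type'' conditions --- collapses each level of the tree to at most exponentially many nodes, and that the quotient preserves both the tree shape and the truth of $\varphi$. Both points rely on the uniqueness of predecessors in $\ctm$ and on the $\ell$-bisimulation invariance already invoked in the proof of Proposition~\ref{prop:finmodel} (Propositions~\ref{prop:linvariance} and~\ref{prop:itreepres}), so nothing essentially new is needed; the substance is contained in Proposition~\ref{prop:finmodel} itself.
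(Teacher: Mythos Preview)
Your argument is correct and follows the paper's approach exactly: the paper derives the theorem directly ``as a consequence'' of Proposition~\ref{prop:finmodel}, and the exponential size bound you spell out is what the paper asserts (without details) immediately after the theorem. One minor slip: under the paper's size function (Definition~\ref{def:size}) one has $\size{\neg\psi}=\size{\psi}$, so the inequality $\size{\sub(\varphi)}\le\size{\varphi}$ can fail (e.g.\ for $\neg p$), but this is immaterial since $\size{\sub(\varphi)}$ is still computable from $\varphi$ and the bounded model property only requires a computable bound.
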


The model constructed in Definition~\ref{prop:finmodel} has size at most
exponential in the input formula. Moreover, the algorithm below shows that model checking for $\hxpd$ can be solved in polynomial time. Thus, for a given formula we can guess a model of exponential size and do model checking in polynomial time, solving the problem in \nexpspace (which coincides with the class \expspace).

Let $\model=\tup{M,$
	$\{\eqrel_e\}_{e\in\eq},\{R_{\dowa}\}_{\dowa\in\mo},\lbl,\no}$ be a model and $\varphi$ a formula in $\hxpd$, let $e_1, \ldots, e_n$ be an enumeration, ordered by size, of the subexpressions of $\varphi$. Define the functions $\mlabel_{ne}$ and $\mlabel_{pe}$ inductively as follows.   Initially, 
set $\mlabel_{ne}(m)$ = $\mlabel_{pe}(m_1,m_2)= \{\}$ for any $m$, $m_1$, $m_2$.
Then extend these labeling functions using the following rules.

\begin{enumerate}[align=left]
\item $e_i =  \dowa \in \mo$: $\madd(\mlabel_{pe}(m_1,m_2), a)$ iff $R_{\dowa}(m_1,m_2)$. 

\item $e_i = @_i$: $\madd(\mlabel_{pe}(m_1,m_2), @_i)$ iff $m_2 = \no(i)$.

\item $e_i = [\psi]$: $\madd(\mlabel_{pe}(m,m), [\psi])$ iff $m \in \mlabel_{ne}(\psi)$.

\item $e_i = \dowa\alpha$: $\madd(\mlabel_{pe}(m_1,m_2),\dowa\alpha)$ iff there is $m_3$ s.t.\ $R_{\dowa}(m_1,m_3)$ and $\alpha \in \mlabel_{pe}(m_3,m_2)$.

\item $e_i = @_i\alpha$: $\madd(\mlabel_{pe}(m_1, m_2)@_i\alpha)$ iff $\alpha \in \mlabel_{pe}(\no(i),m_2)$. 

\item $e_i = [\psi]\alpha$: $\madd(\mlabel_{pe}(m_1,m_2),[\psi]\alpha)$ iff $\psi \in \mlabel_{ne}(m_1)$ and $\alpha \in \mlabel_{pe}(m_1,m_2)$.

\item $e_i = p \in \prop$: $\madd(\mlabel_{ne}(m),p)$ iff $p \in V(m)$.

\item $e_i = i \in \nom$: $\madd(\mlabel_{ne}(m), i)$ iff $m = \no(i)$.

\item $e_i = \neg \psi$: $\madd(\mlabel_{ne}(m),\neg \psi)$ iff $\psi \not\in \mlabel_{ne}(m)$.

\item $e_i = \psi_1 \wedge \psi_2$: $\madd(\mlabel_{ne}(m), \psi_1 \wedge \psi_2)$ iff $\psi_1, \psi_2 \in \mlabel_{ne}(m)$.

\item $e_i = \tup{\alpha_1 =_e \alpha_2}$: $\madd(\mlabel_{ne}(m), \tup{\alpha_1 =_e \alpha_2})$ iff there are $m_1, m_2$ s.t.\ $\alpha_1 \in \mlabel_{pe}(m,m_1)$, $\alpha_2 \in \mlabel_{pe}(m,m_2)$ and $m_1 \sim_e m_2$.

\item $e_i = \tup{\alpha_1 \not=_e \alpha_2}$: $\madd(\mlabel_{ne}(m),\tup{\alpha_1 \not=_e \alpha_2})$ iff there are $m_1, m_2$ s.t.\ $\alpha_1 \in \mlabel_{pe}(m,m_1)$, $\alpha_2 \in \mlabel_{pe}(m,m_2)$ and it is not the case that $m_1 \sim_e m_2$.

\end{enumerate}

As a result we obtain the following upper bound for $\hxpd$-satisfiability in $\ct$. 

\begin{corollary}
\label{cor:tree-decid}
The satisfiability problem for $\hxpd$ on trees is decidable in \expspace.
\end{corollary}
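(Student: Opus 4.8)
The plan is to assemble a decision procedure from three ingredients that are already in place: the explicit bounded model property for $\hxpd$ over $\ct$, the polynomial-time labelling algorithm displayed above, and Savitch's theorem.

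First I would pin down the exact shape and size of the small model. Given $\varphi$, if $\varphi$ is satisfiable in $\ct$ then, since $\ahxpd+\Pi_{forest^-}$ is sound over $\ct$ (the axioms of $\Pi_{forest^-}$ are valid on $\ctm\supseteq\ct$), $\varphi$ is consistent, hence contained in a named and pasted MCS $\Gamma$ of $\ahxpd+\Pi_{forest^-}$ (Lemma~\ref{lemma:lindenbaum}); by the Truth Lemma its extracted model $\model_\Gamma$ satisfies $\varphi$ at some $\Delta_i$, and by Proposition~\ref{prop:finmodel} the finite tree $\ftree{(\model_\Gamma)}{\md(\varphi)}{\nom(\varphi)\cup\set{i}}$ lies in $\ct$ and satisfies $\varphi$ at $s([\Delta_i])$. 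The point to read off from Definition~\ref{def:ftree} and Proposition~\ref{prop:itreepres} is the size bound: this tree has height $\md(\varphi)+1=O(|\varphi|)$, and at each level its width is at most the index of $\formequiv_{\sub(\varphi)}$, which is $2^{O(|\varphi|)}$ because $|\sub(\varphi)|=O(|\varphi|)$; hence it has $2^{O(|\varphi|)}$ nodes, and together with the data equivalence relation $\eqrel$ (a set of pairs of nodes) it can be written down using space exponential in $|\varphi|$. Conversely, any tree model of $\varphi$ trivially witnesses satisfiability in $\ct$. So $\varphi$ is satisfiable in $\ct$ iff it has a tree model with at most $2^{c|\varphi|}$ nodes, for a fixed constant $c$.

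Next I would give the nondeterministic procedure: on input $\varphi$, guess a finite tree model $\model$ over the signature of $\varphi$ with at most $2^{c|\varphi|}$ nodes, guess a distinguished node $m$, and run the labelling algorithm displayed above to decide whether $\model,m\models\varphi$, accepting iff it does. Correctness is exactly the equivalence just established. For the complexity, the labelling algorithm processes the polynomially many sub- and path-subexpressions of $\varphi$ in increasing size and, for each, updates $\mlabel_{ne}$ and $\mlabel_{pe}$ by a bounded number of passes over nodes and pairs of nodes; hence it runs in time polynomial in $|\model|+|\varphi|$, therefore in time and in particular in space $2^{O(|\varphi|)}$. The guessed model itself occupies space $2^{O(|\varphi|)}$. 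Thus $\hxpd$-satisfiability over $\ct$ is in \nexpspace, and by Savitch's theorem $\mathrm{NSPACE}(2^{p(n)})\subseteq\mathrm{DSPACE}(2^{2p(n)})\subseteq\expspace$, so \nexpspace${}={}$\expspace and the claimed bound follows.

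The only place that needs genuine care is the size accounting: one must make sure that the finite tree of Proposition~\ref{prop:finmodel} is merely \emph{exponential} in $|\varphi|$ — the key facts being that its height is \emph{linear} (it is $\md(\varphi)+1$) and that quotienting each level by $\formequiv_{\sub(\varphi)}$ caps the width by the exponential index of that relation, so the tree is exponential and not worse — and that the model-checking algorithm is polynomial in the \emph{combined} size of model and formula, so that running it on an exponential-size model keeps the whole computation within exponential space rather than blowing up. Once those two bounds are fixed, \nexpspace membership is immediate and Savitch finishes the argument; everything else is bookkeeping over lemmas already proved.
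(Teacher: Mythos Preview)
Your argument is correct and matches the paper's: the bounded finite tree model (Proposition~\ref{prop:finmodel}, Theorem~\ref{th:tree-bmp}) together with the polynomial-time labelling algorithm gives membership in \nexpspace, which equals \expspace\ by Savitch. One minor correction in the size accounting you flagged: since $\formequiv_{\sub(\varphi)}$ also requires nodes to share the same parent, it bounds the \emph{branching factor}, not the global width of a level, by $2^{|\sub(\varphi)|}$; hence the tree has at most $(2^{|\sub(\varphi)|})^{\md(\varphi)+1}=2^{O(|\varphi|^{2})}$ nodes rather than $2^{O(|\varphi|)}$ --- still single-exponential, so the \expspace\ bound is unaffected.
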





\section{Decidability via Filtrations}
\label{sec:filt}

In the previous section we already obtained decidability for $\hxpd$ on trees. 
In this section we introduce \emph{filtrations}, a notion which will 
help us to establish a bounded finite model property and an upper
bound for the complexity of satisfiability for
$\hxpv$\footnote{In~\cite{ArecesFS17} it has been proved that the
  satisfiability problem for $\hxpd$ with a single accessibility
  relation and a single data relation is \pspace-complete. However,
  for multiple relations the exact complexity is unknown.} over the
class of all abstract hybrid data models. Recall that $\hxpv$ is the 
extension of $\hxpd$ with inverse modalities.

In Section~\ref{sec:prelim} we introduced path expressions of the form
$\alpha\cup\beta$ as an abbreviation, but this definition potentially
leads to an exponential blow up in the size of the formula. In this
section we will consider $\cup$ as part of the language of $\hxpv$ to
avoid this blow up.

\begin{definition}\label{def:closed}
  Let $\Sigma$ be a set of formulas of $\hxpd$. We say that $\Sigma$
  is {\em closed} if:
  \begin{itemize}
  \item $\neg\varphi\in\Sigma$ implies $\varphi\in\Sigma$
  \item $\varphi\wedge\psi\in\Sigma$ implies $\varphi,\psi\in\Sigma$
  \item $\tup{\alpha*\beta}\in\Sigma$ implies
    $\tup{\alpha}\top,\tup{\beta}\top\in\Sigma$
  \item $\tup{[\varphi]}\top\in\Sigma$ implies $\varphi\in\Sigma$
  \item $\tup{@_i}\top\in\Sigma$ implies $i\in\Sigma$
  \item $\tup{\alpha\beta}\top\in\Sigma$ implies
    $\tup{\alpha}\top,\tup{\beta}\top\in\Sigma$
  \item $\tup{\alpha\cup\beta}\top\in\Sigma$ implies
    $\tup{\alpha}\top,\tup{\beta}\top\in\Sigma$.
  \end{itemize}
\end{definition}

\begin{definition}\label{def:equivfilt}
  Let $\Sigma$ be a closed set of formulas, and $\model$ be an
  abstract hybrid data model with $m,n$ states in $\model$. We define
  $m\fsigma n$ if and only if for all $\varphi\in\Sigma$
  ($\model,m\models\varphi$ iff $\model,n\models\varphi$).
\end{definition}

Now let us introduce filtration, the construction that will help us to
obtain a \emph{small model property} and give us decidability.

\begin{definition}[Filtrations]
\label{def:filtration}
Let $\model=\tup{M,\{\eqrel_e\}_{e\in\eq},\{R_{\dowa}\}_{\dowa\in\mo},\lbl,\no}$, be an abstract hybrid data model, and $\Sigma$ a closed set of formulas. A {\em filtration of $\model$ via $\Sigma$} is any 
$\model^f=\tup{M^f,\{\eqrel^f_e\}_{e\in\eq},\{R^f_{\dowa}\}_{\dowa\in\mo},\lbl^f,\no^f}$, such that
\begin{itemize}
		\item $M^f=M_{/\fsigma}$
		\item If $m R_{\dowa} n$ then $\cls{m} R^f_{\dowa} \cls{n}$
		\item If $\cls{m} R_{\dowa}^f \cls{n}$ then for all $\tup{\dowa}\top\in\Sigma$, 
		for all $m'\in\cls{m}$ there exists $n'\in\cls{n}$ such that $\model,m',n'\models\dowa$
		\item $\cls{m}\eqrel_e^f\cls{n}$ iff $m\eqrel_e n$
		\item $\no^f(i)=\cls{\no(i)}$
		\item $\lbl^f(\cls{m}) = \{p \mid p\in\lbl(m)\}$.
\end{itemize}
The \emph{smallest filtration} is a filtration that satisfies
$$
 		\cls{m} R_{\dowa}^f \cls{n} \text{ iff there exist } m'\in\cls{m},n'\in\cls{n}
 		\text{ such that } m' R_{\dowa} n'.
$$
\end{definition}


\begin{theorem}\label{th:filtration}
Let $\model$ be an abstract hybrid data model and $\model^f$ a filtration of $\model$ for a closed set $\Sigma$. Then
\begin{enumerate}
		\item For all $\varphi\in\Sigma$, for all $m\in\model$, $\model^f,\cls{m} \models\varphi$ 
		iff $\model,m\models\varphi$, and
		\item For all $\tup{\alpha}\top\in\Sigma$, for all $m,n\in\model$,
		$\model^f,\cls{m},\cls{n} \models\alpha$ iff for all $m'\in\cls{m}$ 
		there exists $n'\in\cls{n}$ such that $\model,m',n'\models\alpha$.
\end{enumerate}	
\end{theorem}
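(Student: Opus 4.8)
The plan is to prove both statements simultaneously by a mutual induction on the structure of the formula $\varphi \in \Sigma$ and the path expression $\alpha$ such that $\tup{\alpha}\top \in \Sigma$, exactly mirroring the proof of the classical filtration theorem for modal logic (see e.g.~\cite{blackburn2001modal}) but adapted to deal with path expressions and data comparisons. The closure conditions on $\Sigma$ in Definition~\ref{def:closed} are precisely what guarantees that every subexpression we need to invoke the inductive hypothesis on is again in $\Sigma$, so the induction is well-founded.

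First I would handle the base cases of part (1): for $p \in \lab$ this is immediate from the definition $\lbl^f(\cls{m}) = \lbl(m)$, and for a nominal $i$ it follows from $\no^f(i) = \cls{\no(i)}$ together with the fact that, since $i \in \Sigma$ whenever it matters (by the closure clause for $\tup{@_i}\top$ and the atomic case), $\fsigma$ separates states that satisfy $i$ from those that do not — hence $\cls{m} = \cls{\no(i)}$ iff $m \fsigma \no(i)$ iff $m = \no(i)$. The Boolean cases $\neg\varphi$ and $\varphi\wedge\psi$ are routine using the closure conditions and the inductive hypothesis. The interesting case of part (1) is $\tup{\alpha *_e \beta}$: here I would unfold the semantics, use that $\tup{\alpha}\top, \tup{\beta}\top \in \Sigma$ (closure), apply the inductive hypothesis (2) to both $\alpha$ and $\beta$, and then use the clause $\cls{m}\eqrel^f_e\cls{n}$ iff $m \eqrel_e n$ to transfer the data (in)equality between endpoints. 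One direction uses that the witnesses in $\model$ give witnesses in $\model^f$ via the "if $mR_{\dowa}n$ then $\cls m R^f_{\dowa}\cls n$" clause; the other direction uses the back-condition on $R^f_{\dowa}$ to pull a $\model^f$-path back to a $\model$-path starting from the given representative.

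For part (2), the base cases are $\alpha = \dowa$ (direct from the first two filtration clauses, with the forward clause giving one direction and the back clause the other), $\alpha = @_i$ (using $\no^f(i)=\cls{\no(i)}$ and, as above, that $\fsigma$ is injective on the $i$-labelled state since $i \in \Sigma$), and $\alpha = [\varphi]$ (using part (1) on $\varphi$, which lies in $\Sigma$ by closure). The inductive cases are $\alpha = \beta\gamma$ (compose the two inductive hypotheses, being careful that the intermediate state's class is well-defined — the "$\forall m' \exists n'$" shape of the statement is exactly what makes composition go through), and, since we have added $\cup$ to the language here, $\alpha = \beta \cup \gamma$, which reduces to the two disjuncts via the closure clause for $\tup{\beta\cup\gamma}\top$ and the semantics of $\cup$.

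The main obstacle — and the reason part (2) is phrased with the asymmetric "$\forall m' \in \cls m$, $\exists n' \in \cls n$" rather than a plain biconditional at a fixed pair of states — is precisely the composition case and the presence of inverse modalities in $\hxpv$: a filtrated path $\cls m R^f_{\dowa} \cls k R^f_{\dowa} \cls n$ need not lift to a single path through $\model$, because the $k$-representative witnessing the first step may differ from the one from which the second step is available. The chosen formulation sidesteps this by demanding the lifting property uniformly over all representatives of the source class, which is exactly what the back-condition on $R^f_{\dowa}$ in Definition~\ref{def:filtration} delivers step by step; I would make sure the inductive statement is applied with this uniform quantification at every stage so that the witnesses chain together correctly. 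Once both parts are established, the theorem follows, and in particular the smallest filtration yields a finite model (of size bounded by $2^{|\Sigma|}$) equivalent to $\model$ on all formulas in $\Sigma$.
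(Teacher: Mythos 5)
Your proposal is correct and follows essentially the same route as the paper: a simultaneous structural induction on node and path expressions, using the closure conditions of Definition~\ref{def:closed} to justify each appeal to the inductive hypothesis, the forward and back clauses on $R^f_{\dowa}$ for the atomic and composition cases, and the $\eqrel^f_e$ clause for the data comparisons. The one point you handle slightly more cleanly is the data-comparison case, where fixing a single representative $m'\in\cls m$ and invoking the $\forall\exists$ form of part (2) for both $\alpha$ and $\beta$ avoids the witness mismatch that the paper resolves by a short contradiction argument.
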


\begin{proof} 
  By structural induction. First we prove $1$.  The base and Boolean
  cases are simple.  Let us consider
  $\tup{\alpha=_e\beta}\in\Sigma$.
  $\model^f,\cls m\models\tup{\alpha=_e\beta}$ iff there are
  $\cls n,\cls u$ such that
		\begin{enumerate}[(i)]
			\item $\model^f,\cls m,\cls n\models\alpha$ iff (by IH and $\tup\alpha\top\in\Sigma$)
			there exist $m'\in\cls m,n'\in \cls n$ such that $\model,m', n'\models\alpha$,
			\item $\model^f,\cls m,\cls u\models\beta$ iff (by IH and $\tup\beta\top\in\Sigma$)
			there exist $m''\in\cls m,u'\in \cls u$ such that
			 $\model,m'',u' \models\beta$, and
			\item $\cls n\eqrel_e^f \cls u$ iff (definition of $\model^f$) $n\eqrel_e u$. 
			Since $\cls n=\cls n'$ and $\cls u=\cls u'$ we have $n'\eqrel_e u'$.
		\end{enumerate}
	Notice that the problem is that $m'$ might be different from $m''$. Suppose for contradiction that 
	$\model,m',u'\not\models\beta$. Since $\tup{\beta}\top\in\Sigma$, $m'\in\cls m$, and $u'\in\cls u$, by using $(ii)$ above we get 
	$\model^f,\cls m,\cls u\not\models\beta$, which is a contradiction.
	Therefore, $\model,m'\models\tup{\alpha=_e\beta}$.
		The case for $\tup{\alpha\neq_e\beta}$ is similar.
\medskip

\noindent
Now let us prove $2$:

\noindent
- Case $\tup{[\varphi]}\top\in\Sigma$. For the left to the right
direction, suppose $\model^f,\cls m, \cls n \models [\varphi]$, iff
$\cls m=\cls n$ and $\model^f,\cls m \models \varphi$. By $1$ we have
$\model,m\models\varphi$, then by definition of $\cls m$,
$\model,m'\models\varphi$, for all $m'\in\cls m$. Hence
$\model,m',m'\models[\varphi]$ as wanted, since $m'\in\cls n$.

For the other direction, suppose that for all $m'\in\cls m$ there
exists $n'\in\cls n$ s.t. $\model,m',n'\models[\varphi]$.  Then
$m'=n'$ and $\model,m'\models\varphi$.  By $1$,
$\model^f,\cls{m'}\models\varphi$, and since $\cls{m}=\cls{m'}$,
$\model^f,\cls m\models\varphi$. Therefore
$\model^f,\cls m,\cls n\models[\varphi]$, because $n'=m'$ implies
$\cls{n}=\cls{n'}=\cls{m'}=\cls{m}$.
\smallskip

\noindent
- Case $\tup{@_i}\top\in\Sigma$.  For the left to the right direction
suppose $\model^f,\cls m, \cls n\models @_i$, iff $\no^f(i)=\cls n$.
By definition of $\no^f$, we have $\no^f(i)=\cls{\no(i)}$, then
$\no(i)=n$. Hence $\model,m,n\models @_i$.

For the other direction, suppose that for all $m'\in\cls m$ there
exists $n'\in\cls n$ s.t. $\model,m',n'\models @_i$, iff
$\no(i)=n'$. Then $\cls{n'}=\set{n'}=\cls{n}$, therefore
$\model^f,\cls m,\cls n\models @_i$.
\smallskip

\noindent
- Case $\tup{\dowa}\top\in\Sigma$. For the left to the right
direction suppose $\model^f,\cls m, \cls n \models\dowa$, iff
$\cls m R_{\dowa}^f \cls n$. Then, since $\tup{\dowa}\top\in\Sigma$, for
all $m'\in \cls m$ there exists $n'\in\cls n$ s.t.
$\model,m',n'\models\dowa$ (by the second condition on the definition
of $R_{\dowa}^f$), as wanted.

For the right to the left direction, suppose for all $m'\in \cls m$
there exists $n'\in\cls n$ s.t. $\model,m',n'\models\dowa$. Then
$m' R_{\dowa} n'$, and by the first condition of the definition of
$R_{\dowa}^f$ we have $\cls{m'} R_{\dowa}^f \cls{n}$, iff
$\cls{m} R_{\dowa}^f \cls{n}$.  Hence,
$\model^f,\cls{m},\cls{n}\models\dowa$.
\smallskip

\noindent
- Case $\tup{\alpha\beta}\top\in\Sigma$. For the left to the right
direction suppose $\model^f,\cls m,\cls n\models\alpha\beta$, iff
there exists $\cls z$ such that $\model^f,\cls m,\cls z\models\beta$
and $\model^f,\cls z,\cls n\models\alpha\beta$. By IH, for all
$m'\in \cls m$ there exists $z' \in \cls z$ s.t.
$\model,m',z'\models\alpha$ and for all $z'\in\cls z$ there exists
$n'\in\cls n$ s.t. $\model,z',n'\models\beta$.  This implies that for
all $m'\in\cls m$ there exists $n'\in\cls{n}$
s.t. $\model,m',n'\models\alpha\beta$.

For the other direction, suppose for all $m'\in\cls m$ there exists
$n'\in\cls{n}$ s.t. $\model,m',n'\models\alpha\beta$. Then there
exists $z'$ s.t. $\model,m',z'\models\alpha$ and
$\model,z',n'\models\beta$.  By IH we have
$\model^f,\cls m,\cls{z'}\models\alpha$ and
$\model^f,\cls{z'},\cls{n}\models\beta$.  Hence
$\model^f,\cls{m},\cls{n}\models\alpha\beta$.
\smallskip

\noindent
- Case $\tup{\alpha\cup\beta}\top\in\Sigma$. For the left to the right
direction suppose $\model^f,\cls m,\cls n\models\alpha\cup\beta$, iff
$\model^f,\cls m,\cls n\models\alpha$ or
$\model^f,\cls m,\cls n\models\beta$. By IH, for all $m'\in \cls m$
there exists $n' \in \cls n$ s.t.  $\model,m',n'\models\alpha$ or
$\model,m',n'\models\beta$.  This implies that for all $m'\in\cls m$
there exists $n'\in\cls{n}$ s.t. $\model,m',n'\models\alpha\cup\beta$.

For the other direction, suppose for all $m'\in\cls m$ there exists
$n'\in\cls{n}$ s.t. $\model,m',n'\models\alpha\cup\beta$, iff
$\model,m',n'\models\alpha$ or $\model,m',n'\models\beta$.  By IH we
have $\model^f,\cls m,\cls{n}\models\alpha$ or
$\model^f,\cls{m},\cls{n}\models\beta$.  Hence
$\model^f,\cls{m},\cls{n}\models\alpha\cup\beta$.
\end{proof}

As it happens for many modal logics the previous result gives us a
bounded model property for satisfiability: if a formula $\varphi$ is
satisfiable it is satisfiable in a model of size at most exponential
in the size of $\varphi$. Moreover, filtrations preserve some
structural properties of the accessibility relations. In particular
the following is true:

\begin{proposition}
\label{prop:preservation}
Let $\model=\tup{M,\{\eqrel_e\}_{e\in\eq},\{\ra_a\}_{a\in\mo},\lbl,\no}$ be a 
model, and let $\model^f$ be the smallest filtration of $\model$.
Then, $\dowam=(\dowa)^{-1}$ implies $(\dowam)^f=(\dowa^f)^{-1}$
\end{proposition}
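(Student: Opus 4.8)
The plan is to unfold the definition of the \emph{smallest} filtration on both modalities and observe that the model equation $R_{\dowam} = (R_{\dowa})^{-1}$ transfers verbatim to the quotient. Recall that, by Definition~\ref{def:filtration}, the smallest filtration is characterized for every modal symbol $\mathsf{m}\in\mo$ by
$$
\cls{m}\, R^f_{\mathsf{m}}\, \cls{n} \quad\text{iff}\quad \text{there exist } m'\in\cls{m},\ n'\in\cls{n} \text{ such that } m'\, R_{\mathsf{m}}\, n'.
$$
So first I would instantiate this clause with $\mathsf{m}=\dowam$, obtaining that $\cls{m}\,R^f_{\dowam}\,\cls{n}$ holds iff there are $m'\in\cls{m}$ and $n'\in\cls{n}$ with $m'\,R_{\dowam}\,n'$.

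Next I would invoke the hypothesis $R_{\dowam}=(R_{\dowa})^{-1}$ to rewrite $m'\,R_{\dowam}\,n'$ as $n'\,R_{\dowa}\,m'$. The resulting statement — there exist $m'\in\cls{m}$, $n'\in\cls{n}$ with $n'\,R_{\dowa}\,m'$ — is, after merely renaming the bound witnesses, exactly the defining condition of the smallest-filtration clause applied to $\dowa$ with the two classes interchanged, i.e.\ it says $\cls{n}\,R^f_{\dowa}\,\cls{m}$. Hence $\cls{m}\,R^f_{\dowam}\,\cls{n}$ iff $\cls{n}\,R^f_{\dowa}\,\cls{m}$, which is precisely $\cls{m}\,(R^f_{\dowa})^{-1}\,\cls{n}$. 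Since $\cls{m},\cls{n}\in M^f$ were arbitrary, this yields $(\dowam)^f=(\dowa^f)^{-1}$.

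I do not expect any genuine obstacle here: the argument is a two-step unfolding of definitions, and the only point requiring a moment's care is that the smallest-filtration clause is symmetric in its treatment of the modalities (both $\dowa$ and $\dowam$ lie in $\mo$), so that the same biconditional may legitimately be applied to each. It is worth remarking, however, why the statement is confined to the \emph{smallest} filtration: an arbitrary filtration only satisfies the inclusion ``$m\,R_{\dowa}\,n$ implies $\cls{m}\,R^f_{\dowa}\,\cls{n}$'' together with the one-directional back condition of Definition~\ref{def:filtration}, and these two clauses alone do not force $(\dowam)^f$ and $(\dowa^f)^{-1}$ to coincide; the ``iff'' in the definition of the smallest filtration is exactly what makes the inverse condition survive the quotient.
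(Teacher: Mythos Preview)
Your proposal is correct and follows essentially the same approach as the paper: unfold the smallest-filtration clause, apply the hypothesis $R_{\dowam}=(R_{\dowa})^{-1}$, and read off the desired equality on the quotient. Your presentation is in fact slightly tidier than the paper's, which only spells out one implication (leaving the converse to symmetry), whereas you run the full biconditional chain; the added remark on why the \emph{smallest} filtration is required is also apt.
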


\begin{proof}
Suppose that $\dowam=(\dowa)^{-1}$.
We want to prove that $(\dowam)^f=(\dowa^f)^{-1}$. Then suppose
$\cls{m}R_{\dowam}^f\cls{n}$. Since $f$ is the smallest filtration, 
there exist $m'\in\cls{m},n'\in\cls{n}$ such that $m' R_{\dowam} n'$.
But by hypothesis, we have $n' R_{\dowa} m'$, and by definition
of filtration we get $\cls{n'} R_{\dowa}^f\cls{m'}$. Since $\cls{m'}=\cls{m}$
and $\cls{n'}=\cls{n}$, we obtain $\cls{n} R_{\dowa}^f\cls{m}$.
\end{proof}

As a result we obtain an upper bound for the complexity of 
satisfiability.

\begin{corollary}
The satisfiability problem of $\hxpv$ is in \nexptime.	
\end{corollary}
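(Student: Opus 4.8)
The plan is to leverage the filtration machinery of this section to obtain an exponential-size model property for $\hxpv$, and then to combine it with polynomial-time model checking to get the stated bound. Suppose an $\hxpv$-formula $\varphi$ is satisfiable, say $\model,m\models\varphi$ for an abstract hybrid data model $\model$ in which each relation $R_{\dowam}$ is interpreted as the converse of the corresponding $R_{\dowa}$. First I would let $\Sigma$ be the closure, in the sense of Definition~\ref{def:closed}, of the set of subformulas of $\varphi$; since $\cup$ is now treated as a primitive path constructor, $|\Sigma|$ is linear in $|\varphi|$. Taking $\model^f$ to be the \emph{smallest} filtration of $\model$ via $\Sigma$, Theorem~\ref{th:filtration} gives $\model^f,\cls{m}\models\varphi$ (as $\varphi\in\Sigma$), while by construction $|M^f|=|M_{/\fsigma}|\le 2^{|\Sigma|}$, so $\model^f$ has at most exponentially many states.

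The crucial point is that $\model^f$ is again a genuine $\hxpv$-model: the equivalence-relation requirements on the $\eqrel^f_e$ are immediate from the filtration clause $\cls{m}\eqrel^f_e\cls{n}$ iff $m\eqrel_e n$, and the requirement that each $R_{\dowam}$ be interpreted as the converse of $R_{\dowa}$ is preserved precisely because $\model^f$ is the smallest filtration — this is Proposition~\ref{prop:preservation}. Together with the trivial converse (any formula true at a point of a finite $\hxpv$-model is satisfiable in $\hxpv$), this establishes a bounded model property: $\varphi$ is satisfiable iff it is satisfiable in an $\hxpv$-model with at most $2^{O(|\varphi|)}$ states.

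It then remains to package this into a decision procedure. Given $\varphi$, nondeterministically guess a structure $\model'$ with domain of size at most $2^{O(|\varphi|)}$, together with its accessibility relations, its equivalence relations, the valuation restricted to the propositional symbols occurring in $\varphi$, the naming function restricted to the nominals occurring in $\varphi$, and a point $m$ of $\model'$. One verifies in time polynomial in $|\model'|$ (hence exponential in $|\varphi|$) that $\model'$ satisfies the $\hxpv$ frame conditions, and then runs the polynomial-time model-checking algorithm for $\hxpd$ presented in Section~\ref{sec:tree} (extended in the obvious way to the clause for $\cup$) to decide whether $\model',m\models\varphi$. All steps take time polynomial in the exponential model size, so the procedure runs in \nexptime, and by the bounded model property it accepts exactly the satisfiable formulas.

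I expect the only delicate point to be the preservation step: filtrations need not in general respect structural properties of the accessibility relations, so it is essential that the smallest filtration keeps each $R_{\dowam}$ the converse of $R_{\dowa}$, which is exactly Proposition~\ref{prop:preservation}; were this to fail, the filtered model would not be an $\hxpv$-model and the argument would collapse. The remaining ingredients — the exponential bound on $M^f$, the polynomial model-checking algorithm, and their assembly into a \nexptime\ procedure — are routine.
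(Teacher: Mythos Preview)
Your proposal is correct and follows essentially the same approach as the paper: the paper derives the corollary directly from the bounded model property given by Theorem~\ref{th:filtration} together with Proposition~\ref{prop:preservation} (preservation of the inverse relation under the smallest filtration), implicitly invoking a guess-and-check procedure of exactly the kind you spell out. Your write-up merely makes explicit the details the paper leaves to the reader, including the size bound on $\Sigma$, the verification of the $\hxpv$ frame conditions on the filtered model, and the appeal to the polynomial-time model-checking algorithm.
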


On the other hand, if we consider sibling navigation the argument does
not work. In the figure below we represent a relation $R_{\dowa}$ by black
arrows, and the sibling relation $R_{\sib}$ by dotted arrows.  Let
$\Sigma=\set{\tup{\dowa}\top}$. The model on the right is the
filtration by $\Sigma$ of the model on the left. As we can see, any
filtration collapses $v$ and $u$ into a unique equivalence class with
a reflexive arrow, but the sibling relation must be
irreflexive. Hence the notion of filtration we introduced
is not appropriate to handle siblings.

\begin{center}
\begin{tabular}{cc}
\begin{minipage}{.4\linewidth}
\centering
\begin{tikzpicture}[>=stealth']
\tikzstyle myBG=[line width=3pt,opacity=1.0]
\newcommand{\drawBGT}[2]
{
    \draw[ar,->,white,myBG]  (#1) -- (#2);
    \draw[ar,->,black,very thick] (#1) -- (#2);
}

\node () at (3,1) {$\model$};
\node (h1) at (4,1) [blackworld,label=above:$w$] {} ;
\node (h2) at (3,-1) [world,label=below:$v$] {} ;
\node (h3) at (5,-1) [world,label=below:$u$] {} ;
\drawBGT{h1}{h2};
\drawBGT{h1}{h3};
\path [thin,ar,<->] (h2) edge [dotted, bend left] (h3);
\end{tikzpicture}
\end{minipage}
&
\begin{minipage}{.4\linewidth}
\centering
\begin{tikzpicture}[>=stealth']
\tikzstyle myBG=[line width=3pt,opacity=1.0]
\newcommand{\drawBGT}[2]
{
    \draw[ar,->,white,myBG]  (#1) -- (#2);
    \draw[ar,->,black,very thick] (#1) -- (#2);
}
\node () at (3,1) {$\model^f$};

\node (h1) at (4,1) [blackworld,label=above:$\set{w}$] {} ;
\node (h2) at (4,-1) [world,label=below:$\set{v,u}$] {} ;
\drawBGT{h1}{h2};
\path [thin,ar,->] (h2) edge [dotted, loop right,] (h2);
\end{tikzpicture}
\end{minipage}
\end{tabular}
\end{center}

Finally, we show that filtration does not work for proving that the satisfiability
problem for $\hxpd$ over $\ct$ is decidable. This is why, in order to show Corollary~\ref{cor:tree-decid}
we used a more specialized method. Consider again $\Sigma=\set{\tup{\dowa}\top}$. 
The model on the right is the filtration by $\Sigma$ of the model on the left
(assuming all nodes are equal for the data relation).
Any filtration collapses all nodes from $\model$ into a single reflexive node.
Clearly, $\model^f\notin\ct$.

\begin{center}
  \begin{tabular}{cc}
    \begin{minipage}{.5\linewidth}
      \begin{tikzpicture}[>=stealth']
      \tikzstyle myBG=[line width=3pt,opacity=1.0]
      \newcommand{\drawBGT}[2]
      {
          \draw[ar,->,white,myBG]  (#1) -- (#2);
          \draw[ar,->,black,very thick] (#1) -- (#2);
      }

      \node () at (0,0) {$\model$};
      \node (h1) at (4,0) [world,label=right:$\ldots$] {} ;
      \node (h2) at (1,0) [blackworld,label=above:$w$] {} ;
      \node (h3) at (2,0) [world] {} ;
      \node (h4) at (3,0) [world] {} ;
      \drawBGT{h4}{h1};
      \drawBGT{h2}{h3};
      \drawBGT{h3}{h4};
      \end{tikzpicture}
    \end{minipage}
    &
    \begin{minipage}{.2\linewidth}
      \begin{tikzpicture}[>=stealth']
        \tikzstyle myBG=[line width=3pt,opacity=1.0]
        \node () at (3,0) {$\model^f$};
        \node (h2) at (4,0) [blackworld,label=above:\text{[$w$]}] {} ;
        \path [ar,->,black,very thick] (h2) edge [loop right] (h2);
      \end{tikzpicture}
      \end{minipage}
  \end{tabular}
\end{center}


\section{Discussion}
\label{sec:final}
We introduced a sound and strongly complete axiomatization for
$\hxpd$, i.e., the language XPath with forward navigation for multiple
accessibility relations; multiple equality/ inequality data
comparisons; and where node expressions are extended with nominals,
and path expressions are extended with the hybrid operator $@$.  The
{\em hybridization} of XPath allowed us to apply a completeness
argument similar to the one used for the hybrid logic $\hl$ shown in,
e.g.,~\cite{BtC06}. This ensures that certain extensions of the
axiomatic system we introduce are also strongly complete. The
axiomatic systems that can be obtained in this way cover a large
family of hybrid XPath languages over different classes of frames.

Our system extends the calculus introduced previously
in~\cite{ArecesF16}.  The most important improvement is that we
provide a minimal system which can be extended with axioms and rules
of certain kind, such that strong completeness immediately follows.
The kind of axioms and rules we allow in the extensions ensures
completeness with respect to a large family of frame classes. We
showed interesting examples of such extensions.  In particular, we
obtain a strongly complete axiomatization for the logic extended with
backward and sibling navigation, and for data equality inclusion.

One particularly interesting class of structures used in practice is the class
of \emph{tree models}, since the main applications of \xpath in, e.g.,
web semantics are related to XML documents. From a mathematical point
of view, XML documents are trees. In this respect, we investigate
axiomatizations for $\hxpd$ on tree models.
First, we showed that there is no 
finitary first-order axiomatization which is strongly complete for
$\hxpd$ on trees.  Then we discussed two alternatives for dealing with
tree-like structures. On the one hand, it is possible to relax some
condition on the models. We consider the class $\ctm$, which are
forest-like models but possibly with infinite chains with respect to
the predecessor relation. More precisely, we extend the basic axiom
system with an infinite set of pure axioms to strongly axiomatize the
class $\ctm$.  Another alternative is giving up \emph{strong}
completeness. We showed weak completeness for the class $\ct$, using
pure axioms.

We also investigated the status of decidability for the satisfiability
problem of $\hxpd$ and some extensions. We used a standard technique
in modal logics named \emph{filtrations}~\cite{blackburn06}.  The
filtration method is a way to build finite models by taking a large
model and collapsing as many states as possible. We replicate this
technique for $\hxpv$, and obtain a \nexptime upper bound for its
satisfiability problem.  We showed that filtrations do not work on
some extensions of $\hxpd$.  In particular, for the case of tree
models we proved that the satisfiability problem is decidable in \expspace 
using a more specialized approach.

As future work, it would be interesting to investigate $\xpathd$
fragments with the reflexive-transitive path $\dowa^*$. One of the
main limitations of the framework we introduced in this paper is that
it can only axiomatize first-order languages. For that reason,
transitive closure operators cannot be accounted for, and a different
proof strategy is needed.  We conjecture that it is possible to adapt
the results from~\cite{HKT00} for Propositional Dynamic Logic (PDL) to
obtain a weakly complete axiom system for $\hxprt$ (i.e., $\hxpd$
extended with $\dowa^*$) over the class of all models. However, for
axiomatizing $\hxprt$ over trees, the filtration technique used
in~\cite{HKT00} does not seem to work, and new developments are needed, 
as for other logics with fix point operators over tree-like structures
(see, e.g., the case of ${\rm CTL}^*$ in~\cite{Rey01}).

The exact computational complexity of the logics we considered has not
been investigated yet. It has been proved that $\xpathd$ with single
accessibility and data relations extended with hybrid operators is
\pspace-complete~\cite{ArecesFS17}. Moreover, we provided upper bounds for
multi-modal $\hxpv$ over arbitrary models and for $\hxpd$ on trees, but
without giving a tight lower bound. It would be also interesting to
investigate the complexity of the multi-modal languages we studied in
this article, enriched with backward and sibling navigation,
reflexive-transitive closures, and over the class of trees. We
conjecture that we can adapt the automata proof given
in~\cite{FigPhD}, with the method used to account for hybrid operators
presented in~\cite{SattlerVardi-IJCAR} to get exact bounds.


\paragraph{\bf Acknowledgments}
We would like to thank the two anonymous reviewers for their helpful comments and
suggestions.
This work is supported by projects ANPCyT-PICTs-2017-1130,
Stic-AmSud 20-STIC-03 `DyLo-MPC', Secyt-UNC, GRFT Mincyt-Cba,
and
by the Laboratoire International Associ\'e SINFIN.

\bibliography{bib}
\bibliographystyle{alpha}

\end{document}